\icmltitlerunning{Prediction-aware Learning in Multi-agent Systems}
\begin{document}
\twocolumn[\icmltitle{Prediction-aware Learning in Multi-agent Systems}


\icmlsetsymbol{equal}{*}

\begin{icmlauthorlist}
\icmlauthor{Aymeric Capitaine}{polytechnique}
\icmlauthor{Etienne Boursier}{inriaorsay}
\icmlauthor{Eric Moulines}{polytechnique}
\icmlauthor{Michael I. Jordan}{inriaparis}
\icmlauthor{Alain Durmus}{polytechnique}
\end{icmlauthorlist}

\icmlaffiliation{polytechnique}{Centre de Mathématiques Appliquées – CNRS – École polytechnique – Palaiseau, 91120, France}
\icmlaffiliation{inriaorsay}{Inria Saclay, Université Paris Saclay, LMO - Orsay, 91400, France}
\icmlaffiliation{inriaparis}{Inria Paris, Ecole Normale Supérieure, PSL Research University - Paris, 75, France}

\icmlcorrespondingauthor{Aymeric Capitaine}{firstname.lastname@polytechnique.edu}

\icmlkeywords{Machine Learning, ICML}

\vskip 0.3in
]



\pagestyle{fancy}
\fancyhead{}
\fancyhead[C]{\small\bfseries Prediction-aware Learning in Multi-agent Systems}

\printAffiliationsAndNotice{} 

\begin{abstract}
The framework of uncoupled online learning in multiplayer games has made significant progress in recent years. In particular, the development of  \textit{time-varying games} has considerably expanded its modeling capabilities. However, current regret bounds quickly become vacuous when the game undergoes significant variations over time, even when these variations are easy to predict. Intuitively, the ability of players to forecast future payoffs should lead to tighter guarantees, yet existing approaches fail to incorporate this aspect. This work aims to fill this gap by introducing a novel \textit{prediction-aware} framework for time-varying games, where agents can forecast future payoffs and adapt their strategies accordingly. In this framework, payoffs depend on an underlying state of nature that agents predict in an online manner. To leverage these predictions, we propose the \texttt{POMWU} algorithm, a contextual extension of the optimistic Multiplicative Weight Update algorithm, for which we establish theoretical guarantees on social welfare and convergence to equilibrium. Our results demonstrate that, under bounded prediction errors, the proposed framework achieves performance comparable to the static setting. Finally, we empirically demonstrate the effectiveness of \texttt{POMWU} in a traffic routing experiment. 
\end{abstract}

\section{Introduction.}The framework of uncoupled online learning in multiplayer games has sparked a lot of interest for its ability to realistically model the interactions of rational players engaged in a dynamic game. Since the seminal works of \citet{foster1997, freund1999, hart2000}, progress has been made towards obtaining fast convergence rates for different equilibrium concepts, including coarse correlated equilibrium \citep{syrgkanis2015, foster2016, daskalakis2021, piliouras2022, farina2022clairvoyant} correlated equilibrium \citep{chen2020hedging, anagnostides2022near, anagnostides2022uncoupled, peng2023fastswapregretminimization} and Nash equilibrium \citep{anagnostides2022optimisticmirrordescentconverges}. However, most of these works assume that the game remains constant over time. 

Only recent studies have begun to consider time-varying games, in both two-player zero-sum games \citep{zhang2022no} and  multiplayer general-sum games \citep{duvocelle2018learning, anagnostides2024convergence}. Following methods initially developed by the online optimization community \citep{chiang2012online, rakhlin2013online}, these studies bound the dynamic regret incurred by players with measures of the \textit{time-variation} of the underlying game. While this approach looks satisfactory at first glance, it is not hard to come up with simple examples for which the variation is important--making the above mentioned bounds vacuous--yet very simple to predict. In \Cref{example:variation}, we exhibit a simple instance of time-varying game where the regret bounds derived in \citet{zhang2022no} grow linearly with the horizon $T>0$. However, the dynamic underlying the payoff matrices is entirely deterministic, and knowing it would result in a constant regret.
This highlights that the current time-varying framework fails to account for any \textit{predictive capacity} of the agents.  This is all the more surprising as predictive models become ubiquitous in numerous economic sectors \citep{jordan15review,papadimitriou21, jordan2024}, making it likely for strategic agents to possess a forecasting ability regarding their future payoffs. This work aims to fill the gap, by asking the following question:
\begin{quote}\label{question1}
    How does the quality of predictions made by rational agents in time-varying games regarding their future payoffs affects social welfare, as well as the convergence to equilibrium ? 
\end{quote}

\paragraph{Contributions.}
We address this question with the following contributions. \begin{itemize}
    \item First, we introduce the new \textit{prediction-aware} learning framework, where players forecast  future payoffs in an online fashion and design their strategies accordingly. In a nutshell, we build on the contextual setting proposed by \citet{sessa2021} by introducing an underlying state of nature, either adversarially or stochastically drawn, which determines the payoff of all agents. They play a time-varying game which can be decomposed into three stages. First, each player forecasts the current state of nature based on their local predictor before picking an action in the game. Then, they observe their payoff and the actual state of nature. Finally, they update their policy and predictor based on these new observations. Augmenting uncoupled learning in games with contexts and predictions requires to introduce new regret and equilibrium concepts. In particular, we extend  correlated equilibrium \citep{aumann1987correlated} to our framework. 
    \item Second, we propose an algorithm called \texttt{POMWU}---which a contextual extension of the optimistic Multiplicative Weight Update algorithm \citep{daskalakis2021}---allowing players to leverage their prediction about the state of nature. In particular, we show that if all players use \texttt{POMWU}, we match the results of \citet{syrgkanis2015} established for static games, regarding social welfare (\Cref{corollary:convergencewelfare}), equilibrium convergence (\Cref{corollary:convergenceeq}) and robustness in the adversarial setting (\Cref{proposition:robustness}) up to a factor that depends polynomially on the number of prediction errors by players. Thus, when predictions errors are bounded by a constant \citep[which is the case under realizability, see][]{daniely2014}, our bounds match the  guarantees on social welfare and equilibrium convergence for static games. 
    Our analysis builds upon a new notion of \textit{contextual} Regret bounded by Variation Utility (RVU) which bounds contextual regret by the sum of the length of the context-specific sequences of feedbacks and strategies. Indeed, a naive application of the standard RVU framework results in looser bounds.
\end{itemize}
\paragraph{Additional related works.}The problem tackled in this work relates with several lines of research in game theory and online optimization. On the one hand, the contextual optimization literature \citep{donti2019, elmachtoub2020, bennouna2024} has considered the problem of minimizing an objective function defined by an unobserved random context, which the optimizer can predict via a regression function. This idea has also been studied in the contextual bandit framework \citep{lattimore2020bandit} with noisy contexts \citep{kirschner2019stochasticbanditscontextdistributions, yang2021bandit, nelson2022linearizing, guo2024online}. However, none of these works consider the multi-agent setting, where the optimizer interacts with other agents during the learning process. On the other hand, recent studies in game theory have incorporated the idea of an underlying state of nature jointly determining the payoffs of players. While \citet{sessa2021, maddux2024} studies the contextual version of uncoupled learning in multiplayer games, \citet{lauffer2023no, harris2024regret} focuses on Stackelberg games with side information. However, these works assume that the context is revealed to players at the beginning of each period, unlike ours where players have to predict the context before moving. In the end, the \textit{social learning} framework might be the one that relates the most to ours. Pioneered by the work of \citet{banerjee1992simple, bikhchandani1992theory, smith2000pathological}, it features agents receiving private signals about a true, unobserved state of nature. These agents are able to learn from both their signal and the actions played by other players, which reflect their signals \citet{chamley2004rational}. Most of the social learning literature has been devoted to analyzing the resulting collective behaviors, such as cascading and herding phenomena \citep{mossel2020social}. While recent studies have broadened the analytical toolbox of social learning by considering for instance time-varying states of nature \citep{ frongillo2011,boursier22a, levy2024stationary}, it mostly relies on very strong assumptions \citep[e.g. a binary state and binary actions,][]{mossel2020social} and a Bayesian modeling where all agents share a common prior about the state of nature's distribution. In contrast, we believe that the uncoupled learning framework \citep{hart2000simple, hart2003uncoupled, daskalakis2011near} upon which our work relies is a more general setting for studying this question, and allows to study more natural equilibrium concepts such as correlated equilibria \citep{aumann1987correlated} with stronger guarantees.
\paragraph{Organization.}This work is organized as follows. In \Cref{section:model}, we present our model, notion of regret and main assumptions. In \Cref{section:predictionaware}, we introduce the \texttt{POMWU} algorithm and establish the convergence of social welfare and individual utilities. In \Cref{section:experiments}, we empirically demonstrate the performance of \texttt{POMWU} on the Sioux Falls routing problem \citep{leblanc1975efficient}. 

\begin{example}\label{example:variation}
    Consider the two-players setting in \cite{zhang2022no} where $\cX\in\R^n$ and $\cY\in\R^m$ are respectively the strategy spaces of player $x$ and $y$,   $A_t\in[-1,1]^{n\times m}$ is their time-varying payoff matrix and $\cE _t \subset \cX\times\cY$ is the set of Nash equilibria at time $t\in\timesteps$. The two measures of variations  considered in (\citealt{zhang2022no}, and up to  minor modifications \citealt{anagnostides2024convergence}) are
$$
P_T = \min_{\cE_1 \times\ldots\times\cE_T}\sum_{t\in\timesteps}\parenthese{\normone{x^\star _t - x^\star _{t-1}}+\normone{y^\star _t -y^\star _{t-1}}}\eqsp,$$
and
$$
V_T = \sum_{t\in\timesteps}\norminf{A_t - A_{t-1}}^2 \eqsp,
$$which are respectively the variation of Nash equilibria and the variation of payoff matrices.  \citet[][Theorem 6]{zhang2022no} show that the dynamic regret can be bounded by
\begin{equation}
    \label{example1:boundzhang}
    \widetilde{\bigO} (\min(\sqrt{(1+P_T)(1+V_T)}+P_T, 1+ W_T))\eqsp,
\end{equation}
where $W_T = \sum_{t\in\timesteps}\norminf{A_t - T^{-1}\sum_{\tau\in\timesteps}A_{\tau}}=\Omega(V_T)$. On the other hand, if we consider for any $t\in [T]$, $A_t = B + \parenthese{-1}^t C$ where
$$
B=\frac{1}{2}\begin{pmatrix}
    1 & 1 \\
    1 & 1
\end{pmatrix},\; C =\frac{1}{2}\begin{pmatrix}
    1 & 1 \\
    -1 & -1
\end{pmatrix} \eqsp,
$$
it is not hard to check that $$\cE_t = \begin{cases}
\defEnsLigne{(1,0),\,(\frac{1}{2},\frac{1}{2})} &\text{if $t$ is even}\\
\defEnsLigne{(0,1), (\frac{1}{2}, \frac{1}{2})}&\text{otherwise}
\end{cases}\eqsp.$$ This implies that $P_T =2T$. Likewise, one can verify that $V_T = T$, so the bound in \eqref{example1:boundzhang} grows linearly with $T$. At the same time, we remark that $Y_t = -Y_{t-1}$ with $Y_t = A_{t}-A_{t-1}$. This shows that $(A_t)_{t\in\timesteps}$ is a deterministic process (more precisely, a deterministic \emph{ARIMA(1,1,0)} process \citep{hamilton2020time}). 
\end{example}
\section{Model.}\label{section:model}
\paragraph{Notation.}In what follows, we denote the $\ell$-th coordinate of any vector $x\in\R^d$ by $x[\ell]\in\R$. Likewise, the $\ell$-th row of any matrix $\bX \in\R^{d\times K}$ is denoted by $\bX[\ell]\in\R^K$. For any vectors $(x, y)\in\R^d \times \R^d$, we write $\ps{x}{y}=x^{\top}y$ the standard euclidian inner product and $x . y = (x[1] \,y[1] , \ldots , x[d]\, y[d])^{\top}$ the Hadamard product. $\mathscr{P} (\cA)$ denotes the set of probability measures over a measurable space $\cA$, and $\Delta_{K}=\defEnsLigne{w\in\R^{K}:\:\forall \ell \in [K],\,w^j [\ell] \geq 0\;\text{and}\; \sum_{\ell=1}^{K}w^j [\ell] = 1}$ the simplex of dimension $K>0$. When $\cA = \cA^1 \times \ldots\times \cA ^J$ is the product of $J>0$ spaces, we write $\cA^{-j}=\cA^1 \times \ldots \times \cA ^{j-1}\times \cA^{j+1}\times \ldots \cA^J$ for any $j\in\agents$, so $\cA = \cA^j \times \cA^{-j}$. For any $\bw\in\mathscr{P}(\cA)$, we write $\E_{a\sim\bw}[a]=\int a\, \mathrm{d}\bw (a)$ the associated expectation. When the context is clear, we rather write $\E_{\bw}$ instead of $\E_{\ba\sim\bw}$. When $\bw=w^1 \otimes\ldots\otimes w^J$ is a product of $J>0$ measures, we define for any $j\in\agents$ $\bw^{-j}=w^1 \otimes \ldots w^{j-1}\otimes w^{j+1}\otimes \ldots \otimes w^J$ and $\E_{\bw^{-j}}$ the associated expectation operator.

\paragraph{Setting.}We consider a set of $J>0$ agents denoted by $\agents$. We suppose that each agent has access to an action set $\cA^j=\defEnsLigne{a^j _1 , \ldots , a^j _K}$ with $| \cA^j| = K$. In addition, we assume that 
the cost function of agent $j\in\agents$ is given for $Z\in\cZ\subseteq\R^d$ and  $\phi^j : \cA \rightarrow \R^d$
by: 
\begin{equation}
    \label{def:generalutility}
    c^{j} (\bw,Z) =\EEs{\ba\sim\bw}{\ps{\phi^{j}(\ba)}{Z}}\eqsp,
\end{equation}
where $\bw \in \mathscr{P}(\mca)$. Typically, we will consider $\bw= w^1 \otimes\ldots\otimes w^J $ where $w^j \in\Delta_{K}$ is a mixed strategy played by $j\in\agents$.
This cost function is flexible and is customary in contextual optimization \citep{sadana2024survey} and contextual bandit \citep{li2010contextual, lattimore2020bandit}. \eqref{def:generalutility}, $\phi^j$ represents a standard payoff function, while $Z\in\cZ$ can be interpreted as a state of nature that linearly influences preferences. Note that a time-varying game can easily be constructed by considering a sequence of states of nature $(Z_1 , \ldots, Z_T)\in\cZ^T$ for $T>0$. We rewrite \eqref{def:generalutility} in a more compact way with the following lemma. 

\begin{restatable}{lemma}{lemmarewritingutility}\label{lemma:rewritingutility}
    Let $j\in\agents$, $\bw\in\mathscr{P}(\cA)$ with $\bw=w^j \otimes \bw^{-j}$ and $\Phi^{j}(\bw^{-j})=
    (\E_{\bw^{-j}}[\phi^{j} (a^j _k, \ba^{-j})[\ell]])_{\ell , k} \in\R^{d \times K}$. We have:
    $$
c^j (\bw, Z)=\ps{Z}{\Phi^{j}(\bw^{-j})w^j}\eqsp.
    $$
\end{restatable}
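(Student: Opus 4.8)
The plan is to unwind the definition of the cost function in \eqref{def:generalutility} and rewrite the expectation over the product measure $\bw = w^j \otimes \bw^{-j}$ as a nested expectation, first over $\ba^{-j} \sim \bw^{-j}$ and then over $a^j \sim w^j$. Concretely, starting from $c^j(\bw, Z) = \EEs{\ba \sim \bw}{\ps{\phi^j(\ba)}{Z}}$ and using Fubini together with the product structure, I would write
\[
c^j(\bw, Z) = \EEs{a^j \sim w^j}{\EEs{\ba^{-j} \sim \bw^{-j}}{\ps{\phi^j(a^j, \ba^{-j})}{Z}}}.
\]

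Next, since $w^j \in \Delta_K$ is supported on the finite action set $\cA^j = \{a^j_1, \ldots, a^j_K\}$, I would expand the outer expectation as a finite sum $\sum_{k=1}^K w^j[k] \, \EEs{\ba^{-j} \sim \bw^{-j}}{\ps{\phi^j(a^j_k, \ba^{-j})}{Z}}$. Then, using linearity of the inner product and of the expectation to pull $Z$ and the expectation operator through coordinatewise, the inner term equals $\sum_{\ell=1}^d Z[\ell] \, \E_{\bw^{-j}}[\phi^j(a^j_k, \ba^{-j})[\ell]]$, which is exactly $\ps{Z}{\Phi^j(\bw^{-j})[:,k]}$ where $\Phi^j(\bw^{-j})[:,k]$ is the $k$-th column of the matrix $\Phi^j(\bw^{-j})$ defined in the statement. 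Recognizing that $\sum_{k=1}^K w^j[k] \, \Phi^j(\bw^{-j})[:,k] = \Phi^j(\bw^{-j}) w^j$ as a matrix-vector product, and pulling the inner product back out of the sum by bilinearity, yields $c^j(\bw, Z) = \ps{Z}{\Phi^j(\bw^{-j}) w^j}$, which is the claimed identity.

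This is essentially a bookkeeping exercise, so I do not anticipate a genuine obstacle; the only point requiring a modicum of care is making sure the indices in the definition of $\Phi^j(\bw^{-j})$ match up correctly — the matrix has entry $(\ell, k)$ equal to $\E_{\bw^{-j}}[\phi^j(a^j_k, \ba^{-j})[\ell]]$, so that the $\ell$-th coordinate of the vector $\Phi^j(\bw^{-j}) w^j$ is $\sum_k w^j[k]\, \E_{\bw^{-j}}[\phi^j(a^j_k, \ba^{-j})[\ell]]$, and one must check this lines up with the $\ell$-th coordinate contribution $Z[\ell] \cdot (\text{that sum})$ after taking the inner product with $Z$. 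One should also implicitly invoke integrability of $\phi^j$ against $\bw$ (guaranteed since $\cA^{-j}$-marginal is a probability measure and, in the relevant applications, $\phi^j$ is bounded) to justify the use of Fubini's theorem when splitting the expectation.
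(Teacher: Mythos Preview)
Your proposal is correct and follows essentially the same approach as the paper's proof: apply Fubini to the product measure $\bw = w^j \otimes \bw^{-j}$, expand the outer expectation over $w^j$ as a finite sum over $k\in[K]$, pull $Z$ out by linearity of the inner product, and recognize the resulting expression as $\ps{Z}{\Phi^j(\bw^{-j})w^j}$. Your additional remarks on index bookkeeping and integrability are fine but not strictly needed here, since boundedness (or even just that everything is a finite sum against a probability measure) makes Fubini immediate.
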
 In 
\Cref{lemma:rewritingutility}, $\Phi^j (\bw^{-j})$ is a matrix whose column $k$ contains the cost of playing the  pure action $a^j _k$ when opponents play their mixed-strategy $\bw^{-j}$. This quantity appears naturally in online learning for games \citep{syrgkanis2015}. Note that by \Cref{lemma:rewritingutility}, $\Phi^j (\bw^{-j})$ entirely determines $c^j$, so having access to this matrix is equivalent to having access to $c^j$. Moreover, \Cref{lemma:rewritingutility} stresses that $c^j$ is conveniently linear in $w^j \in\Delta_K$ for any $j\in\agents$.

We introduce the two following assumptions for the rest of the analysis.
    \begin{assumption}
        \label{assumption:boundedutility}For any $j\in\agents$, $\ba\in\cA$ and $\cZ\in\cZ$, $\abs{\ps{Z}{\phi^j (\ba)}}\leq 1$.
\end{assumption}This boundedness assumption  is usual in learning in games and more generally in online learning \citep{hazan2014}. In particular, \Cref{assumption:boundedutility} ensures that for any $j\in\agents$, $\bw\in\mathscr{P}(\cA)$ and $Z\in\cZ$, $c^j (\bw,Z)\leq 1$.     
    \begin{assumption}
    \label{assumption:finitecontext}The set $\cZ$ is finite: $\cZ = \defEnsLigne{z_1, \ldots, z_m}$ for $m>0$.    \end{assumption}
   Assuming a finite context set is customary in bandit theory \citep{lattimore2020bandit, slivkins2019introduction} and game theory \citep{kamenica2011bayesian, kamenica2019bayesian}, and is often relevant in practical settings. We believe that the analysis to an infinite context set is possible, but such an extension falls outside the scope of the current paper, as it would require introducing fundamentally different concepts and proof techniques, see \Cref{appendix:discussionfiniteness} for further discussion.
   
    We assume that agents play a time-varying game, which is determined by a sequence of states of nature $(Z_1, \ldots, Z_T)\in\cZ^T$ of length $T>0$. At the beginning of each period $t\in\timesteps$, nature draws a state of nature $Z_t \in\cZ$, which is not revealed to agents, while  each player $j\in\agents$ receives a signal $\hZ^j _t \in \cZ$ about this state. They then select a strategy $w_t^j \in\Delta_K$  based on this signal. Finally, each agent $j$ get as a feedback the cost matrix $\Phi^j(\bw^{-j}_t)$ as well as the actual state of nature $Z_t$.  
\begin{remark}[label=remark:supervisedlearning]
    In many practical settings, the private signals $\hZ^j _t \in\cZ$ for $j\in\agents$ and $t\in\timesteps$ are predictions made by supervised learning algorithms. In this case, at the beginning of each round $t\in\timesteps$, each agent $j\in\agents$ observes covariates $X^j _t \in\cX$ and makes a prediction
    $$
\hat{Z}^j _t = g ^j _t (X^j _t)\eqsp,
    $$
    where $g^j _t \in\cG\subset \defEnsLigne{g: \cX\rightarrow\cZ}$ is some prediction algorithm based on the history of observations up to time $t$. Under \Cref{assumption:finitecontext}, this situation corresponds to multiclass online learning, for which several theoretical results are available in the litterature \citep{daniely2014, daniely2014optimal2}. 
\end{remark}
To formally describe the game, we define $\Pi^j$ as the set of policies $ \pi ^j : \parentheseLigne{\cup_{t\in\timesteps}\cH^j _t}\times\cZ \rightarrow \Delta_{K}$ for player $j\in\agents$, where $\cH^j_t$ is the set of histories at time $t\in\timesteps$ with elements $h^j_\tau =\{\Phi^{j}(\bw^{-j}_{\tau}),Z_t\}_{1\leq\tau\leq t}$. At the beginning of the game, $h^j _0 = \emptyset$. Then for any $t\in\timesteps$,
\begin{enumerate}
    \item  Each agent $j\in\agents$ observes a private signal $\hZ^j _t \in\cZ$, and picks a mixed strategy $w^j _t \in\Delta_{K}$ where $w^j _t$ is the output of a policy $\pi^j _t = \pi ^j (h^j _{t-1},\centraldot):\cZ\rightarrow\Delta_{K}$, that is $w^j _t = \pi^j _t (\hZ^j _t)$.
    \item Each agent $j$ incurs a cost $\langle Z_t , \Phi^j (\bw^{-j}_t)w^j _t \rangle$, and gets as a feedback $(Z_t, \Phi^j (\bw^{-j}_t))$. They then update $h_{t} = h_{t-1} \cup \{\Phi^j (\bw^{-j} _t) , Z_t\}$. 
\end{enumerate}
\begin{remark}[continues=remark:supervisedlearning]
In the case where private signals are predictions from an online algorithm, agents train policies $\kappa^j : \cX\rightarrow\Delta_{K}$ mapping covariates to strategies. Indeed, for any $j\in\agents$ and $t\in\timesteps$:
$$
w^j _t = \pi^j _t (\hZ^j _t) = (\pi^j _t \circ g^j_t) (X^j _t)=\kappa^j _t (X^j_t)\eqsp.
$$
\end{remark}
We consider the standard full-information feedback setting, where each player \( j \in \agents \) observes \( \Phi^j (\bw^j _t) \). We believe that extending our results to bandit feedback -- i.e., when agents only observe the reward from their realized action-- \citep{foster2016} is feasible, though it would require additional technical refinements.
\paragraph{Regrets.}We now present the two regret concepts used in this paper to quantify the optimality of a policy $\pi^j \in\Pi^j$. They are essentially contextual versions of the classic external \citep{zinkevich2003online} and swap \citep{blum2007external} regrets. In what follows, $\scrT^z = \defEnsLigne{t\in\timesteps:\: Z_t =z}$ denotes the timesteps at which $z\in\cZ$. 

 First, following \citet{sessa2021}, we introduce a contextual external regret. For any $j\in\agents$, given a fixed sequence of opponent strategies $(\bw^{-j}_t)_{t\in\timesteps}$, we denote by  $\pi^j _\star :\cZ\rightarrow \Delta_K$ the static comparator which satisfies
    \begin{equation}
        \label{def:staticomparator}
        \sum_{t\in\scrT^z}c^j (\pi^j _{\star} (z),\bw^{-j}_t, Z_t) \leq  \sum_{t\in\scrT^z}c^j (w,\bw^{-j}_t , Z_t)\eqsp,
    \end{equation}
     for any $z\in\cZ$ and $w\in\mathcal{P}(\cA^j)$. The comparator $\pi^j_\star$ maps each context $z$ to the best action in hindsight on the time steps when $z$ was observed. Denoting $w^j _t = \pi ^j _t (\hZ^j _t)$ the strategy of agent $j$ for any $t\in\timesteps$, we then define the following contextual external regret:
    \begin{equation}
\label{eq:regretlinearized2}\regret_T ^j =\sum_{t\in\timesteps}\parentheseDeux{c^j (w^j_ t, \bw^{-j}_t, Z_t)-c^j (\pi^j _\star (Z _t),\bw^{-j}_t, Z_t)}\eqsp.
    \end{equation}Note that $\regret^j _T$ is not fully static as the comparator is allowed to vary from a context to another. In this sense, \eqref{eq:regretlinearized2} can be viewed as an intermediary between external regret and dynamic regret \citep{hall2013dynamical, besbes2015non}. The existing literature on time-varying games typically focuses on dynamic regret \citep{duvocelle2018learning, zhang2022no, anagnostides2024convergence}, which is the most stringent notion of regret. However, the resulting bounds often include a path length term. This term captures the intrinsic variation of the comparator sequence, such as $P_T$ and $V_T$ in \Cref{example:variation}. In contrast, we will show that bounds on regret \eqref{eq:regretlinearized2} depend only on a prediction error term, which vanishes if agents are able to accurately predict the states of nature.

    While external regret has been a cornerstone measure of performance in online learning for games, swap regret has recently aroused a lot of interest since it sets a more demanding learning benchmark and leads to tighter equilibrium concepts \citep{anagnostides2022uncoupled, peng2024fast, dagan2024external}. We therefore supplement our analysis of regret \Cref{eq:regretlinearized2} by a similar study of contextual swap regret, which we introduce below.  Let $\Lambda=\defEnsLigne{\lambda: \Delta_K \rightarrow \Delta_K}$ be the set of swap deviation maps, and define for any $j\in\agents$, $\lambda^{j} _{\star}:\Delta_K \times \cZ \rightarrow \Delta_K$ the optimal swap comparator, which satisfies for any $z\in\cZ$ and any $\lambda\in\Lambda$:
    $$
\sum_{t\in\scrT^z}c^j (\lambda^j _\star (w^j _t , z), \bw^{-j}_t , z)\leq \sum_{t\in\scrT^z}c^j (\lambda(w^j _t), \bw^{-j}_t , z)\eqsp. $$
Given a context $z$, the swap comparator $\lambda^j_\star$ maps any played strategy $w^j_t$ to an alternative strategy that would have achieved better performance on the time steps when $z$ was observed. Importantly, competing against $\lambda^j_\star$ is strictly more demanding than competing against $\pi^j_\star$ from \eqref{def:staticomparator}. While $\pi^j_\star$ assigns a fixed optimal action to each context $z$, independent of the strategies actually played, $\lambda^j_\star(w, z)$ adapts to the specific strategy $w$ used. This flexibility allows $\lambda^j_\star$ to identify better-performing alternatives conditioned on past decisions, making it a stronger—and thus more challenging—benchmark.

With $w^j _t = \pi^j _t (\hZ^j _t)$ being the strategy played by agent $j$ at time $t$, we define the following contextual swap regret:
    \begin{equation}
        \label{def:swapregret}
        \uregret^j _T = \sum_{t\in\timesteps}\parentheseDeux{c^j (w^j _t, \bw^{-j}, Z_t) -c^j (\lambda^j _{\star}(w^j _t , Z_t),\bw^{-j}_t , Z_t)}\eqsp.
    \end{equation}
    
    Finally, in the non-contextual case, the Blum-Mansour reduction \citep{blum2007external} is a convenient procedure which allows to design a no-swap regret algorithm from any no-external regret one. A natural question is whether a similar reduction exists in our setting, that is, whether an algorithm minimizing \eqref{def:swapregret} can be obtained from an algorithm minimizing \eqref{eq:regretlinearized2}. We answer by the positive with the following proposition.
    \begin{restatable}{proposition}{externaltoswapregret}\label{proposition:externaltoswapregret}
        Assume that player $j\in\agents$ plays an algorithm $\pi^j \in\Pi^j$ achieving $\regret^j_ T \leq f(J,T,K,m)$ for some $f: \N^4 _{+}\rightarrow \R_{+}$. Then, one can design an algorithm $\overline{\pi}^j \in\Pi^j$ achieving
        $$
\uregret^j _T \leq K f(J, T,K,m)\eqsp.
        $$
    \end{restatable}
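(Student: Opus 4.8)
The plan is to adapt the Blum--Mansour reduction to the contextual setting by instantiating, for each context $z \in \cZ$, a separate copy of the swap-to-external machinery and running the external-regret algorithm $\pi^j$ independently on each context-specific subsequence $\scrT^z$. Concretely, for a fixed context $z$, the standard Blum--Mansour construction maintains $K$ instances of the external-regret learner, one per coordinate $k \in [K]$; at each round where $Z_t = z$, the master strategy $w^j_t$ is obtained as the stationary distribution of the row-stochastic matrix whose $k$-th row is the current recommendation of the $k$-th instance, and the feedback passed to instance $k$ is the cost vector scaled by $w^j_t[k]$. I would define $\overline\pi^j$ to do exactly this, but with the context $z$ threading through: the $k$-th sub-learner for context $z$ only updates on rounds in $\scrT^z$, and it receives the linearized feedback that is already available to player $j$ via the cost matrix $\Phi^j(\bw^{-j}_t)$ and observed state $Z_t$ (so the reduction remains within $\Pi^j$ and respects the information structure described in Section~\ref{section:model}).

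The key steps, in order, are: (i) by \Cref{lemma:rewritingutility}, write $c^j(w, \bw^{-j}_t, Z_t) = \langle Z_t, \Phi^j(\bw^{-j}_t) w\rangle$, which is linear in $w$, so that the per-round cost is $\langle \ell^j_t, w\rangle$ for the effective loss vector $\ell^j_t = \Phi^j(\bw^{-j}_t)^\top Z_t \in \R^K$, bounded in $[-1,1]$ coordinatewise by \Cref{assumption:boundedutility}; (ii) fix a context $z$ and let $w^j_t$ be the master's play on $\scrT^z$, defined as a fixed point $w^j_t = w^j_t Q^{z}_t$ of the stochastic matrix $Q^z_t$ built from the $K$ sub-learners' outputs --- existence of the fixed point is standard; (iii) apply the regret guarantee of $\pi^j$ to each sub-learner $k$ on its own feed, which is the sequence of scaled losses $(w^j_t[k]\, \ell^j_t)_{t \in \scrT^z}$; summing over $k$ and using the fixed-point identity $\sum_k w^j_t[k] \langle \ell^j_t, e^j_k \text{-recommendation}\rangle = \langle \ell^j_t, w^j_t\rangle$ telescopes the master's cost against \emph{any} fixed swap map restricted to $z$; (iv) since the optimal contextual swap comparator $\lambda^j_\star(\cdot, z)$ is, on $\scrT^z$, just a fixed function $\Delta_K \to \Delta_K$ and the relevant competitor class is convex with extreme points the coordinate maps, the worst comparator is captured; (v) sum the resulting bound over the $m$ contexts. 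Because each of the $Km$ sub-learners runs on a subsequence and the function $f$ is (by assumption) a valid regret bound for horizon $T$ with parameters $J,K,m$, one gets $\uregret^j_T \le \sum_z \sum_{k} (\text{regret of sub-learner } k \text{ on } \scrT^z) \le \sum_z K f(J, T, K, m)$ --- and here one must be slightly careful about how the $m$ contexts interact with $f$; the cleanest route is to bound $\sum_{t \in \scrT^z}(\cdots) \le f(J,T,K,m)$ per sub-learner using that $|\scrT^z| \le T$ and $f$ is nondecreasing in $T$, then note the per-context contributions of the master already sum correctly so that only a single factor $K$ (not $Km$) appears, matching the claimed $K f(J,T,K,m)$.

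I expect the main obstacle to be bookkeeping the contextual structure so that the final factor is exactly $K$ rather than $Km$: the naive bound sums $K$ sub-learners over $m$ contexts and gives $Km f$. The resolution is that the external-regret guarantee $f(J,T,K,m)$ should already be the \emph{total} (over all contexts) regret bound for the base algorithm $\pi^j$ --- indeed this is how contextual external regret \eqref{eq:regretlinearized2} is defined, as a single sum over all $t \in \timesteps$ --- so each of the $K$ coordinate-indexed families of sub-learners (one family = one sub-learner per context, all sharing coordinate $k$) collectively incurs at most $f(J,T,K,m)$, and summing over $k \in [K]$ yields the factor $K$. Making this accounting rigorous requires viewing the $k$-th family as a single instance of $\pi^j$ that is itself contextual, which is legitimate since $\pi^j \in \Pi^j$ is by definition a contextual policy. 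The remaining steps --- the fixed-point existence, the linearity from \Cref{lemma:rewritingutility}, the reduction of the swap comparator to coordinate maps by convexity --- are routine adaptations of \citet{blum2007external} and should not present difficulty. A secondary technical point is verifying that the scaled feedback $w^j_t[k]\,\ell^j_t$ is legitimately within the information available to player $j$ and keeps losses bounded (it does, since $w^j_t[k] \in [0,1]$), so that $\overline\pi^j$ genuinely lies in $\Pi^j$ and the hypothesis on $\pi^j$ applies verbatim.
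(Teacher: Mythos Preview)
Your proposal is correct and ultimately lands on the same construction as the paper, but you take an unnecessary detour. The paper's \Cref{algorithm:blummansour} instantiates exactly $K$ copies of the contextual policy $\pi^j$ (not $Km$): each copy $k\in[K]$ is itself a full contextual learner that receives the scaled feedback $w^j_t[k]\,\Phi^j(\bw^{-j}_t)$ together with $Z_t$, and outputs $p^j_{k,t}=\pi^j(h^j_{k,t-1},\hZ^j_t)$. Because each copy is already contextual, its regret over the whole horizon is bounded by $f(J,T,K,m)$ directly, and summing over $k$ gives the factor $K$. Your initial per-context framing with $Km$ sub-learners is the wrong starting point, and you only recover the right object in your ``resolution'' paragraph where you observe that the $k$-th family should be viewed as a single contextual instance of $\pi^j$; the paper simply starts there. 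The remaining ingredients you list---linearity via \Cref{lemma:rewritingutility}, the fixed point $P^j_t w^j_t = w^j_t$, reducing the swap comparator to a matrix $\Lambda^j_\star(z)\in\{0,1\}^{K\times K}$ acting on $w^j_t$---match the paper's argument line for line.
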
The explicit procedure to design $\bar{\pi}^j $ from $\pi^j$ is described in \Cref{algorithm:blummansour}, and the proof of \Cref{proposition:externaltoswapregret} is deferred to \Cref{appendix:proofs}. A direct consequence of \Cref{proposition:externaltoswapregret} is that any algorithm with a guarantee on external regret \eqref{eq:regretlinearized2} can be converted into another algorithm with a guarantee on swap regret \eqref{def:swapregret}, at the cost of an additional $K$ factor. This will be particularly useful to extend the analysis from external to swap regret. Note that more recent procedures \citep{dagan2024external, peng2023fastswapregretminimization} allow to deal with larger action spaces by reducing the dependence of $K$, yet at the cost of a degraded dependence on $T$.

\paragraph{Equilibrium.}We consider two equilibrium concepts, which naturally relates to the two regrets previously defined. First, we focus on the classic contextual coarse-correlated equilibrium \citep{sessa2021, maddux2024}, whose definition is recalled below. 
\begin{restatable}{definition}{definitioncoarsecorrelatedequilibrium}[\citealt{sessa2021}]\label{definition:coarsecorrelatedeq}
    Let $\varepsilon>0$. An \emph{$\varepsilon$-contextual coarse-correlated equilibrium} is a joint policy $\bldnu: \cZ\rightarrow\mathscr{P}(\cA)$ such that for any $j\in\agents$ and $\pi^j \in\Pi^j$:
    \begin{align*}
        T^{-1}&\sum_{t\in\timesteps}c^j (\nu^j (Z_t), \bldnu^{-j}(Z_t), Z_t) \\&\leq T^{-1}\sum_{t\in\timesteps}c^j (\pi^j (Z_t), \bldnu^{-j}(Z_t), Z_t)+\varepsilon\eqsp.
    \end{align*}
\end{restatable}
 Note that \Cref{definition:coarsecorrelatedeq} extends the classic coarse correlated equilibrium concept \citep{foster1998asymptotic} to the case where the underlying state of nature changes over time. A more detailed interpretation of \Cref{definition:coarsecorrelatedeq} is provided in \Cref{appendix:equilibrium}.

While coarse-correlated equilibrium has been extensively studied, it is arguably weak in the sense that it only prevents coarse deviations. On the other hand, correlated equilibrium \citep{aumann1987correlated} is a tighter equilibrium concept which prevents swap deviations, see \Cref{appendix:equilibrium} for more discussion. We introduce below an equivalent concept adapted to our framework. In the following definition, $\varrho^j : \cA^j \times \cZ \rightarrow \cA^j$ is any swap deviation function, which given an action and context $(a^j, z)$ returns an alternative action $\tilde{a}^{j}$.
\begin{restatable}{definition}{definitioncorrelateqauilibrium}
    \label{definition:correlatedeq}Let $\varepsilon > 0$. An \emph{$\varepsilon$-contextual correlated equilibrium} is a joint policy $\overline{\bldnu}:\cZ\rightarrow\mathscr{P}(\cA)$ such that for any $j\in\agents$ and $\varrho^j :\: \cA^j \times \cZ \rightarrow \cA^j$:
    \begin{align*}
T^{-1}&\sum_{t\in\timesteps}\EEs{\ba\sim\overline{\bldnu}(Z_t)}{\ps{\phi^j (\ba)}{Z_t}} \\ &\leq T^{-1}\sum_{t\in\timesteps}\EEs{\ba\sim\overline{\bldnu}(Z_t)}{\ps{\phi^j (\varrho^j (a^j ,Z_t), \ba^{-j})}{Z_t}} + \varepsilon\eqsp.
    \end{align*}
\end{restatable}\Cref{definition:correlatedeq} extends the classic correlated equilibrium notion \citep{aumann1987correlated} to the contextual case, by letting the swap functions $\varrho^j (\centraldot, z)$ depend on the state of nature.

In the non-contextual case, there exists a well-known and powerful connection between external regret minimization and coarse-correlated equilibrium. If all players use no-external regret algorithms, their empirical average strategy is an approximate coarse correlated equilibrium. Crucially, we show that the same property holds in our contextual framework. In the following proposition, $n_z = \abs{\scrT^z}$ denotes the number of time state $z\in\cZ$ occurs.
\begin{restatable}{proposition}{epscoarsecorrelatedeq}\label{proposition:noregretcoarsecorrelatedequilibrium}
    Assume that for any $j\in\agents$, agent $j$ uses a policy $\pi^j \in\Pi^j$ incurring an external regret $\regret^j _T$ as in \eqref{eq:regretlinearized2}, and denote by $w^j _t = \pi^j _t (\hZ^j _t)$ for any $t\in\timesteps$. Let $\hat{\bldnu}_T : \cZ \rightarrow \mathscr{P}(\cA)$ be such that for any $z\in\cZ$, 
    $$
    \hat{\bldnu}_T (z) = \begin{cases}
        n_z^{-1}\sum_{t\in\scrT^z}w^1 _t \otimes \ldots \otimes w^J _t &\text{if }n_z > 0 \eqsp, \\
        (K^{-1} ,\ldots, K^{-1}) &\text{otherwise}\eqsp.
    \end{cases}
    $$Then, $\hat{\bldnu}_T$ is an $\varepsilon$-contextual coarse correlated equilibrium with $$\vareps = \max_{j\in\agents}T^{-1}\regret^j _T\eqsp.$$
\end{restatable}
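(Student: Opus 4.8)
The plan is to evaluate both sides of \Cref{definition:coarsecorrelatedeq} at $\bldnu=\hat{\bldnu}_T$ by regrouping timesteps according to the realized context and exploiting the linearity of $c^j$ in each of its measure arguments (immediate from linearity of expectation in \eqref{def:generalutility}, and already recorded for the strategy slot in \Cref{lemma:rewritingutility}). First I would fix an agent $j\in\agents$ and an arbitrary deviation policy $\pi^j\in\Pi^j$, writing $w^j_t=\pi^j_t(\hZ^j_t)$ and $\bw_t=w^1_t\otimes\ldots\otimes w^J_t$ for the profiles realized along the run. The key structural point is that the left-hand side of the equilibrium condition is the expected cost of $j$ when the whole profile is drawn from the \emph{correlated} recommendation, i.e.\ $T^{-1}\sum_{t\in\timesteps}\EEs{\ba\sim\hat{\bldnu}_T(Z_t)}{\ps{\phi^j(\ba)}{Z_t}}$ by \eqref{def:generalutility}, whereas the right-hand side only couples the deviation $\pi^j(Z_t)$ with the opponent \emph{marginal} $\hat{\bldnu}^{-j}_T(Z_t)$; both sides collapse nicely because, for $n_z>0$, $\hat{\bldnu}_T(z)$ is the uniform mixture $n_z^{-1}\sum_{s\in\scrT^z}\bw_s$.

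For the left-hand side, I would split $\sum_{t\in\timesteps}=\sum_{z\in\cZ:\,n_z>0}\sum_{t\in\scrT^z}$; contexts with $n_z=0$ never occur in $(Z_1,\ldots,Z_T)$, so the default value $(K^{-1},\ldots,K^{-1})$ is immaterial. For such a $z$ and any $t\in\scrT^z$, linearity of the expectation in \eqref{def:generalutility} together with $\hat{\bldnu}_T(z)=n_z^{-1}\sum_{s\in\scrT^z}\bw_s$ gives $\EEs{\ba\sim\hat{\bldnu}_T(z)}{\ps{\phi^j(\ba)}{z}}=n_z^{-1}\sum_{s\in\scrT^z}c^j(w^j_s,\bw^{-j}_s,z)$, a quantity constant in $t$; summing its $n_z$ copies over $t\in\scrT^z$ cancels the $n_z^{-1}$, and summing over $z$ recombines the contexts, so the left-hand side equals $T^{-1}\sum_{t\in\timesteps}c^j(w^j_t,\bw^{-j}_t,Z_t)$.

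For the right-hand side, the marginal of $\hat{\bldnu}_T(z)$ on $\cA^{-j}$ is $\hat{\bldnu}^{-j}_T(z)=n_z^{-1}\sum_{s\in\scrT^z}\bw^{-j}_s$, so by linearity of $c^j$ in its opponent slot $c^j(\pi^j(z),\hat{\bldnu}^{-j}_T(z),z)=n_z^{-1}\sum_{s\in\scrT^z}c^j(\pi^j(z),\bw^{-j}_s,z)$, and summing over $t\in\scrT^z$ once more cancels $n_z^{-1}$. Then I would invoke the optimality \eqref{def:staticomparator} of the static comparator $\pi^j_\star$ (taken, as in $\regret^j_T$, for the opponent sequence realized in the run) with $w=\pi^j(z)\in\Delta_K$, giving $\sum_{t\in\scrT^z}c^j(\pi^j_\star(z),\bw^{-j}_t,z)\leq\sum_{t\in\scrT^z}c^j(\pi^j(z),\bw^{-j}_t,z)$; summing over $z$ shows the right-hand side is at least $T^{-1}\sum_{t\in\timesteps}c^j(\pi^j_\star(Z_t),\bw^{-j}_t,Z_t)$. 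Subtracting the two, the left-hand side minus the right-hand side is at most $T^{-1}\sum_{t\in\timesteps}\bigl(c^j(w^j_t,\bw^{-j}_t,Z_t)-c^j(\pi^j_\star(Z_t),\bw^{-j}_t,Z_t)\bigr)=T^{-1}\regret^j_T\leq\max_{i\in\agents}T^{-1}\regret^i_T=\vareps$, which is exactly \eqref{eq:regretlinearized2}; since $j$ and $\pi^j$ were arbitrary, this is \Cref{definition:coarsecorrelatedeq}.

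The part I expect to require the most care is the bookkeeping around the asymmetry baked into the notion of a \emph{correlated} equilibrium: following the recommendation $\hat{\bldnu}_T(Z_t)$ must be shown to collapse onto the ``diagonal'' profile sum $\sum_{t}c^j(w^j_t,\bw^{-j}_t,Z_t)$, while a deviation only ever interacts with the opponent marginal, and one must verify that the per-context $n_z$-cancellation is legitimate — it is, precisely because the quantity summed over $t\in\scrT^z$ does not depend on $t$. Everything else is a routine regrouping of finitely many sums and a direct appeal to \eqref{def:staticomparator}.
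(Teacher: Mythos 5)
Your proposal is correct and follows essentially the same route as the paper's proof: collapse the per-context uniform mixture $\hat{\bldnu}_T(z)=n_z^{-1}\sum_{s\in\scrT^z}\bw_s$ onto the realized profiles by the $n_z$-cancellation, use linearity of $c^j$ to handle the deviation coupled with the opponent marginal, and then bound the deviation term via the optimality \eqref{def:staticomparator} of $\pi^j_\star$ to recover $T^{-1}\regret^j_T\leq\vareps$. If anything, your treatment of the opponent-marginal step and of the $n_z=0$ contexts is more explicit than the paper's.
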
The proof of \Cref{proposition:noregretcoarsecorrelatedequilibrium} is deferred to \Cref{appendix:proofs}. It is clear from this proposition that if $\regret^j _T =o(T)$ for every $j\in\agents$, $\boldsymbol{\hat{\nu}}_T$ asymptotically convergences to an exact coarse-correlated equilibrium. 

Moreover, the same connection exists for swap regret and correlated equilibrium in the non-contextual setting. We also retrieve this property in our contextual setting, as showed by the following proposition.

\begin{restatable}{proposition}{epscorrelatedeq}\label{proposition:noregretcorrelatedequilibrium}
    Assume that for any $j\in\agents$, agent $j$ uses a policy $\bar{\pi}^j \in\Pi^j$ incurring a swap regret $\bar{\regret}^j _T$ defined as in \eqref{def:swapregret}. Let $\hat{\bldnu}_T : \cZ \rightarrow \mathscr{P}(\cA)$ be defined as in \Cref{definition:coarsecorrelatedeq}. Then, $\hat{\bldnu}_T$ is an $\varepsilon$-contextual correlated equilibrium with $$\vareps = \max_{j\in\agents}T^{-1}\bar{\regret}^j _T\eqsp.$$
\end{restatable}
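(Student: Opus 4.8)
\textbf{Proof plan for \Cref{proposition:noregretcorrelatedequilibrium}.}

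The plan is to mirror the structure of the proof of \Cref{proposition:noregretcoarsecorrelatedequilibrium}, replacing coarse deviations by swap deviations. First I would fix an agent $j\in\agents$ and an arbitrary swap deviation function $\varrho^j : \cA^j \times \cZ \rightarrow \cA^j$. The goal is to bound the one-sided inequality in \Cref{definition:correlatedeq} when $\overline{\bldnu} = \hat{\bldnu}_T$. Expanding $\hat{\bldnu}_T(Z_t)$ using its definition, and using \Cref{assumption:finitecontext} to split the sum over $t\in\timesteps$ into a sum over contexts $z\in\cZ$ and then over $t\in\scrT^z$, the quantity $T^{-1}\sum_{t\in\timesteps}\E_{\ba\sim\hat{\bldnu}_T(Z_t)}[\langle\phi^j(\ba), Z_t\rangle]$ rewrites (on the contexts with $n_z>0$) as $T^{-1}\sum_{z\in\cZ}\sum_{t\in\scrT^z} n_z^{-1}\sum_{s\in\scrT^z}\E_{\ba\sim\bw_s}[\langle\phi^j(\ba), z\rangle]$, where $\bw_s = w^1_s\otimes\cdots\otimes w^J_s$. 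The inner double sum over $(t,s)\in\scrT^z\times\scrT^z$ collapses because $\sum_{t\in\scrT^z}1 = n_z$, leaving $T^{-1}\sum_{z\in\cZ}\sum_{s\in\scrT^z}\E_{\ba\sim\bw_s}[\langle\phi^j(\ba), z\rangle] = T^{-1}\sum_{t\in\timesteps} c^j(\bw_t, Z_t)$ by \eqref{def:generalutility}.

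Next I would do the same expansion on the right-hand side: the deviation term becomes $T^{-1}\sum_{t\in\timesteps}\E_{\ba\sim\bw_t}[\langle\phi^j(\varrho^j(a^j, Z_t), \ba^{-j}), Z_t\rangle]$. The key step is to recognize that a swap deviation function $\varrho^j(\cdot, z)$ on actions induces, by linearity of expectation over $w^j_t$, a map $\lambda\in\Lambda$ on the simplex: namely $\lambda(w)[\cdot] = \sum_{k} w[k]\, e_{\varrho^j(a^j_k, z)}$, i.e.\ the pushforward of the distribution $w$ by $\varrho^j(\cdot,z)$. With this identification, $\E_{\ba\sim\bw_t}[\langle\phi^j(\varrho^j(a^j, Z_t), \ba^{-j}), Z_t\rangle] = c^j(\lambda(w^j_t), \bw^{-j}_t, Z_t)$ for the induced $\lambda$, using that $c^j$ is linear in the first argument (\Cref{lemma:rewritingutility}). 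Since $\lambda^j_\star$ is the \emph{optimal} swap comparator, it does at least as well: $\sum_{t\in\scrT^z} c^j(\lambda^j_\star(w^j_t, z), \bw^{-j}_t, z) \leq \sum_{t\in\scrT^z} c^j(\lambda(w^j_t), \bw^{-j}_t, z)$ for every $z$. Summing over $z\in\cZ$ (contexts with $n_z=0$ contribute nothing) and using the definition \eqref{def:swapregret} of $\bar{\regret}^j_T$, we get
\begin{equation*}
T^{-1}\sum_{t\in\timesteps} c^j(\bw_t, Z_t) - T^{-1}\sum_{t\in\timesteps}\E_{\ba\sim\bw_t}[\langle\phi^j(\varrho^j(a^j, Z_t), \ba^{-j}), Z_t\rangle] \leq T^{-1}\bar{\regret}^j_T \leq \max_{i\in\agents} T^{-1}\bar{\regret}^i_T\eqsp.
\end{equation*}
Taking the maximum over $j$ and over $\varrho^j$ yields that $\hat{\bldnu}_T$ is an $\varepsilon$-contextual correlated equilibrium with $\varepsilon = \max_{j\in\agents} T^{-1}\bar{\regret}^j_T$, as claimed.

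The only slightly delicate point — and the one I would be most careful about — is the reduction from the action-level swap functions $\varrho^j(\cdot, z)$ appearing in \Cref{definition:correlatedeq} to the simplex-level maps $\lambda\in\Lambda$ appearing in \eqref{def:swapregret}: one must check that every $\varrho^j$ induces a valid element of $\Lambda$ (it does, since pushforwards map $\Delta_K$ to $\Delta_K$), so that competing against $\lambda^j_\star$ is at least as strong as competing against any $\varrho^j$, and that the cost identity $\E_{\ba\sim\bw}[\langle\phi^j(\varrho^j(a^j,z),\ba^{-j}),z\rangle] = c^j(\lambda(w^j),\bw^{-j},z)$ holds exactly — this is where linearity of $c^j$ in its first argument from \Cref{lemma:rewritingutility} is essential. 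Everything else is the same context-splitting and telescoping bookkeeping as in \Cref{proposition:noregretcoarsecorrelatedequilibrium}, together with the handling of the degenerate $n_z = 0$ contexts, which drop out of both sides.
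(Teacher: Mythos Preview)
Your proposal is correct and follows essentially the same route as the paper: both proofs do the context-splitting to collapse $\hat{\bldnu}_T(Z_t)$ back to $\bw_t$, then observe that the action-level swap $\varrho^j(\cdot,z)$ induces a linear map on $\Delta_K$ (you call it the pushforward, the paper writes it as an explicit $\{0,1\}$-matrix $B^j_z$ with $\tilde w^j_t = B^j_z w^j_t$), use \Cref{lemma:rewritingutility} to identify the deviation cost, and conclude by optimality of $\lambda^j_\star$. The only cosmetic difference is that the paper works in matrix form throughout while you phrase the same identity in terms of pushforwards.
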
The proof of this result can be found in \Cref{appendix:proofs}. It shows that if all players use algorithms incurring a sub-linear swap regret, their empirical average strategy converges to an exact correlated equilibrium. Both \Cref{proposition:noregretcoarsecorrelatedequilibrium} and \Cref{proposition:noregretcorrelatedequilibrium} are key to our analysis, since they convert individual regret guarantees into convergence rates to equilibrium. Hence, bounding individual regrets is our first objective.

\paragraph{Social welfare.}On top of convergence to equilibrium, we study social welfare, and in particular whether no-regret strategies may result in a welfare close to the optimal one. In non-contextual games,  the so-called Roughgarden smoothness condition \citep{roughgarden2015intrinsic} is particularly convenient to address this question \citep{syrgkanis2015}. This condition---which is satisfied by a wide class of games, including congestion games \citep{roughgarden2002bad, christodoulou2005price}, facility games and second price auctions \citep{roughgarden2015intrinsic}---states that even when players deviate from the optimal strategy, the total cost in a game doesn't increase too much. Under this condition, it can be shown that the average social cost converges to the optimal one times the price of anarchy. 

Here, we assume that our game satisfies the contextual counterpart to the Roughgarden smoothness condition.
\begin{assumption}
    \label{assumption:smoothness} There exist $\delta>0$ and $\mu>0$ such that for any $\ba\in\cA$,  $\ba_\star \in\cA$ and $z\in\cZ$,
    \begin{equation*}
        \sum_{j\in\agents}\ps{z}{\phi_{j}(a^j _\star , \ba_{-j})}\leq\sum_{j\in\agents}\parentheseDeux{\delta \ps{z}{\phi_{j}(\ba_ \star)} +\mu \ps{z}{\phi_{j}(\ba)}}\eqsp.
    \end{equation*}
\end{assumption}It is well known that under \Cref{assumption:smoothness},  $\gamma = \delta(1-\mu)^{-1}$ is an upper bound on the price of anarchy \citep{roughgarden2015intrinsic}. In what follows,
$$C_t (\bw_t)=\sum_{j\in\agents}c^j (\bw_t , Z_t)\eqsp,$$ denotes the social cost at time $t\in\timesteps$ and $$C^\star = \min_{\boldsymbol{\rho} : \cZ\rightarrow \cA}T^{-1}\sum_{t\in\timesteps}\sum_{j\in\agents}c^j (\boldsymbol{\rho}(Z_t), Z_t)\eqsp,$$ the optimal average social cost in pure strategy. The following proposition shows that under \Cref{assumption:smoothness}, the distance between the average social cost and the optimal one is bounded by the sum of external contextual regrets.
\begin{restatable}{proposition}{propboundaveragesocialwelfare}\label{proposition:averagewelfareoptimalwelfare}
    Assume \Cref{assumption:smoothness}. Then with $\gamma = \delta(1-\mu)^{-1}$, 
    $$
\frac{1}{T}\sum_{t\in\timesteps}C_t(\bw_t) \leq \gamma C^\star +\frac{1}{(1-\mu)T}\sum_{j\in\agents}\regret^j _T\eqsp.
    $$
\end{restatable}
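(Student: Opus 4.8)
The plan is to run the standard Roughgarden smoothness / price-of-anarchy argument of \citet{syrgkanis2015}, taking care of the contextual bookkeeping. Fix a profile $\boldsymbol{\rho}:\cZ\to\cA$ attaining the minimum defining $C^\star$, so that $\sum_{t\in\timesteps}\sum_{j\in\agents}c^j(\boldsymbol{\rho}(Z_t),Z_t)=TC^\star$. The first step is to invoke the external-regret guarantee of each player against the context-dependent comparator induced by $\boldsymbol{\rho}$. For every context $z\in\cZ$, the optimal comparator $\pi^j_\star$ from \eqref{def:staticomparator} is, by definition, at least as good on $\scrT^z$ as the fixed action $\rho^j(z)$ (viewed as an element of $\mathcal{P}(\cA^j)$); summing that inequality over $z\in\cZ$ and substituting into \eqref{eq:regretlinearized2} gives, for each $j\in\agents$,
$$\sum_{t\in\timesteps}c^j(w^j_t,\bw^{-j}_t,Z_t)\leq\sum_{t\in\timesteps}c^j(\rho^j(Z_t),\bw^{-j}_t,Z_t)+\regret^j_T\eqsp.$$
Summing over $j\in\agents$ and recalling $C_t(\bw_t)=\sum_{j\in\agents}c^j(\bw_t,Z_t)$ yields $\sum_{t}C_t(\bw_t)\leq\sum_{t}\sum_{j}c^j(\rho^j(Z_t),\bw^{-j}_t,Z_t)+\sum_{j}\regret^j_T$.

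The second step bounds the middle term via \Cref{assumption:smoothness}. Since $\bw_t=w^1_t\otimes\cdots\otimes w^J_t$ is a product measure, its marginal on the coordinates $-j$ is exactly $\bw^{-j}_t$, so $c^j(\rho^j(Z_t),\bw^{-j}_t,Z_t)=\EEs{\ba\sim\bw_t}{\ps{Z_t}{\phi^j(\rho^j(Z_t),\ba^{-j})}}$ and hence $\sum_{j}c^j(\rho^j(Z_t),\bw^{-j}_t,Z_t)=\EEs{\ba\sim\bw_t}{\sum_{j}\ps{Z_t}{\phi^j(\rho^j(Z_t),\ba^{-j})}}$. Because $\bw_t$ is supported on pure profiles, I apply \Cref{assumption:smoothness} pointwise with $z=Z_t$, $\ba_\star=\boldsymbol{\rho}(Z_t)$ and $\ba$ any realization of the played profile, and then integrate against $\bw_t$; using that $\boldsymbol{\rho}(Z_t)$ is pure for the $\delta$-term and collapsing $\EEs{\ba\sim\bw_t}{\sum_j\ps{Z_t}{\phi^j(\ba)}}=C_t(\bw_t)$ via \eqref{def:generalutility} for the $\mu$-term, this gives
$$\sum_{j\in\agents}c^j(\rho^j(Z_t),\bw^{-j}_t,Z_t)\leq\delta\sum_{j\in\agents}c^j(\boldsymbol{\rho}(Z_t),Z_t)+\mu\,C_t(\bw_t)\eqsp.$$

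Combining the two displays, summing over $t\in\timesteps$, and using $\sum_{t}\sum_{j}c^j(\boldsymbol{\rho}(Z_t),Z_t)=TC^\star$ yields $(1-\mu)\sum_{t}C_t(\bw_t)\leq\delta\,TC^\star+\sum_{j}\regret^j_T$; dividing by $(1-\mu)T$ (which uses the implicit assumption $\mu<1$, needed for $\gamma=\delta(1-\mu)^{-1}$ to be well defined) gives the claim. I do not expect a genuine obstacle: the only delicate point is the bookkeeping in the second step — rewriting each deviation cost as an expectation over the full product measure $\bw_t$ so that the pure-profile smoothness inequality can be applied fiberwise and then integrated — while the rest is the textbook price-of-anarchy computation.
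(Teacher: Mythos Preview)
Your proof is correct and follows essentially the same route as the paper's: introduce an optimal pure policy $\boldsymbol{\rho}$, use the definition of $\regret^j_T$ (together with the optimality of $\pi^j_\star$ over the fixed deviation $\rho^j(Z_t)$) to replace the played strategy by the deviation cost, then apply \Cref{assumption:smoothness} inside the expectation over $\bw_t$ and rearrange. The only difference is expository --- you spell out the product-measure bookkeeping and the implicit condition $\mu<1$ that the paper leaves tacit.
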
The proof of \Cref{proposition:averagewelfareoptimalwelfare} can be found in \Cref{appendix:proofs}. In particular, when $\sum_{j\in\agents}\regret^j _T = o (T)$, the average social cost is guaranteed to converge to a fraction of the optimal one. Therefore, bounding $\sum_{j\in\agents}\regret^j _T$ will be our second objective.  
\section{Prediction-aware learning.}\label{section:predictionaware}\paragraph{Algorithm.}In the non-contextual case, the optimistic Multiplicative Weight Update (\texttt{OMWU}) algorithm has proven particularly effective for controlling individual and social regrets in uncoupled multiplayer games. We propose below the predictive-\texttt{OMWU} algorithm, abbreviated \texttt{POMWU}, which is an extension of \texttt{OMWU} to our framework. Broadly speaking, \texttt{POMWU} maintains one \texttt{OMWU} instance per context. At the beginning of each round, agents predict the context and use the corresponding \texttt{OMWU} to play. Once the actual state of nature has been revealed, they update the algorithm based on the cost feedback for future rounds. The pseudo-code of \texttt{POMWU} is displayed in \Cref{algorithm:ohedge}. 
\begin{algorithm}[!ht]
\caption{Optimistic MWU with predicted contexts (\texttt{POMWU}) for agent $j\in\agents$.}
\label{algorithm:ohedge}
\begin{algorithmic}[1]
\STATE Initialize $\rho_{z_{1}}=\ldots =\rho_{z_{m}}= (K^{-1},\ldots, K^{-1})$ and $\Psi_{z_1} = \ldots = \Psi_{z_m}=\boldsymbol{0}_{d\times K}$.
\FOR{each $t \in\timesteps$}
\STATE Predict $\hZ^j_t \in\cZ$, set $M^j _t = \Psi_{\hZ^j _t}$ and $g^j_t = \rho_{\hZ ^j _t}$.
\STATE Play $w^j _t \in\Delta_K$ where for each $\ell\in\iint{1}{K}$,
$$
w^j _ t[\ell] = \frac{g^j_t [\ell] \exp(-\eta M^{j}_t [\ell] \hZ^j _t)}{\sum_{k\in[K]}g^j _t [k] \exp(-\eta M^{j}_t [k] \hZ^j _t )}
$$
\STATE Observe $Z_t \in\R^d$ and $\Phi^{j}(\bw^{-j}_t)$.
\STATE Update $\Psi_{Z_t}\leftarrow \Phi^{j}(\bw^{-j}_{t})$
\STATE Update $\rho_{Z_t} \leftarrow \rho_{Z_t}.\exp (-\eta \Phi^{j} (\bw^{-j}_t)^{\top}Z_t)\eqsp.$
\ENDFOR
\end{algorithmic}
\end{algorithm}
\paragraph{RVU analysis.}The \textit{Regret bounded by Variation in Utility} (RVU) bound \citep{syrgkanis2015} is an effective method for deriving guarantees regarding individual regrets in multiplayer games. Intuitively, this approach ties players' utilities to the variation in observed utilities and played strategies. The RVU ensures that players' regrets remain low when they  efficiently adapt their strategy to counter the variation in payoffs. In the following key lemma, we establish a \textit{contextual} RVU bound adapted to our framework. In what follows, we write $\scrT^z = \defEnsLigne{t^z _1 , \ldots, t^z _{n_z}}$ for any $z\in\cZ$, and  $L^j _T = \sum_{t\in\timesteps}\2{\hZ^j _t \ne Z_t}$ the total number of mis-predictions made by agent $j\in\agents$ throughout of the game. 
\begin{restatable}{proposition}{propcontextualrvu}\label{prop:contextualrvu}
        Assume \Cref{assumption:boundedutility} and \Cref{assumption:finitecontext}
. Any $j\in\agents$ applying \Cref{algorithm:ohedge} with learning rate $\eta > 0$ has an external regret bounded as follows:
        \begin{align*}
                  &\regret^j _T \leq  \frac{(5+\ln (K))L^j _T + m\ln (K)}{\eta} \\
       &+ \eta\parenthese{\sum_{z\in\cZ}\sum_{i\leq n_z}\norminf{\parenthese{\Phi^{j}(\bw^{-j}_{t^z _i}) - \Phi^{j}(\bw^{-j}_{t^z _{i-1}})}^{\top}z}^2 + 4L^j _T} \\
       &- \frac{1}{16\eta}\sum_{z\in\cZ}\sum_{i\leq n_z}\normone{w^{j}_{t^z _i} - w^{j}_{t^z _{i-1}}}^2\eqsp. 
        \end{align*}
\end{restatable}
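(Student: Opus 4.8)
The plan is to decompose the contextual external regret along the partition $\timesteps = \bigsqcup_{z\in\cZ}\scrT^z$ induced by the realized states of nature, and then to handle each context block separately. Write $\regret^j_T = \sum_{z\in\cZ}\sum_{i\le n_z}\bigl[c^j(w^j_{t^z_i},\bw^{-j}_{t^z_i},Z_{t^z_i}) - c^j(\pi^j_\star(z),\bw^{-j}_{t^z_i},Z_{t^z_i})\bigr]$. A subtle point, which I would address first, is that at time $t^z_i$ the strategy $w^j_{t^z_i}$ actually played by \texttt{POMWU} is \emph{not} necessarily the output of the $z$-indexed \texttt{OMWU} instance: it is the output of the $\hZ^j_{t^z_i}$-indexed instance, which coincides with the $z$-instance only when the prediction is correct. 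So I would split each block $\scrT^z$ into the \emph{well-predicted} rounds ($\hZ^j_t = z$) and the \emph{mis-predicted} rounds ($\hZ^j_t\ne z$). On mis-predicted rounds I simply bound the per-round regret contribution by a constant using \Cref{assumption:boundedutility} (each cost is in $[-1,1]$, so the instantaneous regret is at most $2$); summing over all contexts this contributes $O(L^j_T/\eta)$-type and $O(\eta L^j_T)$-type slack terms, which is exactly where the $L^j_T$ terms in the statement come from. The $(5+\ln K)L^j_T$, $4L^j_T$ terms are the accounting for these "bad" rounds plus the cost of the $z$-instance's internal state being updated out of order (see below).

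Next, for the well-predicted rounds within a fixed context $z$, I would run the standard optimistic-MWU / RVU argument of \citet{syrgkanis2015} (and its refinement in \citet{daskalakis2021}) \emph{on the subsequence of feedbacks seen by the $z$-instance}. The key observations making this go through: (i) by \Cref{lemma:rewritingutility}, the cost is linear in $w^j$ with loss vector $\Phi^j(\bw^{-j}_t)^\top Z_t$, so line~7 of \Cref{algorithm:ohedge} is genuinely a Hedge update on these loss vectors; (ii) the optimistic prediction used at round $t^z_i$ is $M^j_{t^z_i}\cdot \hZ^j_{t^z_i} = \Psi_{z}^\top z$, i.e. the loss vector from the \emph{previous} round the $z$-instance was updated, which is $\Phi^j(\bw^{-j}_{t^z_{i-1}})^\top z$ \emph{provided} round $t^z_{i-1}$ was itself processed by the $z$-instance in the expected order; (iii) by \Cref{assumption:boundedutility} all these loss vectors have $\ell_\infty$-norm at most $1$, so the standard range conditions hold. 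The optimistic Hedge RVU inequality then yields, for the $z$-block, a bound of the form $\tfrac{\ln K}{\eta} + \eta\sum_i \norminf{(\Phi^j(\bw^{-j}_{t^z_i}) - \Phi^j(\bw^{-j}_{t^z_{i-1}}))^\top z}^2 - \tfrac{1}{16\eta}\sum_i \normone{w^j_{t^z_i} - w^j_{t^z_{i-1}}}^2$, where the indices $t^z_{i-1}$ denote the predecessor \emph{within} $\scrT^z$. Summing over $z\in\cZ$ produces the $m\ln(K)/\eta$ term and the two sums over $\cZ$ in the statement; the $\ln(K)L^j_T/\eta$ correction absorbs the discrepancy between "previous round in $\scrT^z$" and "previous round that actually updated the $z$-instance" caused by mis-predictions that nonetheless triggered an update to the wrong/right buffers, plus the entropy-resetting cost incurred when an \texttt{OMWU} instance effectively restarts because its internal prediction $\Psi_z$ was stale.

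The main obstacle I anticipate is precisely the bookkeeping of step (ii)--(iii): the $z$-instance's optimistic prediction $\Psi_z$ is overwritten (line~6) every time $Z_t = z$, regardless of whether the prediction $\hZ^j_t$ was correct, while the strategy $g_z = \rho_z$ is updated (line~7) on the same rounds; but the strategy is \emph{used to play} only on rounds where $\hZ^j_t = z$. Reconciling "the sequence of states the $z$-instance passes through" with "the subsequence of rounds on which that state is used to play" — and showing that the extra telescoping error terms this mismatch creates are all $O(L^j_T)$ in aggregate (because each such mismatch event is charged to a distinct mis-prediction) — is the delicate part. I would make this precise by a charging argument: each round $t$ with $\hZ^j_t \ne Z_t$ is charged $O(1/\eta)$ for corrupting at most one future optimistic prediction (it updates $\Psi_{Z_t}$ "in the wrong place" relative to the play sequence) and $O(1/\eta)$ for the play on that round being governed by a stale instance; since the charges are to distinct rounds, the total is $O(L^j_T/\eta)$, and likewise $O(\eta L^j_T)$ for the variation terms, matching the claimed $(5+\ln K)L^j_T/\eta + 4\eta L^j_T$. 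The negative movement term $-\tfrac1{16\eta}\sum_{z}\sum_i\normone{w^j_{t^z_i}-w^j_{t^z_{i-1}}}^2$ survives intact because it is a sum of nonnegative quantities and we only ever \emph{drop} terms from it when reindexing, never add.
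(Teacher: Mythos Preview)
Your high-level plan is right in spirit, but the route differs from the paper's and, as sketched, contains a concrete gap. The paper does \emph{not} split $\scrT^z$ into well- and mis-predicted rounds and run a vanilla RVU on the former; instead it rewrites \texttt{POMWU} as optimistic mirror descent (\Cref{lemma:ohedegeequivalentomd}) and applies the three-term decomposition $\ps{\Phi^j(\bw^{-j}_t)^\top z}{w^j_t-\pi^j_\star(z)}=(a)+(b)+(c)$ to \emph{every} $t\in\scrT^z$, good or bad. The mis-prediction corrections then emerge inside the bounds on the four resulting sums: the identities $M^j_{z,i}=\Phi^j_{z,i-1}$ and $g^j_{z,i}=\tilde\rho^j_{z,i-1}$ both fail exactly when $\hZ^j_{z,i}\ne z$, and an extra term $\eta^{-1}\ps{M^{j\top}_t(z-\hZ^j_t)}{w^j_t-\tilde\rho_t}$ appears in the bound on $(b)$ because $w^j_t$ minimizes against $\hZ^j_t$, not $z$. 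Each such defect is bounded by a constant and occurs $\ell^j_T(z)=\sum_{t\in\scrT^z}\2{\hZ^j_t\ne z}$ times, giving precisely the $(5+\ln K)L^j_T/\eta+4\eta L^j_T$ after summing over $z$.

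The concrete gap in your sketch is the handling of the negative movement term. You claim it ``survives intact because we only ever drop terms when reindexing,'' but the direction is backwards: the statement's negative sum ranges over \emph{all} $i\le n_z$, whereas a well-predicted-only RVU produces strictly fewer terms (and with a different predecessor structure, since the $z$-instance's state $(\rho_z,\Psi_z)$ is updated on all of $\scrT^z$, so the well-predicted plays are not even contiguous optimistic-Hedge iterates). Passing from fewer negative terms to more \emph{tightens} the upper bound and is not free --- you would have to pay an extra $\tfrac{1}{16\eta}\sum_{\text{bad }i}\normone{w^j_{t^z_i}-w^j_{t^z_{i-1}}}^2$ on the positive side to add them back. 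The paper obtains the full-range negative term legitimately by lower-bounding $\sum_{t\in\scrT^z}(\normone{w^j_t-\tilde\rho_t}^2+\normone{w^j_t-g^j_t}^2)$ via a triangle inequality through $g^j_{z,i}$ and $\tilde\rho^j_{z,i-1}$, paying the $\ell^j_T(z)/\eta$ defect precisely on the rounds where $g^j_{z,i}\ne\tilde\rho^j_{z,i-1}$.
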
Contrary to the classic RVU approach \citep{syrgkanis2015}, the bound in \Cref{prop:contextualrvu} depends on the lengths of the \textit{context-specific} paths $\Phi^j (\bw^{-j}_{t^z _1}),\ldots,\Phi^j (\bw^{-j}_{t^z _{n_z}})$ and $w^j _{t^z _1},\ldots, w^j _{t^z _{n_z}}$. The need for this new contextual RVU stems from the fact that players may mispredict states of nature at different periods, preventing the naive use of a classic RVU, see \Cref{appendix:proofs} for more details. Note that in its current form, \Cref{prop:contextualrvu} holds for any arbitrary sequence of strategies by other agents, and does not provide an explicit bound for individual regrets.
\begin{remark}[continues=remark:supervisedlearning]
It is possible to quantify $L^j _T$ under \Cref{assumption:finitecontext} when agents use an online algorithm for predicting $(Z_t)_{t\in\timesteps}$. Indeed, this boils down to multiclass online classification problem, for which bounds on $L^j _T$ have been established by \citet{daniely2014}. Assume that $\cG$ has a finite Littlestone dimension $\text{dim}_{\mathscr{L}}(\cG)<\infty$ \citep{littlestone1988}. In the realizable case, that is when for every $j\in\agents$, there exists $g_j ^\star \in\cG$ such that $Z_t = g_j^\star (X^j _t)$ for any $t\in\timesteps$, there exists an online algorithm $g^j _t : \cX\rightarrow \cZ$ such that $L^j _T = \sum_{t\in\timesteps}\2{g^j_t (X^j_t) \ne Z_t}$ satisfies:
\begin{equation}
    \label{eq:danielyone}
    L^j _t \leq \text{dim}_{\mathscr{L}}(\cG)\eqsp.
\end{equation}
In the agnostic case, denoting $L^{\star j}_T = \min_{g_j\in\cG}\sum_{t=1}^{T}\2{g_j (X^j _t)\ne Z_t}$, there exists an algorithm such that 
\begin{equation}
    \label{eq:danielytwo}
    L^j _T \leq L^{\star j}_T + \sqrt{\frac{1}{2}\text{dim}_{\mathscr{L}}(\cG)T\ln (Tm)}\eqsp.
\end{equation}
The algorithms leading to \eqref{eq:danielyone} and \eqref{eq:danielytwo}, namely \Cref{algorithm:daniely1} and \Cref{algorithm:daniely2}, are both recalled in \Cref{appendix:algos}.
\end{remark}
\paragraph{Equilibrium.}Equipped with \Cref{prop:contextualrvu}, we first focus on convergence to equilibrium. As discussed in \Cref{section:model}, this only requires bounding individual regrets. The following proposition, which follows from \Cref{prop:contextualrvu}, establishes this bound.

\begin{restatable}{proposition}{propconvergenceindividualutility}\label{prop:boundindividualregret}
    Define $\overline{L}_T = \max_{j\in\agents}L^j _T$ and assume \Cref{assumption:boundedutility} and   \Cref{assumption:finitecontext}. If all agents use \Cref{algorithm:ohedge} with a learning rate $\eta > 0$, then for any $j\in\agents$:
    \begin{align*}
        \regret^j _T &\leq\frac{(5+\ln (K))\overline{L}_T + m\ln (K)}{\eta} \\
        &\quad+ \eta \parentheseDeux{(J-1)^2 (9T\eta^2 + 4\overline{L}_T) + 4 \overline{L}_T }\eqsp.
    \end{align*}
    In particular if $T=\Omega (J^2 \overline{L}_T)$, setting $\eta^\star =\Theta( J^{-1/2}T^{-1/4}[\ln (K) (\overline{L}_T +m)]^{1/4})$ leads to:
    $$
\regret^j _T = \bigO (\,[\ln (K) (\overline{L}_T + m) ] ^{3/4} T^{1/4}J^{1/2}\,)\eqsp.
    $$
\end{restatable}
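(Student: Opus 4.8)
The plan is to begin from the contextual RVU bound of \Cref{prop:contextualrvu}, which holds for an \emph{arbitrary} sequence of opponents' plays. Since its last term is non-positive, I would discard it, so it only remains to bound the variation term $\sum_{z\in\cZ}\sum_{i\le n_z}\norminf{\parenthese{\Phi^{j}(\bw^{-j}_{t^z_i})-\Phi^{j}(\bw^{-j}_{t^z_{i-1}})}^{\top}z}^2$ by a quantity that depends only on how much the \emph{other} players' mixed strategies move along each context-specific timeline, plus the misprediction counts $L^k_T$. Substituting this back into \Cref{prop:contextualrvu}, bounding every $L^k_T$ (and $L^j_T$) by $\overline{L}_T$, and finally tuning $\eta$ then yields both bounds.

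\emph{Step 1 (from payoff variation to opponents' strategy variation).} By \Cref{lemma:rewritingutility}, $(\Phi^{j}(\bw^{-j})^{\top}z)[\ell]=\EEs{\ba^{-j}\sim\bw^{-j}}{\ps{\phi^{j}(a^j_\ell,\ba^{-j})}{z}}$, which under \Cref{assumption:boundedutility} is an expectation of a quantity bounded by $1$ in absolute value and is multilinear in the factors $(w^k)_{k\ne j}$ of the product measure $\bw^{-j}$. Telescoping over the coordinates $k\ne j$ and using that replacing a single factor $w^k$ of a product measure shifts such an expectation by at most $\normone{w^k_t-w^k_s}$, one obtains, for all $s,t\in\timesteps$ and $z\in\cZ$,
$$\norminf{\parenthese{\Phi^{j}(\bw^{-j}_t)-\Phi^{j}(\bw^{-j}_s)}^{\top}z}\le\sum_{k\ne j}\normone{w^k_t-w^k_s}\eqsp.$$
Squaring and applying Cauchy--Schwarz contributes a factor $(J-1)$, so the variation term of \Cref{prop:contextualrvu} is at most $(J-1)\sum_{k\ne j}\sum_{z\in\cZ}\sum_{i\le n_z}\normone{w^k_{t^z_i}-w^k_{t^z_{i-1}}}^2$.

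\emph{Step 2 (controlling each opponent's context-specific path).} Fix $k\ne j$ and $z\in\cZ$. The crucial structural fact is that \Cref{algorithm:ohedge} keeps one optimistic-MWU instance per context: player $k$'s context-$z$ instance is \emph{updated} exactly at the times of $\scrT^z$, and since $t^z_{i-1}$ and $t^z_i$ are consecutive elements of $\scrT^z$ no update of it occurs strictly between them; it is only \emph{queried} at $t\in\scrT^z$ when $\hZ^k_t=z$. Hence, if player $k$ predicts $z$ correctly at both $t^z_{i-1}$ and $t^z_i$, the strategies $w^k_{t^z_{i-1}}$ and $w^k_{t^z_i}$ are two \emph{consecutive} iterates of the same optimistic-MWU run, whose losses and optimistic hints are vectors of the form $\Phi^{k}(\bw^{-k}_{t^z_s})^{\top}z$ with entries in $[-1,1]$ by \Cref{assumption:boundedutility}; the stability of optimistic MWU then bounds $\normone{w^k_{t^z_i}-w^k_{t^z_{i-1}}}$ by $\bigO(\eta)$, so this pair contributes $\bigO(\eta^2)$, and there are at most $\sum_{z}n_z=T$ such pairs. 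Otherwise a misprediction of player $k$ occurred at $t^z_{i-1}$ or $t^z_i$, so one of the two strategies is the output of a \emph{different} context's instance; bounding such a gap by $\normone{\cdot}\le 2$ and observing that each misprediction of player $k$ touches only $\bigO(1)$ pairs, the aggregate contribution of those pairs is $\bigO(L^k_T)$. Combining, $\sum_{z\in\cZ}\sum_{i\le n_z}\normone{w^k_{t^z_i}-w^k_{t^z_{i-1}}}^2=\bigO(\eta^2 T+L^k_T)$ --- with the explicit constants $9T\eta^2+4L^k_T$ after carefully tracking the optimistic-MWU stability constant and the number of pairs each misprediction affects.

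\emph{Step 3 (assembling and tuning $\eta$).} Plugging Steps 1--2 into \Cref{prop:contextualrvu}, discarding the non-positive term and replacing each $L^k_T$ and $L^j_T$ by $\overline{L}_T$, gives exactly $\regret^j_T\le\frac{(5+\ln K)\overline{L}_T+m\ln K}{\eta}+\eta\big[(J-1)^2(9T\eta^2+4\overline{L}_T)+4\overline{L}_T\big]$, the first claim. Writing the right-hand side as $C_1/\eta+C_2\eta+C_3\eta^3$ with $C_1=\Theta(\ln K\,(\overline{L}_T+m))$, $C_2=\Theta(J^2\overline{L}_T)$, $C_3=\Theta(J^2 T)$, balancing the first and last terms gives $\eta^\star=\Theta\big((C_1/C_3)^{1/4}\big)=\Theta\big(J^{-1/2}T^{-1/4}[\ln K\,(\overline{L}_T+m)]^{1/4}\big)$, under which both equal $\Theta\big([\ln K\,(\overline{L}_T+m)]^{3/4}T^{1/4}J^{1/2}\big)$; the middle term $C_2\eta^\star$ is dominated by the others whenever $T=\Omega(J^2\overline{L}_T)$, which is the stated regime, so $\regret^j_T=\bigO\big([\ln K\,(\overline{L}_T+m)]^{3/4}T^{1/4}J^{1/2}\big)$. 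The main obstacle is Step 2: because each context's optimistic-MWU instance advances on the \emph{true}-context timeline yet is queried on the \emph{predicted}-context timeline, one must argue that consecutive queries on $\scrT^z$ really are consecutive optimistic-MWU iterates whenever predictions are correct, and charge the ``instance-switching'' cost of mispredictions without double counting --- exactly the difficulty that the contextual RVU of \Cref{prop:contextualrvu} is engineered to resolve, and where a naive reduction to the classical RVU would fail.
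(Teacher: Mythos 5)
Your proposal is correct and follows essentially the same route as the paper: the total-variation argument reducing payoff variation to $(J-1)\sum_{k\ne j}\normone{w^k_{t^z_i}-w^k_{t^z_{i-1}}}^2$ is exactly the paper's first step, your Step 2 is precisely the content of \Cref{lemma:stabilityoftlr} (per-context optimistic-FTRL stability, $\bigO(\eta)$ under correct predictions and $\bigO(1)$ otherwise), and your tuning of $\eta$ by balancing the $1/\eta$ and $\eta^3$ terms while checking the $\eta$ term is dominated under $T=\Omega(J^2\overline{L}_T)$ matches the paper's optimization of $a/\eta+b\eta^3+c\eta$. The only (immaterial) divergence is the accounting of mispredicted rounds: your charge of both endpoints of a pair gives constant $8L^k_T$ rather than the paper's $4L^k_T$, a constant-factor difference that leaves both stated bounds intact (and mirrors the paper's own counting in \Cref{prop:imrovementregretbound}).
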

In the realizable case of \Cref{remark:supervisedlearning} where $\overline{L}_T = \bigO_{T}(1)$, we recover the result $\regret^j _T = \bigO (T^{1/4}J^{1/2})$ from \citet{syrgkanis2015}. We also observe that setting the learning rate to $\eta^\star$ requires agents to know $\overline{L}^j _T$. This is reasonable if they use the same hypothesis class, since uniform bounds on $\overline{L}_T$ are known (see e.g., \Cref{remark:supervisedlearning}). 
\begin{remark}[continues=remark:supervisedlearning]
Recently, collaborative and federated learning has emerged as a topic of prime importance in Machine learning \citep{blum2017collaborative, kairouz2021advances}. One may wonder whether agents sharing a common model, so $\hZ^j _t = \hZ_t \in\cZ$ for any $j\in\agents$, may improve \Cref{prop:boundindividualregret}. Indeed, even though agents play uncoupled strategies, policies $\pi^j _t (\hZ_t)$ are implicitly coordinated as they rely on a same signal. We show in \Cref{prop:imrovementregretbound} in \Cref{appendix:proofs} that in this case, we can drop the assumption $T=\Omega(J^2 \overline{L}_T)$ and still recover the guarantee of \Cref{prop:boundindividualregret} by a direct improvement of the proof. Studying the  impacts of collaborative learning in games more broadly is an interesting topic for future research. 
\end{remark}

It is now possible to obtain an explicit convergence rate to coarse-correlated equilibrium by combining \Cref{proposition:noregretcoarsecorrelatedequilibrium} with \Cref{prop:boundindividualregret}.
\begin{corollary}
    \label{corollary:convergenceeq}
    Assume \Cref{assumption:boundedutility},  \Cref{assumption:finitecontext} and $T=\Omega (J^2 \overline{L}_T)$. If all agents use \Cref{algorithm:ohedge} with $\eta^\star >0$ as defined in \Cref{prop:boundindividualregret}, then $\hat{\bldnu}_T$ (as defined as in \Cref{proposition:noregretcoarsecorrelatedequilibrium}) is an $\varepsilon$-coarse correlated equilibrium, with $$\varepsilon =  \bigO (\,[\ln (K) (\overline{L}_T + m) ] ^{3/4} T^{-3/4}J^{1/2}\,)\eqsp.$$
\end{corollary}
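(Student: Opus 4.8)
The plan is to combine the two results the corollary explicitly references. From \Cref{prop:boundindividualregret}, under \Cref{assumption:boundedutility}, \Cref{assumption:finitecontext}, and the regime assumption $T = \Omega(J^2 \overline{L}_T)$, choosing the learning rate $\eta^\star = \Theta(J^{-1/2} T^{-1/4}[\ln(K)(\overline{L}_T + m)]^{1/4})$ gives, for every agent $j \in \agents$,
$$
\regret^j_T = \bigO\parenthese{[\ln(K)(\overline{L}_T + m)]^{3/4} T^{1/4} J^{1/2}}\eqsp.
$$
Since this bound is uniform in $j$, it also bounds $\max_{j \in \agents} \regret^j_T$.

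Next I would invoke \Cref{proposition:noregretcoarsecorrelatedequilibrium}: since all agents run \Cref{algorithm:ohedge}, which is a valid policy in $\Pi^j$ incurring external regret $\regret^j_T$, the empirical average joint policy $\hat{\bldnu}_T$ defined there is an $\varepsilon$-contextual coarse correlated equilibrium with $\varepsilon = \max_{j \in \agents} T^{-1} \regret^j_T$. Substituting the bound above,
$$
\varepsilon = \max_{j\in\agents} T^{-1}\regret^j_T = \bigO\parenthese{[\ln(K)(\overline{L}_T + m)]^{3/4} T^{1/4} J^{1/2} \cdot T^{-1}} = \bigO\parenthese{[\ln(K)(\overline{L}_T + m)]^{3/4} T^{-3/4} J^{1/2}}\eqsp,
$$
which is exactly the claimed rate. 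The proof is therefore a one-line composition of the two cited results, and there is essentially no obstacle: all the technical work (establishing the contextual RVU bound of \Cref{prop:contextualrvu}, tuning $\eta$, and the regret-to-equilibrium conversion) has already been done upstream. The only points to check are that the hypotheses of both propositions are simultaneously satisfied — which holds by assumption — and that $\eta^\star$ is admissible, i.e. that the agents can set it, which was already noted to be reasonable when they share a hypothesis class with a known uniform bound on $\overline{L}_T$ (as in \Cref{remark:supervisedlearning}). I would close by remarking that, as in the analogous static-game statement of \citet{syrgkanis2015}, the rate degrades to the familiar $\bigO(T^{-3/4})$ convergence to coarse correlated equilibrium whenever the prediction errors are bounded, $\overline{L}_T = \bigO_T(1)$.
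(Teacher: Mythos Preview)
Your proposal is correct and matches the paper's approach exactly: the corollary is presented as an immediate consequence of combining \Cref{proposition:noregretcoarsecorrelatedequilibrium} with \Cref{prop:boundindividualregret}, and the paper gives no further argument beyond stating this. Your write-up is, if anything, more detailed than the paper's own treatment.
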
In addition, it is possible to obtain a convergence rate for the more demanding contextual correlated equilibrium concept defined in \Cref{definition:correlatedeq}. As a matter of fact, the reduction described in \Cref{algorithm:blummansour} allows to transform \texttt{POMWU}, which enjoys the guarantee on $\regret^j _T$ presented in \Cref{prop:boundindividualregret}, into an algorithm $\bar{\pi}^j$ with a guarantee on $\uregret^j _T$---see \Cref{proposition:externaltoswapregret}. Then, applying \Cref{proposition:noregretcorrelatedequilibrium} leads to the following corollary.
\begin{corollary}    \label{corollary:convergenceeq}
    Assume \Cref{assumption:boundedutility},  \Cref{assumption:finitecontext} and $T=\Omega (J^2 \overline{L}_T)$. If all agents use \Cref{algorithm:ohedge} in conjonction with \Cref{algorithm:blummansour} and  $\eta^\star >0$ as in \Cref{prop:boundindividualregret}, then $\hat{\bldnu}_T$  is an $\bar{\varepsilon}$-correlated equilibrium, with $$\bar{\varepsilon} =  \bigO (\,[K\ln (K) (\overline{L}_T + m) ] ^{3/4} T^{-3/4}J^{1/2}\,)\eqsp.$$
\end{corollary}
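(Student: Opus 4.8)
The plan is to chain three results already established in the paper: the external‑regret guarantee for \texttt{POMWU} (\Cref{prop:boundindividualregret}), the external‑to‑swap reduction of \Cref{algorithm:blummansour} (\Cref{proposition:externaltoswapregret}), and the correspondence between sublinear contextual swap regret and $\varepsilon$‑contextual correlated equilibrium (\Cref{proposition:noregretcorrelatedequilibrium}). Concretely, under \Cref{assumption:boundedutility}, \Cref{assumption:finitecontext} and $T=\Omega(J^2\overline{L}_T)$, \Cref{prop:boundindividualregret} applies with $\eta=\eta^\star$ and yields, for every $j\in\agents$ running \texttt{POMWU}, $\regret^j_T\le f(J,T,K,m)$ with $f(J,T,K,m)=\bigO([\ln(K)(\overline{L}_T+m)]^{3/4}T^{1/4}J^{1/2})$. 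Wrapping \texttt{POMWU} in \Cref{algorithm:blummansour} produces the policies $\bar\pi^j$ whose \emph{played} strategies $w^j_t=\bar\pi^j_t(\hZ^j_t)$ are exactly the ones entering the definition of $\hat\bldnu_T$; by \Cref{proposition:externaltoswapregret} they satisfy $\uregret^j_T\le K\,f(J,T,K,m)$. Finally, \Cref{proposition:noregretcorrelatedequilibrium} gives that $\hat\bldnu_T$ is an $\bar\varepsilon$‑contextual correlated equilibrium with $\bar\varepsilon=\max_{j\in\agents}T^{-1}\uregret^j_T$, so this route already delivers $\bar\varepsilon=\bigO(K[\ln(K)(\overline{L}_T+m)]^{3/4}T^{-3/4}J^{1/2})$.

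To reach the sharper $[K\ln(K)(\overline{L}_T+m)]^{3/4}$ dependence claimed in the statement (which is strictly better than the crude factor $K$ above), I would not use \Cref{proposition:externaltoswapregret} as a black box but revisit the reduction for the specific case of \texttt{POMWU}. The reduction maintains $K$ internal copies of \texttt{POMWU} for agent $j$, the $k$‑th copy receiving at round $t$ the loss vector scaled by $w^j_t[k]$, so its effective horizon is $n^j_k=\sum_{t\in\timesteps}w^j_t[k]$ with $\sum_{k=1}^K n^j_k=T$. Running the contextual RVU argument of \Cref{prop:contextualrvu} on each copy with learning rate $\eta^\star$ shows that copy $k$ contributes a term of order $[\ln(K)(\overline{L}_T+m)]^{3/4}(n^j_k)^{1/4}J^{1/2}$ to $\uregret^j_T$, plus lower‑order terms coming from the misprediction and $m\ln(K)$ contributions. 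Summing over $k$ and using concavity of $x\mapsto x^{1/4}$, $\sum_{k=1}^K(n^j_k)^{1/4}\le K\,(T/K)^{1/4}=K^{3/4}T^{1/4}$, which yields $\uregret^j_T=\bigO([K\ln(K)(\overline{L}_T+m)]^{3/4}T^{1/4}J^{1/2})$; dividing by $T$ and invoking \Cref{proposition:noregretcorrelatedequilibrium} gives the corollary.

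\textbf{Main obstacle.} The delicate part is the accounting inside the reduction, and there are two points to control. First, \Cref{prop:contextualrvu} and \Cref{prop:boundindividualregret} were derived assuming every agent plays \texttt{POMWU}, whereas here each agent plays the \emph{stationary distribution} of its own reduction; hence the opponent‑stability estimate $\sum_z\sum_i\norminf{(\Phi^{j}(\bw^{-j}_{t^z_i})-\Phi^{j}(\bw^{-j}_{t^z_{i-1}}))^{\top}z}^2=\bigO((J-1)^2T\eta^2)$ used to pass from \Cref{prop:contextualrvu} to \Cref{prop:boundindividualregret} must be re‑established for these stationary distributions, i.e.\ one must show that slow round‑to‑round changes in the internal copies propagate to slow changes in the stationary distribution — precisely the mechanism underlying the proof of \Cref{proposition:externaltoswapregret}. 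Second, the misprediction‑dependent and $m\ln(K)$‑dependent terms are a priori replicated across the $K$ internal copies; one has to check they remain subdominant, which holds because all copies of a given agent share the same predicted context $\hZ^j_t$ at each round (so $L^j_T$ is not multiplied) and because $T=\Omega(J^2\overline{L}_T)$ keeps these terms below the leading $T^{1/4}$ order. Once these two points are settled, the corollary follows from the chain of substitutions above.
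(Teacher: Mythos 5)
Your first paragraph is precisely the paper's own argument: the paper gives no separate proof of this corollary beyond chaining \Cref{prop:boundindividualregret} (external regret $\bigO([\ln(K)(\overline{L}_T+m)]^{3/4}T^{1/4}J^{1/2})$ under $T=\Omega(J^2\overline{L}_T)$ with $\eta^\star$), \Cref{proposition:externaltoswapregret} (multiply by $K$), and \Cref{proposition:noregretcorrelatedequilibrium} (divide by $T$). You are also right that this chain only yields $\bar\varepsilon=\bigO(K[\ln(K)(\overline{L}_T+m)]^{3/4}T^{-3/4}J^{1/2})$, i.e.\ a factor $K$ outside the $3/4$ power; the displayed bound with $K$ inside the bracket is a $K^{1/4}$-stronger claim that the paper's stated derivation does not itself substantiate. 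So on the part that matches the paper, your proposal is correct and identical in route.

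The refinement you sketch to recover $K^{3/4}$ is not in the paper, and as written it has a concrete flaw. In the per-copy application of \Cref{prop:contextualrvu}, the term $\frac{(5+\ln K)L^j_T+m\ln K}{\eta}$ does not scale with the copy weights $w^j_t[k]$: the Bregman-range bound $D_{\cR}\le\ln K$ and the misprediction penalties are independent of the magnitude of the fed losses, so each of the $K$ internal copies of \Cref{algorithm:blummansour} incurs this term in full. Summing over copies with the fixed $\eta^\star$ of \Cref{prop:boundindividualregret} therefore gives $K\cdot[\ln K(\overline{L}_T+m)]^{3/4}T^{1/4}J^{1/2}$ from this term alone, which already dominates and restores the naive factor $K$; your claim that the $L^j_T$ and $m\ln K$ contributions "are not multiplied" because the copies share $\hZ^j_t$ is incorrect, and the concavity bound $\sum_k (n^j_k)^{1/4}\le K^{3/4}T^{1/4}$ only helps the horizon-dependent variation terms. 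To genuinely reach $[K\ln K(\overline{L}_T+m)]^{3/4}$ one would have to re-balance the learning rate for the swap objective, e.g.\ $\eta=\Theta(J^{-1/2}T^{-1/4}[K\ln K(\overline{L}_T+m)]^{1/4})$, which is not the $\eta^\star$ prescribed in the corollary, and one would additionally have to re-establish the opponent-stability estimate for the stationary distributions of the reductions — the genuine difficulty you correctly flag in your last paragraph.
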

\paragraph{Social welfare.}
We now turn our attention to social welfare when agents use \texttt{POMWU}. As discussed in \Cref{section:model}, this requires bounding the sum of regrets. A first, naive approach would be to sum the bound of individual regrets obtained in \Cref{prop:boundindividualregret}. However, we show below that another choice of $\eta$  leads to a much better guarantee.

    \begin{restatable}{proposition}{propboundsumregret}\label{prop:boundsumregret}Let $L_T = \sum_{j\in\agents}L^j _T$, and assume \Cref{assumption:boundedutility}, \Cref{assumption:finitecontext}. 
        If all agents use \Cref{algorithm:ohedge} with a learning rate $\eta = (4(J-1))^{-1}$, then
        \begin{align*}
            \sum_{j\in\agents}\regret^j _T &\leq 4J\parentheseDeux{(5+\ln (K))L_T + mJ\ln (K)} + \frac{L_T}{J-1} \\
            &= \bigO (J \ln (K) (L_T + mJ))\eqsp.
        \end{align*}
    \end{restatable}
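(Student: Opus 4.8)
The plan is to start from the contextual RVU bound of \Cref{prop:contextualrvu}, sum it over all agents $j\in\agents$, and then exploit the stability term (the negative $\normone{w^j_{t^z_i}-w^j_{t^z_{i-1}}}^2$ contribution) to cancel the variation term coming from the opponents' cost matrices. This is exactly the multi-agent telescoping argument of \citet{syrgkanis2015}, adapted to the context-by-context decomposition. Concretely, I would write
$$
\sum_{j\in\agents}\regret^j_T \leq \frac{(5+\ln K)L_T + mJ\ln K}{\eta} + \eta\Bigl(\sum_{j\in\agents}\sum_{z\in\cZ}\sum_{i\leq n_z}\norminf{(\Phi^j(\bw^{-j}_{t^z_i})-\Phi^j(\bw^{-j}_{t^z_{i-1}}))^\top z}^2 + 4L_T\Bigr) - \frac{1}{16\eta}\sum_{j\in\agents}\sum_{z\in\cZ}\sum_{i\leq n_z}\normone{w^j_{t^z_i}-w^j_{t^z_{i-1}}}^2.
$$

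The crucial step is to bound $\norminf{(\Phi^j(\bw^{-j}_{t^z_i})-\Phi^j(\bw^{-j}_{t^z_{i-1}}))^\top z}^2$ by the movement of the opponents' strategies \emph{between the same pair of consecutive $z$-timesteps}. Each entry of this vector is, by \Cref{lemma:rewritingutility}, a difference of expected payoffs of the form $\E_{\bw^{-j}_{t^z_i}}[\ps{z}{\phi^j(a^j_k,\ba^{-j})}] - \E_{\bw^{-j}_{t^z_{i-1}}}[\ps{z}{\phi^j(a^j_k,\ba^{-j})}]$, which under \Cref{assumption:boundedutility} is a difference of two product measures of a function bounded by $1$; a standard hybrid/telescoping argument over the $J-1$ opponents gives $\norminf{(\Phi^j(\bw^{-j}_{t^z_i})-\Phi^j(\bw^{-j}_{t^z_{i-1}}))^\top z} \leq \sum_{k\ne j}\normone{w^k_{t^z_i}-w^k_{t^z_{i-1}}}$, hence the square is at most $(J-1)\sum_{k\ne j}\normone{w^k_{t^z_i}-w^k_{t^z_{i-1}}}^2$ by Cauchy--Schwarz. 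Summing over $j$, $z$ and $i\leq n_z$ yields a total of at most $(J-1)^2\sum_{k\in\agents}\sum_{z\in\cZ}\sum_{i\leq n_z}\normone{w^k_{t^z_i}-w^k_{t^z_{i-1}}}^2$. The subtle point — and the main obstacle — is that the strategy differences on the right are indexed by \emph{the same} context-specific consecutive pair $(t^z_{i-1},t^z_i)$ that indexes the left-hand side; this is precisely why the contextual RVU of \Cref{prop:contextualrvu} (with context-specific paths) is needed rather than a global RVU, since the timesteps $t^z_i$ at which agent $j$ actually \emph{uses} its $z$-th MWU instance need not coincide with $\scrT^z$ when $j$ mispredicts — one must check that the $L^j_T$-type error terms already absorbed in \Cref{prop:contextualrvu} cover exactly this discrepancy, so that on the ``good'' pairs the indices genuinely match.

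Once this variation bound is in place, choosing $\eta = (4(J-1))^{-1}$ makes the coefficient $\eta(J-1)^2 = (J-1)/4$ of the summed stability term no larger than the $\tfrac{1}{16\eta}=(J-1)/4$ coefficient of the negative stability term, so the two cancel (or leave a non-positive remainder). What survives is
$$
\sum_{j\in\agents}\regret^j_T \leq 4(J-1)\bigl[(5+\ln K)L_T + mJ\ln K\bigr] + \frac{L_T}{4(J-1)} + (J-1)\cdot 4L_T \cdot \tfrac{1}{4(J-1)}\cdot\text{(const)},
$$
where the last pieces come from the $4L_T$ inside the $\eta(\cdot)$ bracket and the $\eta^{-1}\cdot$ error term; collecting constants and using $4(J-1)\leq 4J$ gives the stated $4J[(5+\ln K)L_T + mJ\ln K] + \tfrac{L_T}{J-1}$, which is $\bigO(J\ln K\,(L_T + mJ))$. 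I would double-check the bookkeeping of the $L_T$ terms — there are three sources ($\eta^{-1}(5+\ln K)L_T$, $\eta\cdot 4L_T$, and the $\eta$-scaling in the cancellation) — to make sure the final constant matches; this is routine but is where an off-by-constant slip is most likely.
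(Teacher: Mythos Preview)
Your proposal is correct and follows essentially the same argument as the paper: sum the contextual RVU of \Cref{prop:contextualrvu} over agents, bound each variation term via the total-variation/hybrid inequality and Cauchy--Schwarz to obtain the $(J-1)^2$ factor, and choose $\eta=(4(J-1))^{-1}$ so that $\eta(J-1)^2=\tfrac{1}{16\eta}$ and the variation and stability sums cancel exactly, leaving $4(J-1)[(5+\ln K)L_T+mJ\ln K]+4\eta L_T\le 4J[\cdots]+\tfrac{L_T}{J-1}$. Your ``subtle point'' is a non-issue here: the index sets $\scrT^z=\{t:Z_t=z\}$ depend only on the \emph{true} context and are identical for all agents, so the per-context paths $(w^k_{t^z_i})_{i}$ appearing in the variation bound for agent $j$ are exactly the stability paths of agent $k$, and the cancellation goes through without any additional bookkeeping beyond what \Cref{prop:contextualrvu} already provides.
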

    Note that in the setting of \Cref{remark:supervisedlearning} under the realizable assumption, $L_T = \bigO_{T}(1)$ and hence we recover the classic result $\sum_{j\in\agents}\regret^j _T = \bigO_{T} (1)$ of \citet{syrgkanis2015} in the static setting. 
    
    The bound in \Cref{prop:boundsumregret} can immediately be converted into a convergence rate of social cost to a fraction of the optimal one via \Cref{proposition:averagewelfareoptimalwelfare}.
    \begin{corollary}\label{corollary:convergencewelfare}Assume \Cref{assumption:boundedutility}, \Cref{assumption:finitecontext} and \Cref{assumption:smoothness}. If 
Assume all agents use \Cref{algorithm:ohedge} with $\eta = (4(J-1))^{-1}$, then 
$$
\frac{1}{T}\sum_{t\in\timesteps}C_t (\bw_t) \leq \gamma C^\star + \bigO(J\ln (K) T^{-1}(L_T + mJ))\eqsp.
$$
    \end{corollary}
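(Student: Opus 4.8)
The plan is to chain the two structural results established just above: the welfare decomposition of \Cref{proposition:averagewelfareoptimalwelfare} and the aggregate regret bound of \Cref{prop:boundsumregret}. First I would invoke \Cref{proposition:averagewelfareoptimalwelfare}, which is valid under \Cref{assumption:smoothness} for \emph{any} sequence of played profiles $(\bw_t)_{t\in\timesteps}$ and in particular for the sequence generated when all agents run \Cref{algorithm:ohedge}. This gives, with $\gamma = \delta(1-\mu)^{-1}$,
$$
\frac{1}{T}\sum_{t\in\timesteps}C_t(\bw_t) \;\leq\; \gamma C^\star + \frac{1}{(1-\mu)T}\sum_{j\in\agents}\regret^j _T \eqsp.
$$
Note that this step is purely deterministic and hence insensitive to whether the states of nature are drawn adversarially or stochastically.

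Next, since the corollary fixes the learning rate to $\eta = (4(J-1))^{-1}$ and assumes \Cref{assumption:boundedutility} and \Cref{assumption:finitecontext}, the hypotheses of \Cref{prop:boundsumregret} are exactly met, so I would substitute its conclusion $\sum_{j\in\agents}\regret^j _T = \bigO(J\ln(K)(L_T + mJ))$ into the previous display. The factor $(1-\mu)^{-1}$ coming from \Cref{assumption:smoothness} is a fixed game-dependent constant, so it is absorbed into the $\bigO(\cdot)$, yielding
$$
\frac{1}{T}\sum_{t\in\timesteps}C_t(\bw_t) \;\leq\; \gamma C^\star + \bigO(J\ln(K)T^{-1}(L_T+mJ))\eqsp,
$$
which is precisely the claimed bound.

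I do not expect any genuine obstacle here: the corollary is a bookkeeping combination of two already-proved statements. The only points deserving a line of justification are (i) that \Cref{prop:boundsumregret} is applied with the specific welfare-oriented rate $\eta = (4(J-1))^{-1}$ rather than the $\eta^\star$ tuned for individual regret in \Cref{prop:boundindividualregret}, and (ii) that $\delta$ and $(1-\mu)^{-1}$ are treated as constants in the asymptotic notation. If a fully explicit bound were wanted, one would simply keep the exact expression $\tfrac{1}{(1-\mu)T}\big(4J[(5+\ln K)L_T + mJ\ln K] + \tfrac{L_T}{J-1}\big)$ in place of the $\bigO(\cdot)$ term.
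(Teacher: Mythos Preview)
Your proposal is correct and matches the paper's approach exactly: the paper states that the corollary follows immediately by plugging the aggregate regret bound of \Cref{prop:boundsumregret} into the welfare inequality of \Cref{proposition:averagewelfareoptimalwelfare}, which is precisely the two-step chaining you describe.
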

\paragraph{Robustness.}Finally, we turn our attention to the adversarial regime where not all agents use \texttt{POMWU}. Specifically, we ask whether the regret of \texttt{POMWU} remains low against any arbitrary sequence of cost feedback. This robustness property is a common desiderata in the literature \citep{syrgkanis2015, foster2016}.

\begin{restatable}{proposition}{robustness}\label{proposition:robustness}
    Assume \Cref{assumption:boundedutility} and \Cref{assumption:finitecontext}. If player $j\in\agents$ uses \Cref{algorithm:ohedge} with $\eta = \Theta ([\ln(K)(L^j _T + m)]^{1/2}(L^j _T +T)^{-1/2})$, then for any sequence $(\bw^{-j}_1 , \ldots, \bw^{-j}_T) \in\mathscr{P}(\cA^{-j})^{T}$:
    $$
    \regret^j _T = \bigO \parenthese{\sqrt{\ln(K)(L _T ^{j } + m)(L^j _T +T)}}\eqsp.
    $$
\end{restatable}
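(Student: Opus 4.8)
The natural starting point is the contextual RVU bound of \Cref{prop:contextualrvu}, which is stated for an \emph{arbitrary} sequence of opponent strategies $(\bw^{-j}_t)_{t\in\timesteps}$ and therefore applies verbatim in the adversarial regime. The key observation is that in the robustness proposition we no longer get to exploit the negative term $-\tfrac{1}{16\eta}\sum_{z}\sum_{i\le n_z}\normone{w^j_{t^z_i}-w^j_{t^z_{i-1}}}^2$ by pairing it against the opponents' variation (that cancellation is what drives the $T^{1/4}$ rate when everyone plays \texttt{POMWU}), so instead I would simply drop this nonpositive term, keeping only the upper bound
\[
\regret^j_T \;\le\; \frac{(5+\ln K)L^j_T + m\ln K}{\eta} \;+\; \eta\Bigl(\sum_{z\in\cZ}\sum_{i\le n_z}\norminf{(\Phi^j(\bw^{-j}_{t^z_i})-\Phi^j(\bw^{-j}_{t^z_{i-1}}))^\top z}^2 + 4L^j_T\Bigr).
\]

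The second step is to crudely bound the variation term. Under \Cref{assumption:boundedutility}, each entry of $\Phi^j(\bw^{-j})^\top z$ is a cost $c^j(a^j_k,\bw^{-j},z)$ bounded in absolute value by $1$, so $\norminf{(\Phi^j(\bw^{-j}_{t^z_i})-\Phi^j(\bw^{-j}_{t^z_{i-1}}))^\top z}\le 2$ and hence $\norminf{\cdot}^2\le 4$ for each of the $\sum_z n_z = T$ summands; this gives $\sum_{z}\sum_{i\le n_z}\norminf{\cdot}^2 \le 4T$. Plugging this in yields
\[
\regret^j_T \;\le\; \frac{(5+\ln K)L^j_T + m\ln K}{\eta} \;+\; 4\eta\,(T + L^j_T).
\]
Since $5+\ln K = \Theta(\ln K)$ and $m\ln K \le (L^j_T+m)\ln K$, the first term is $O\!\left(\ln(K)(L^j_T+m)/\eta\right)$ and the second is $O(\eta(L^j_T+T))$.

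The final step is to optimize over $\eta$: balancing $\ln(K)(L^j_T+m)/\eta$ against $\eta(L^j_T+T)$ gives $\eta = \Theta\!\bigl([\ln(K)(L^j_T+m)]^{1/2}(L^j_T+T)^{-1/2}\bigr)$, which is exactly the choice stated in the proposition, and substituting back produces $\regret^j_T = O\!\bigl(\sqrt{\ln(K)(L^j_T+m)(L^j_T+T)}\bigr)$. I should double-check the typographic point that the proposition's statement writes $(L_T^{j}+m)$ and $(L^j_T+T)$ inside the square root, matching this computation (the $L_T$ without superscript in the final display appears to be a typo for $L^j_T$). There is no real obstacle here: the only mildly delicate point is making sure the $L^j_T$ terms are handled consistently — they appear both additively with $m$ (from the misprediction penalty $\eta^{-1}$ term) and additively with $T$ (from the $4\eta L^j_T$ term), and the stated $\eta$ must be checked to balance the dominant contributions of both, which it does since $L^j_T \le T$ makes $L^j_T+T = \Theta(T)$ whenever $L^j_T = O(T)$, but keeping the explicit $L^j_T+T$ form is harmless and matches the statement. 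The proof is essentially a one-shot application of \Cref{prop:contextualrvu} followed by the trivial variation bound and an AM–GM optimization of the learning rate.
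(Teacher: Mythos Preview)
Your proposal is correct and follows essentially the same approach as the paper: invoke \Cref{prop:contextualrvu}, discard the nonpositive stability term, use \Cref{assumption:boundedutility} to bound each variation summand $\norminf{(\Phi^j(\bw^{-j}_{t^z_i})-\Phi^j(\bw^{-j}_{t^z_{i-1}}))^\top z}^2\le 4$ so the sum is at most $4T$, and then balance $\eta$ via AM--GM. The paper's proof is line-for-line the same idea (it obtains the bound of $4$ via $\norminf{a-b}^2\le 2\norminf{a}^2+2\norminf{b}^2$ rather than your direct $|a_k-b_k|\le 2$, but this is cosmetic).
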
Here again, in the setting of \Cref{remark:supervisedlearning} under realizability, $L^j _T = \bigO_T (1)$ and therefore we recover the guarantee $\regret^j _T = \bigO (\sqrt{T})$.
\section{Experiments.}\label{section:experiments}
\begin{figure*}\centering
\includegraphics[width=\textwidth]{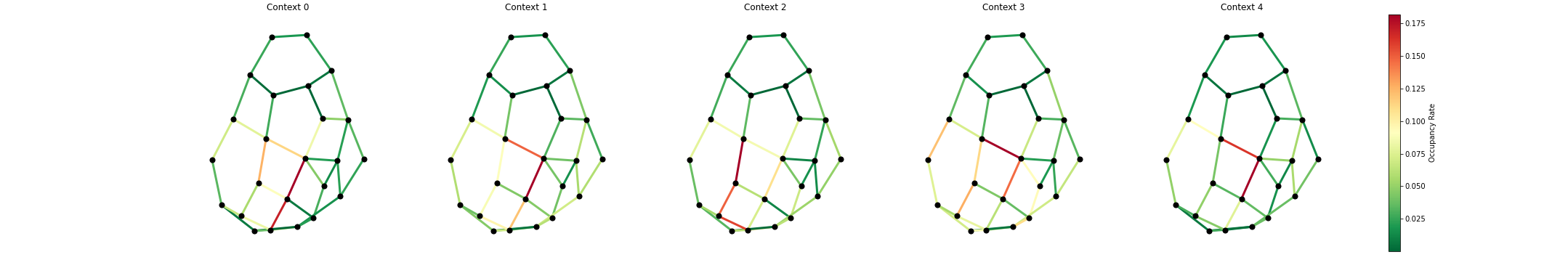}
  \caption{Average repartition of agents on the network for each context under a $10^{-3}$-coarse correlated equilibrium.}\label{fig:equilibrium}
\end{figure*}
\begin{figure*}
    \centering
    \begin{minipage}{0.32\textwidth}
        \centering
        \includegraphics[width=\linewidth]{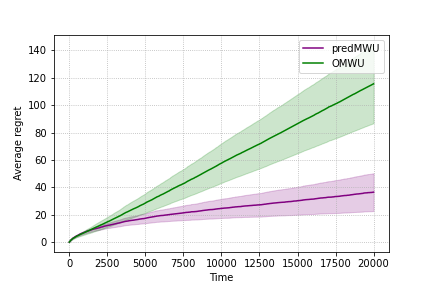}
        \caption{Average regret over agents for \texttt{POMWU} and \texttt{OMWU}.}
        \label{fig:regret}
    \end{minipage}
    \hfill
    \begin{minipage}{0.32\textwidth}
        \centering
        \includegraphics[width=\linewidth]{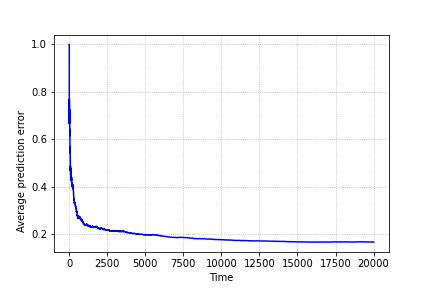}
        \caption{Average prediction error from the online logistic regression.}
        \label{fig:error}
    \end{minipage}
    \hfill
    \begin{minipage}{0.32\textwidth}
        \centering
        \includegraphics[width=\linewidth]{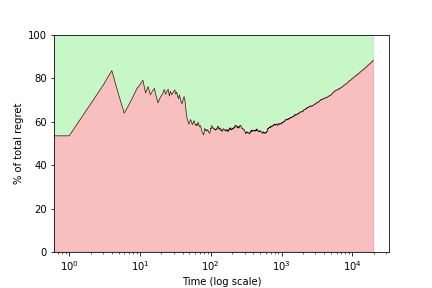}
        \caption{Proportion of average regret incurred under mispredicted contexts.}
        \label{fig:proportion}
    \end{minipage}
\end{figure*}
\paragraph{Setting.}We illustrate the performances of \texttt{POMWU} on the Sioux Falls routing problem from \citet{leblanc1975efficient} with the parameters from \citet{sessa2019no}. We consider a network of cities connected by roads. In each city, there is one agent willing to send a given quantity of goods to each other city. Agents want to minimize their travel time, which is determined by both congestion on the network, and external factors such as weather and road condition. Formally, we consider a graph $(\cV, \cE)$ with $J=\abs{\cV}(\abs{\cV}-1)$ agents, each of whom wants to send $q_j > 0$ units from $n_j \in\cV$ to $m_j \in\cV$. For any $j\in\agents$, we let $\cA^j $ be the set of $K>0$ shortest paths connecting $n_j$ to $m_j$, that is any $a^j \in\cA^j$ can be written as $a^j = (i^j _1, \ldots, i^j _R)$ with $i^j _1 = n_j$, $i^j _R = m_j$, and $(i_r, i_{r+1})\in\cE$ for any $r\in\iint{1}{R-1}$. For any profile of actions $\ba = (a^1 , \ldots, a^J)\in\cA$ and pair of nodes $(p,\ell)\in\cE$, we denote by $$\phi^j _{p,\ell}(\ba)=\begin{cases}\sum_{i\in\agents}\2{(p,\ell)\in\a^i }q_i ^4 &\text{if }(p,\ell)\in a^j  \\
0 &\text{otherwise}\eqsp,\end{cases}$$ the total congestion\footnote{In \citet{sessa2019no}, the congestion is of form $\tilde{\phi}_{p,\ell}(\ba)=(\sum_{k\in\agents}\2{(p,\ell)\in a^k }q_k)^{4}$. We only keep the term $q_k ^4$ in this sum so $\phi_{k,q}$ is linear in $\ba\in\cA$, which is necessary to compute expectations given the size of action space $\abs{\cA}=m^{\abs{\cV}(\abs{\cV}-1)}$.} faced by $j\in\agents$ on $(p,\ell)$, and $\phi^j(\ba)=(\phi^j  _{p,\ell}(\ba))_{p,\ell}\in\R^{\abs{\cV}\times\abs{\cV}}$ the corresponding matrix. Agents are allowed to randomize over routes, so they play $w^j \in\Delta_K$. To each pair $(p,\ell)\in\cV \times\cV$, we also associate a cost coefficient $z_{p,\ell}>0$ related to road condition or weather, and we denote by $Z=(z_{p,\ell})_{p,\ell}\in\R^{\abs{\cV}\times\abs{\cV}}$ the corresponding matrix. Then for any $\bw\in\mathscr{P} (\cA)$ and $Z\in\R^{\abs{\cV}\times\abs{\cV}}$, the cost for any $j\in\agents$ is given by:
$$
c^j (\bw, Z) = \EEs{\bw}{\ps{Z}{\phi^j  (\ba)}_{\text{F}}}\eqsp,
$$where $\ps{A}{B}_{\text{F}}=\trace(A^{\top}B)=\sum_{i,j}A_{i,j}B_{i,j}$ is the Frobenius inner product. $c^j $ captures the expected travel time of player $j\in\agents$ when they pick routes according to $w^j \in\Delta_K$ and other agents according to $\bw^{-j}\in\mathscr{P}(\cA^{-j})$ under context $Z\in\cZ$. Additional experimental details can be found in \Cref{appendix:experiment}.
\paragraph{Supervised learning.}In our experiment, there are $m>0$ random contexts denoted $\cZ=\defEnsLigne{z_1, \ldots,z_m}$. For any $z\in\cZ$, there exists $\beta^\star _z \in\R^b$ such that 
$$
\P (Z=z | X^0) = \zeta(\beta^\star _z , X^0)=\frac{\exp(\beta^\star _z X^0)}{\sum_{z'\in\cZ} \exp(\beta^\star _{z'}X^0)}\eqsp,
$$where $X^0 \in\R^b$ is a vector of covariates (which can be thought of as a meteorogical or a traffic forecast) drawn from a standard Normal multivariate distribution. At each round $t\in\timesteps$, agents observe $X^0 _t \in\R^b $ and predict with a logistic regression $\hZ_t \in\cZ$, that is  $\hZ_t = \argmax_{z\in\cZ}
\zeta(\hat{\beta}_z , X^0 _t)$. They then update $\hat{\beta}_{z_1}, \ldots, \hat{\beta}_{z_m}$ in an online fashion with a stochastic gradient descent. More details can be found in \Cref{appendix:experiment}.

\paragraph{Game.}There are $T>0$ rounds. At each $t\in\timesteps$, A pair $(X^0 _t ,Z_t)$ is drawn, each agent $j\in\agents$ observe $X^0 _t$ , predict $\hZ^j _t$, and play $w^j _t \in\Delta_K$ according to \Cref{algorithm:ohedge}. They then receive $Z_t$ and $(\E_{\bw^{-j}_t}[\phi^j (a^j _{t,k} , \ba^{-j}_t)])_{k\in[K]}$ as a feedback, which they use to update \texttt{POMWU} and their logistic regression. The parameters used in our experiment are summarized in \Cref{appendix:experiment}. 
\paragraph{Results.}\Cref{fig:regret} displays the the regret averaged over players\footnote{Shaded areas correspond to standard error computed over multiple runs.} for a naive \texttt{OMWU} algorithm which ignores states of nature, and \texttt{POMWU}. 
The effectiveness of \texttt{POMWU} in adapting to time-varying payoffs is clear, especially when compared to the classic \texttt{OMWU}, whose contextual regret grows linearly due to its inability to account for states of nature. Interestingly, \Cref{fig:proportion} shows that rounds where contexts are mispredicted contributes to a large and growing share of regret over time for \texttt{POMWU}. This illustrates the convergence of the algorithm on each context. The fact that average prediction error of the online logistic regression (\Cref{fig:error}) decreases at a slow rate thus explains most of the regret trend of \texttt{POMWU} in late rounds. Finally, \Cref{fig:equilibrium} depicts the average proportion of agents occupying each edge of the network in different contexts under the empirical policy $\hat{\bldnu}$ defined in \Cref{proposition:noregretcoarsecorrelatedequilibrium}. By  \Cref{proposition:noregretcoarsecorrelatedequilibrium}, this is a depiction of a $10^{-3}$-approximate coarse correlated equilibrium of the game. 
\section{Conclusion}The recent extension of uncoupled learning to time-varying games marks a significant progress, as it enables the modeling of non-stationary payoff environments. However, existing literature overlooks the fact that they may be able to forecast future variations of the game. In this work, we introduce prediction-aware learning, a framework in which agents can leverage predictions about future payoffs to inform their strategies. Specifically, we propose the \texttt{POMWU} algorithm, inspired by the classic \texttt{OMWU} approach, which incorporates the predicted state of nature into the optimism step. We provide explicit guarantees on both individual regrets and social welfare, and demonstrate the effectiveness of \texttt{POMWU} in a simulated contextual game.

We believe that these findings provide a strong foundation for incorporating predictive capabilities into dynamic game-theoretic settings, with significant implications for strategic decision-making in economic and industrial applications. There are several avenues for future work to improve and expand upon this framework. First, it would be valuable to weaken the feedback provided to players—for instance, by restricting it to bandit feedback—and analyze the impact on theoretical guarantees. Second, extending the model to accommodate an infinite number of contexts presents a challenging but important direction. Finally, exploring how collaborative inference influences the game dynamics and designing algorithms that account for this interplay remains an essential question from a game-theoretic perspective.

\section*{Impact Statement}
This paper presents work whose goal is to advance the understanding of multi-agent systems. There are many potential societal consequences 
of our work, none which we feel must be specifically highlighted here.

\section*{Acknowledgment}
Funded by the European Union (ERC, Ocean, 101071601). Views and opinions expressed are however those of the author(s) only and do not necessarily reflect those of the European Union or the
European Research Council Executive Agency. Neither the European Union nor the granting authority can be held responsible for them.

\bibliographystyle{plainnat}
\bibliography{sample}
\newpage
\appendix
\onecolumn
\section{Experiment.}\label{appendix:experiment}
\paragraph{Additional information about the setting of the experiment.}The graph used to model the Sioux Falls road network from \citet{leblanc1975efficient} has $\abs{\cN}=24$ nodes and $\abs{\cE}=76$ edges. The network topology, the cost coefficients $z_{p,\ell}>0$ as well as the quantities $q_j >0$ to be sent are downloaded from \url{https://github.com/sessap/contextualgames/tree/main/SiouxFallsNet}. In the experiment, we consider $m=5$ states of nature. Each state of nature $i\in[m]$ is generated by adding a noise $\vareps^{i}_{p,\ell}>0$ drawn from an exponential distribution with scale parameter $\lambda = 10^{-2}$ to each edge $(p,\ell)\in\cE$. For each player $j\in\agents$, we let $\cA^j$ be the $K=5$ shortest paths connecting $n_j \in\cN$ to $m_j \in\cN$. While there are $\abs{\cN}(\abs{\cN}-1)=552$ agents in total on the network, we exclude agents for whom the lengths of the longest path exceeds the length of the shortest path by more than $2$. This is because the optimal action tends to trivially be the shortest path irrespective of the state of nature for these agents. With this choice, we are left with $J=91$ agents having actions generating rewards of the same order of magnitude. The simulation is run over $T=2.10^4$ timesteps. The displayed regrets for \texttt{predMWU} and \texttt{OMWU} are averaged over agents.

\paragraph{Online supervised learning in the experiment.}As explained in the main text, for any $t\in\timesteps$,
$$
\P (Z_t=z | X_t ^0) = \zeta(\beta^\star _z , X^0)=\frac{\exp(\beta^\star _z X^0)}{\sum_{z'\in\cZ} \exp(\beta^\star _{z'}X^0)}\eqsp,
$$where $X^0 \in\R^b$ with $b=10$. In the experiment, for any $t\in\timesteps$, $X^0 _t \sim \cN (m, 5I_{b})$ with $m\in [1,4]^b$. All agents receive the same covariates from sack of simplicity. For any $z\in\cZ$, $\beta^\star _z \in\R^b$ is drawn before the simulation according to a Normal distribution $\cN (0, 5I_{b})$.

\section{Discussion of \Cref{assumption:finitecontext}.}\label{appendix:discussionfiniteness}

In this appendix, we briefly outline two possible approaches to carry our analysis without \Cref{assumption:finitecontext}. More precisely, we assume in this section that $\cZ\subset$ a compact context set.
\begin{enumerate}
    \item One first natural strategy involves discretizing $\cZ$ using an $\varepsilon$-net and projecting each incoming context $z\in\cZ$ to its nearest neighbor on the net. Under suitable smoothness conditions on the loss or reward functions, this could enable a reduction to the finite context case. However, this method typically results in regret bounds with exponential dependence on the dimension $d$, which can severely limit its applicability in high-dimensional settings (see, e.g., Theorem 4 in \citet{hazan2007online}).
    \item Alternatively, one could seek to work directly with the continuous context space. However, without placing restrictions on the policy class, it is possible to construct problem instances where both external and swap regrets grow linearly in the number of rounds. This highlights the necessity of controlling model complexity, for example by (i) restricting to a finite policy class (as in \citet{auer2002nonstochastic}), or (ii) assuming linear structure in the policies (cf. LinUCB-style approaches), possibly combined with complexity measures such as sequential Rademacher complexities \citep{
rakhlin2015online}. Each of these directions would require introducing new assumptions, proof techniques, and analytical frameworks, and therefore warrants a dedicated study.
\end{enumerate}

\section{Discussion on \Cref{definition:coarsecorrelatedeq} and \Cref{definition:correlatedeq}.}\label{appendix:equilibrium}In this section, we provide further interpretation about the equilibrium concepts defined in \Cref{definition:coarsecorrelatedeq} and \Cref{definition:correlatedeq}. First, of all, recall that $\varepsilon$-contextual coarse correlated equilibrium is defined as follows: 

\definitioncoarsecorrelatedequilibrium*

For any $z\in\cZ$, the distribution $\bldnu(z)$ can be interpreted as a correlation device that generates and recommends pure actions to agents. We say that $\bldnu(z)$ is an equilibrium in the sense of \Cref{definition:coarsecorrelatedeq} if no player can decrease their expected cost by ignoring the recommendations from $\bldnu$ \textit{before} they have even been drawn on average over time. Note that contrary to the classic coarse correlated equilibrium \citep{foster1998asymptotic}, $\bldnu: z\in\cZ \mapsto \bldnu (z)\in\mathcal{P}(\cA)$ is a \textit{policy} that maps contexts to  distributions over joint actions.

Second, a $\varepsilon$-correlated equilibrium is defined as follows. 

\definitioncorrelateqauilibrium*

Just as before, $\overline{\bldnu}$ can be regarded as a correlation device. It is an equilibrium in the sense of \Cref{definition:correlatedeq} if no player can decrease their expected cost by deviating from their recommended action \textit{after} it has been drawn, on average over time. From this point of view, being a correlated equilibrium is more demanding than a coarse correlated equilibrium. Note that \Cref{definition:correlatedeq} extends the classic correlated equilibrium notion \citep{aumann1987correlated} to the contextual case, by letting the swap functions $\varrho^j (\centraldot, z)$ depend on the state of nature.
\section{Useful algorithms.}\label{appendix:algos}
\begin{algorithm}[!ht]
\caption{Contextual Blum-Mansour algorithm.}
\label{algorithm:blummansour}
\begin{algorithmic}[1]
\STATE \textbf{Input:} a no-external regret policy $\pi^j  \in \Pi^j$.
\STATE Initialize $h^j _{k,0} = \emptyset$  for any $k\in\iint{1}{K}\eqsp.$
\FOR{each $t \in\iint{1}{T}$:}
\STATE Predict $\hZ^j _t \in\cZ$.
\STATE For every $k\in\iint{1}{K}$, get $p^j _{k,t} = \pi^j  (h^j _{k,t-1} , \hZ^j _t)$, and define $P^j _t = (p^j _{1,t}\,|\,\ldots\,|\, p^j _{K,t})\in\R^{K\times K}$.
\STATE Play $w^j _t \in\Delta_K$ such that $$P^j _t w^j_t = w^j _t \eqsp.$$
\STATE Observe $Z_t\in\cZ$ and $\Phi^j (\bw^{-j}_t)\in\R^{d\times K}\eqsp,$
\STATE Update $h ^j _{k,t} =h^j_{k,t-1}\cup\defEnsLigne{w^j _t [k]  \Phi^j (\bw^{-j}_t), Z_t}$ for ay $k\in\iint{1}{K}\eqsp.$
\ENDFOR
\end{algorithmic}
\end{algorithm}

\begin{algorithm}[!ht]
\caption{Standard Optimal Algorithm (SOA) from \citet{daniely2014}.}
\label{algorithm:daniely1}
\begin{algorithmic}[1]
\STATE \textbf{Input:} An hypothesis class $\cG \subset \defEnsLigne{g: \cX\rightarrow \cZ}$ with Littlestone dimension $\text{dim}_{\mathscr{L}}(\cG)<\infty$.
\STATE Initialize $V_0 = \cG$.
\FOR{each $t \in\iint{1}{T}$:}
\STATE Receive $X_t \in\cX$ and define $V^{(z)}_t =\defEnsLigne{g\in V_{t-1}:\: g(X_t) = z}$ for any $z\in\cZ\eqsp.$
\STATE Predict $\hZ_t \in \argmax_{z\in\cZ}\text{dim}_{\mathscr{L}}(V^{(z)}_t)\eqsp.$
\STATE Receive $Z_t \in\cZ$ and update $V_t \leftarrow V_t ^{(Z_t)}\eqsp.$
\ENDFOR
\end{algorithmic}
\end{algorithm}

\begin{algorithm}[!ht]
\caption{Learning with Expert Advice (LEA) from \citet{daniely2014}.}
\label{algorithm:daniely2}
\begin{algorithmic}[1]
\STATE \textbf{Input:} An hypothesis class $\cG \subset \defEnsLigne{g: \cX\rightarrow \cY}$ with Littlestone dimension $\text{dim}_{\mathscr{L}}(\cG)<\infty$, $N>0$ experts using \Cref{algorithm:daniely1} with $N\leq (mT)^{\text{dim}_{\mathscr{L}}(\cG)}$.
\STATE Set $\eta=\sqrt{8\ln (N) / T}\eqsp.$
\FOR{each $t \in\iint{1}{T}$:}
\STATE Observe $X_t \in\cX$, receive expert advices $(f^1 _t (X_t),\ldots, f^N _t (X_t))\in\cZ^{N}\eqsp.$
\STATE Predict $\hZ_t = f^i _t (X_t)$ with probability proportional to $\exp(-\eta \sum_{\tau < t}\2{f^i _\tau (X_\tau) \ne Z_\tau})$.
\STATE Receive $Z_t \in\cZ$ and send it to all experts as a feedback. 
\ENDFOR
\end{algorithmic}
\end{algorithm}

\begin{algorithm}[!ht]
\caption{Optimistic Mirror Descent with predicted context.}
\label{algorithm:omd}
\begin{algorithmic}[1]
\STATE Initialize $\Psi_{z_1}=\ldots=\Psi_{z_{m}}=\boldsymbol{0}_{d\times K}$ and $\rho_{z_1} = \ldots = \rho_{z_m}=\argmin_{\widetilde{w}\in\Delta_K}\cR (\widetilde{w})$.
\FOR{each $t \in\timesteps$}
\STATE Observe $\hZ^j_t \in\cZ$, set $\widetilde{M}^j _t = \Psi_{\hZ^j _t} $ and $\widetilde{g}^j _t = \rho_{\hZ^j _t}$.
\STATE Play $\widetilde{w}^j _t = \argmin_{\widetilde{w}\in\Delta_{K}}\eta\ps{\widetilde{M}^{j\top} _t \hZ^j _t}{\widetilde{w}} + D_{\cR}(\widetilde{w},\widetilde{g}^j _t)\eqsp,$
\STATE Observe $Z_t \in\R^d$ and $\Phi^{j}(\bw^{-j}_t) \in\R^{d\times K}\eqsp.$
\STATE Compute $\tilde{\rho}_t = \argmin_{g\in\Delta_{K}}\ps{\Phi^{j}(\bw^{-j}_{t})^{\top} Z_t}{g} + D_{\cR}(g, \widetilde{g}^j _t)\eqsp,$
\STATE Update $\Psi_{Z_t}\leftarrow \Phi^{j}(\bw^{-j}_t)$ and $\rho_{Z_t} \leftarrow \tilde{\rho}_t$.
\ENDFOR
\end{algorithmic}
\end{algorithm}

\begin{algorithm}[!ht]
\caption{Optimistic FTRL with predicted context.}
\label{algorithm:oftrl}
\begin{algorithmic}[1]
\STATE Initialize $w_0 = \argmin_{w\in\Delta_{K}}\cR (w)$ and  $\Psi_{z_1}=\ldots=\Psi_{z_m}=\boldsymbol{0}_{d\times K}$.
\FOR{each $t \in\timesteps$}
\STATE Observe $\hZ^j_t \in\cZ$ and set $\widetilde{M}^j _t = \Psi_{\hZ^j _t}\eqsp.$
\STATE Play $w^j _t = \argmin_{w\in\Delta_{K}}\ps{\sum_{\tau=1}^{t-1}\1\defEnsLigne{Z_{\tau}=Z^j _t}\Phi^{j}(\bw^{-j}_{\tau})^{\top} Z_\tau + \widetilde{M}^{j\top}_t Z^j _{t}}{w} + \frac{D_{\cR}(w)}{\eta}\eqsp,$
\STATE Observe $Z_t \in\R^d$ and $\Phi^{j}(\bw^{-j}_t)$, update $\Psi_{Z_t}\leftarrow\Phi^{j}(\bw^{-j}_t)\eqsp.$ 
\ENDFOR
\end{algorithmic}
\end{algorithm}
\section{Notations}\label{appendix:notations}
For the proofs, we use the following notations and shorthands.
\begin{itemize}
    \item For any $z\in\cZ$, $\scrT^z = \defEnsLigne{t\in\timesteps:\: Z_t = z}=\defEnsLigne{t^z _1 , \ldots, t^z _{n_z}}$ where $n_z = |\scrT^z |\eqsp.$
    \item For any $j\in\agents$, $z\in\cZ$ and $i\in\iint{1}{n_z}$: 
    $$
\begin{array}{ccc}
   \Phi^{j} (\bw^{-j}_{t^z _i})=\Phi^{j}_{z,i}  &\quad & w^j _{t^z _i}=w^{j}_{z,i} \\
     M^j _{t^z _i} = M^{j}_{z,i}&\quad & g^j _{t^z _i} = g^{j}_{z,i} \\
     \tilde{\rho}^{j}_{t^z _i} = \tilde{\rho}^j _{z,i} & \quad & \hZ^j _{t^z _i} = \hZ^j _{z,i}\eqsp.
\end{array}
    $$
\end{itemize}
\section{Technical lemmas.}

\lemmarewritingutility*
\begin{proof}
    Let $j\in\agents$, $Z\in\cZ$ and $\bw\in\mathscr{P}(\cA)$ with $\bw = w^j \otimes \bw^{-j}$. By Fubini theorem, 
    \begin{align*}
    c^j (\bw, Z)&= \EEs{w^j}{\EEs{\bw^{-j}}{\ps{Z}{\phi^j (a^j , \ba^{-j})}}} = \sum_{k = 1}^{K}w^j [k] \,\EEs{\bw^{-j}}{\ps{Z}{\phi_j (a^j _k, \ba^{-j})}} \\
    &= \ps{Z}{\sum_{k=1}^{K}w^j [k]\EEs{\bw^{-j}}{\phi_j (a^j _k , \ba^{-j})}} = \ps{Z}{\Phi^{j}(\bw^{-j})w^j}\eqsp.
\end{align*}
\end{proof}
\begin{lemma}\label{lemma:ohedegeequivalentomd}Let $j\in\agents$. For given sequences $(Z_1, \ldots, Z_T)\in\cZ^T$, $(\hZ^j _1, \ldots, \hZ^j _T)\in\cZ^T$ and $(\bw^{-j}_1, \ldots, \ldots, \bw^{-j}_T)\in\mathscr{P}(\cA^{-j})^T$, 
    \Cref{algorithm:ohedge} and \Cref{algorithm:omd} with $\cR : w\mapsto \sum_{k\in [K]}w[k] \ln w[k]-w[k]$ produce the same iterates: $\widetilde{w}^j_t = w^j_ t $ for any $t\in\timesteps$.
\end{lemma}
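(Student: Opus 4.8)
The plan is to show that the multiplicative-weight update in \Cref{algorithm:ohedge} is exactly the closed-form solution of the optimistic mirror descent step in \Cref{algorithm:omd} when the regularizer is the (negative) entropy $\cR(w)=\sum_{k\in[K]}w[k]\ln w[k]-w[k]$ restricted to the simplex $\Delta_K$. I would proceed by induction on $t\in\timesteps$, maintaining the joint hypothesis that at the start of round $t$ the two algorithms hold identical auxiliary state: the same stored matrices $\Psi_{z}$ for every $z\in\cZ$ and the same stored iterates $\rho_{z}$. The base case is immediate, since both algorithms initialize $\Psi_{z_1}=\ldots=\Psi_{z_m}=\boldsymbol{0}_{d\times K}$ and $\rho_{z_1}=\ldots=\rho_{z_m}=\argmin_{w\in\Delta_K}\cR(w)=(K^{-1},\ldots,K^{-1})$ (the uniform distribution being the entropy minimizer on the simplex).

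For the inductive step, fix $t$ and let $z=\hZ^j_t$ be the predicted context, so both algorithms read $M^j_t=\widetilde M^j_t=\Psi_{z}$ and $g^j_t=\widetilde g^j_t=\rho_{z}$ by the induction hypothesis. The key computation is to solve
$$
\widetilde w^j_t=\argmin_{w\in\Delta_K}\eta\,\ps{(\widetilde M^{j}_t)^{\top}\hZ^j_t}{w}+D_{\cR}(w,\widetilde g^j_t),
$$
where $D_{\cR}$ is the Bregman divergence of the negative entropy, i.e. the (unnormalized) KL divergence $D_{\cR}(w,g)=\sum_k w[k]\ln(w[k]/g[k])-w[k]+g[k]$. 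Writing out the first-order optimality (Lagrangian) conditions for this convex problem over the simplex gives $\ln w[k]=\ln g^j_t[k]-\eta\,((\widetilde M^{j}_t)^{\top}\hZ^j_t)[k]-\theta$ for the Lagrange multiplier $\theta$ enforcing $\sum_k w[k]=1$; exponentiating and normalizing yields
$$
\widetilde w^j_t[k]=\frac{g^j_t[k]\exp\!\big(-\eta\,((\widetilde M^{j}_t)^{\top}\hZ^j_t)[k]\big)}{\sum_{\ell}g^j_t[\ell]\exp\!\big(-\eta\,((\widetilde M^{j}_t)^{\top}\hZ^j_t)[\ell]\big)},
$$
and since $((\widetilde M^{j}_t)^{\top}\hZ^j_t)[k]=\ps{M^j_t[\,\cdot\,,k]}{\hZ^j_t}=M^j_t[k]\hZ^j_t$ in the row/column notation of the paper, this is precisely the expression for $w^j_t[\ell]$ in line 4 of \Cref{algorithm:ohedge}. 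Hence $\widetilde w^j_t=w^j_t$. An entirely analogous computation for the optimistic ``look-ahead'' update — solving $\tilde\rho_t=\argmin_{g\in\Delta_K}\ps{\Phi^j(\bw^{-j}_t)^{\top}Z_t}{g}+D_{\cR}(g,\widetilde g^j_t)$ against the \emph{true} context $Z_t$ — gives $\tilde\rho_t=\rho_{z}.\exp(-\eta\,\Phi^j(\bw^{-j}_t)^{\top}Z_t)$ up to normalization, matching line 7 of \Cref{algorithm:ohedge}; together with the common update $\Psi_{Z_t}\leftarrow\Phi^j(\bw^{-j}_t)$ (lines 6 of both algorithms), this restores the induction hypothesis for round $t+1$, since both algorithms modify only the coordinate indexed by $Z_t$ in their stored state.

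I expect the only mildly delicate point to be bookkeeping rather than mathematics: one must be careful that the Bregman step is taken with respect to $\widetilde g^j_t=\rho_{\hZ^j_t}$ — the \emph{context-specific} stored iterate — and that the update in line 7 (respectively line 6) overwrites $\rho_{Z_t}$ (respectively $\Psi_{Z_t}$) indexed by the \emph{realized} context $Z_t$, which may differ from $\hZ^j_t$; the induction hypothesis on the full vector of stored states $(\Psi_z,\rho_z)_{z\in\cZ}$ handles this cleanly. One should also note that the Bregman projection onto $\Delta_K$ under the entropic regularizer reduces to simple renormalization — there is no active-set subtlety — because the unconstrained minimizer already has positive coordinates, so the closed forms above are exact; this is the standard equivalence between entropic mirror descent and multiplicative weights, here applied per context. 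With these observations the proof is complete.
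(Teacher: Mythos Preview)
Your inductive invariant---that the full vectors of stored states $(\Psi_z,\rho_z)_{z\in\cZ}$ coincide (up to normalization) for the two algorithms at every round---fails precisely at the step you flag as ``mildly delicate.'' On a misprediction round $\hZ^j_t\neq Z_t$, line~7 of \Cref{algorithm:ohedge} multiplies the exponential factor into the \emph{existing} $\rho_{Z_t}$, giving $\rho_{Z_t}\leftarrow\rho_{Z_t}\,.\,\exp(-\eta\,\Phi^{j}(\bw^{-j}_t)^{\top}Z_t)$; but \Cref{algorithm:omd} computes $\tilde\rho_t$ with Bregman anchor $\tilde g^j_t=\rho_{\hZ^j_t}$ and then overwrites $\rho_{Z_t}\leftarrow\tilde\rho_t\propto\rho_{\hZ^j_t}\,.\,\exp(-\eta\,\Phi^{j}(\bw^{-j}_t)^{\top}Z_t)$. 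The new values therefore differ not merely by a normalization constant but by which base vector the exponential multiplies. Concretely, take two rounds with $\hZ^j_1=\hZ^j_2=z_1$ and $Z_1=Z_2=z_2$: after round~$2$, \Cref{algorithm:ohedge} holds $\rho_{z_2}\propto\exp\!\big(-\eta(\Phi^j(\bw^{-j}_1)+\Phi^j(\bw^{-j}_2))^{\top}z_2\big)$, whereas \Cref{algorithm:omd} holds $\rho_{z_2}\propto\exp\!\big(-\eta\,\Phi^j(\bw^{-j}_2)^{\top}z_2\big)$ because its anchor at round~$2$ was $\rho_{z_1}$, still uniform. These are generically not proportional, so if $\hZ^j_3=z_2$ the two algorithms play different strategies and the induction is not restored.

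The paper does not argue via state equality. It unrolls \Cref{algorithm:ohedge} into the closed form $w^j_t[\ell]\propto\exp\big(-\eta\,[M^j_t[\ell]\hZ^j_t+\sum_{\tau<t}\2{Z_\tau=\hZ^j_t}\Phi^j(\bw^{-j}_\tau)[\ell]\hZ^j_t]\big)$ and then runs a recursion along $\cP^j_t=\{\tau<t:Z_\tau=\hZ^j_t\}=\{\tau_1,\ldots,\tau_{N^j_t}\}$ to show that the OMD iterate $\tilde w^j_t$ has the same form. That recursion, however, invokes the identity $\tilde g^j_{\tau_{r+1}}=\argmin_g\eta\langle\Phi^j(\bw^{-j}_{\tau_r})^{\top}z,g\rangle+\KL(g,\tilde g^j_{\tau_r})$, which again identifies the Bregman anchor at time $\tau_r$ with $\rho_z$ and so holds only when $\hZ^j_{\tau_r}=z$. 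In other words the paper's route is organized differently---closed-form unrolling along a fixed context rather than a global state induction---but it appears to lean on the same implicit assumption at mispredicted rounds; your approach, by making the state-matching invariant explicit, simply renders the gap easier to spot.
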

\begin{proof}Observe that the Bregman divergence $D_{\cR}(w,v)=\cR(w)-\cR(v)-\ps{\nabla\cR (v)}{w-v}$ generated by $\cR : w\mapsto \sum_{k\in [K]}w[k] \ln w[k]-w[k]$ is for any $w,v\in\Delta_K$:
$$D_{\cR}(w,v) = \KL (w,v)\eqsp,$$ where $\KL$ denotes the the Kullback-Leibler divergence. Therefore, in this proof we write $\KL$ instead of $D_{\cR}$. \Cref{algorithm:ohedge} produces an iterate $w^j _t \in\Delta_K$ such that for any $\ell\in[K]$:
\begin{equation}
    \label{eq:valuealgo1}
     w^j _{t} [\ell] = \frac{\exp\parentheseDeux{-\eta\parenthese{M^{j}_t [\ell] \hZ^j _t + \sum_{\tau=1}^{t-1}\2{Z_\tau = \hZ^j _t}\Phi^{j} (\bw^{-j}_{\tau})[\ell] \hZ^j _t  }}}{\sum_{k\in[K]}\exp\parentheseDeux{-\eta\parenthese{M^{j}_t [k] \hZ^j _t + \sum_{\tau=1}^{t-1}\2{Z_\tau = \hZ^j _t}\Phi^{j} (\bw^{-j}_{\tau})[k] \hZ^j _t }}}\eqsp.
\end{equation}We will show that the iterate of \Cref{algorithm:omd}, $\widetilde{w}^j_ {t} = \argmin_{w\in\Delta_{K}}\eta\ps{\widetilde{M}^{j\top}_{t}\hZ^j _t}{w}+\KL(w,\widetilde{g}^j _{t})$ is equal to \eqref{eq:valuealgo1}. 
To this end, we define $$\cP^j _t = \defEnsLigne{\tau\in\iint{1}{t-1} : \: Z_\tau = \hZ^j _t}\eqsp,$$and we write $\cP^j_ t = \defEnsLigne{\tau_1, \ldots, \tau_{N^j _t}}$ where $N^j _t =\sum_{\tau< t}\2{Z_\tau = \hZ^j _t}$. We prove with a recursion that for any $r\in\iint{1}{N^j _t}$, we have for any $\ell\in[K]$:
\begin{equation}
    \label{lemma1:recursionhypo}
    \tilde{g}^{j} _{\tau_r} [\ell] = \frac{\exp\parentheseDeux{-\eta \sum_{i=1}^{r-1}\Phi^{j}(\bw^{-j}_{\tau_{i}})[\ell] \hZ^j _t}}{\sum_{k\in[K]}\tilde{g}^{j} _{\tau_i} [k] \exp\parentheseDeux{-\eta \sum_{i=1}^{r-1}\Phi^{j}(\bw^{-j}_{\tau_{i}})[k]} \hZ^j _t}\eqsp,
\end{equation}
For $r=1$, by definition of \Cref{algorithm:omd},  $\tilde{g}^{j} _{\tau_1}=\argmin_{g\in\Delta_{K}}\cR(g)=m^{-1}\boldsymbol{1}_{m}$ where $\boldsymbol{1}_m  = (1,\ldots, 1)^{\top}$, so \eqref{lemma1:recursionhypo} is true by the convention $\sum_{i\in\emptyset}k_i = 0$. Suppose now that \eqref{lemma1:recursionhypo} holds true for some $r\in\iint{1}{N^{j}_{t}-1}$. By definition,
    $$
\tilde{g}^{j} _{\tau_{r+1}} = \argmin_{w\in\Delta_{K}}\eta\ps{\Phi^{j}(\bw^{-j}_{\tau_r})^{\top} \hZ^j _{\tau_{r}}}{g}+\KL(w , \tilde{g}^{j} _{\tau_{r}})\eqsp.
    $$ Equivalently, it is the solution to
$$
\min_{g\in\R^m }\max_{\lambda\in\R}\cL(g,\lambda)\quad\text{with}\quad \cL(g,\lambda)=\eta\ps{\Phi^{j}(\bw^{-j}_{\tau_r})^{\top} \hZ^j _{\tau_{r}}}{g}+\KL(g , \tilde{g}^{j} _{\tau_{r}}) + \lambda (\sum_{\ell\in[K]}g [\ell]-1)\eqsp.
$$In particular $\nabla \cL (\tilde{g}^{j}_{\tau_{r+1}},\lambda)=0$, that is for any $\ell\in[K]$:
    \begin{equation}
        \label{eq:kktomd}
        \eta \Phi^{j}(\bw^{-j}_{\tau_r})[\ell] \hZ^j _{\tau_{r}} + \ln(\tilde{g}^{j} _{\tau_{r+1}} [\ell]) - \ln (\tilde{g}^{j} _{\tau_{r}} [\ell]) + \lambda=0\quad\text{so}\quad \tilde{g}^{j} _{\tau_{r+1}} [\ell]= \tilde{g}^{j} _{\tau_{r}} [\ell] \exp\parenthese{-\eta \Phi^{j}(\bw^{-j}_{\tau_r})[\ell]\hZ^j _{\tau_{r}} - \lambda}\eqsp.
    \end{equation}Using the fact that $\sum_{k\in[K]}\tilde{g}^{j} _{\tau_{r+1}} [k] = 1$, we obtain from \eqref{eq:kktomd} $\exp(\lambda)=\sum_{k\in[K]}\tilde{g}^{j} _{\tau_{r}} [k] \exp\parentheseLigne{-\eta\Phi^{j}(\bw^{-j}_{\tau_r})[k]\hZ^j _{\tau_{r}} }$, so:
    \begin{equation}
        \tilde{g}^{j} _{\tau_{r+1}} [\ell] = \frac{\tilde{g}^{j} _{\tau_{r}} [\ell]\exp(-\eta \Phi^{j}(\bw^{-j}_{\tau_r}) [\ell] \hZ^j _{\tau_{r}})}{\sum_{k\in[K]}\tilde{g}^{j} _{\tau_{r}} [k]\exp(-\eta \Phi^{j}(\bw^{-j}_{\tau_r}) [k] \hZ^j _{\tau_{r}})}\label{lemma1:recursionok}
    \end{equation}and using the recursion assumption establishes the result. Finally, observe that
\begin{align*}
    \tilde{w}^j_ {t} &= \argmin_{w\in\Delta_{K}}\eta\ps{\widetilde{M}^{j}_{t}\hZ^j _t}{w}+\KL(w,\tilde{g}^{j} _{t})\eqsp,
    \intertext{By the same lines of computation as previously, we obtain that for any $\ell\in[K]$:}
    \tilde{w}^j _t [\ell] &= \frac{\tilde{g}^{j} _{t}\exp\parentheseDeux{-\eta \widetilde{M}^{j}_t [\ell]\hZ^j _t}}{\sum_{k\in[K]}\tilde{g}^{j} _{t}\exp\parentheseDeux{-\eta \widetilde{M}^{j}_t [k] \hZ^j _t}}\eqsp,\\
    \intertext{Finally, by definition of \Cref{algorithm:omd}, $\tilde{g}^j_t\propto \tilde{g}^j _{N^j _t}\exp(-\eta \Phi^{j}(\bw^{-j}_{\tau_{N^j_t}}))$, hence by \eqref{lemma1:recursionhypo}:}
    \tilde{w}^j _t &= \frac{\exp\parentheseDeux{-\eta \parenthese{\sum_{i=1}^{N^{j}_{t}}\Phi^{j}(\bw^{-j}_{\tau_{i}})[\ell]+\widetilde{M}^{j}_t [\ell]} \hZ^j _t}}{\sum_{k\in[K]}\tilde{g}^{j} _{\tau_i} [k] \exp\parentheseDeux{-\eta \parenthese{\sum_{i=1}^{N^{j}_{t}}\Phi^{j}(\bw^{-j}_{\tau_{i}})[k]+\widetilde{M}^{j}_t [k]} \hZ^j _t}}\eqsp,
\end{align*}Observing that $\widetilde{M}^j _t = M^j_t$ by definition for any $t\in\timesteps$, we obtain the desired result.
\end{proof}

\begin{lemma}\label{lemma:ohedgeequivalentoftlr}
For given sequences $(Z_1, \ldots, Z_T)\in\cZ^T$, $(\hZ^j _1, \ldots, \hZ^j _T)\in\cZ^T$ and $(\bw^{-j}_1, \ldots, \ldots, \bw^{-j}_T)\in\mathscr{P}(\cA^{-j})$, 
    \Cref{algorithm:ohedge} and \Cref{algorithm:oftrl} produce the same iterates.
\end{lemma}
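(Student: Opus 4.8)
The plan is to reduce this to the previous lemma. I will show that, for every $j\in\agents$ and $t\in\timesteps$, \Cref{algorithm:oftrl} produces exactly the closed-form iterate~\eqref{eq:valuealgo1} that the proof of Lemma~\ref{lemma:ohedegeequivalentomd} already identified with the iterate of \Cref{algorithm:ohedge}; transitivity then gives the claim. An alternative route would be to combine Lemma~\ref{lemma:ohedegeequivalentomd} with the classical equivalence between optimistic mirror descent (\Cref{algorithm:omd}) and optimistic \texttt{FTRL} for linear losses under the entropic regularizer, applied separately on each context; but since the objects carry an extra contextual bookkeeping (the per-context buffers $\rho_z$, $\Psi_z$), doing the computation directly seems cleaner and more self-contained.

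Concretely, fix $j\in\agents$, $t\in\timesteps$ and write $z=\hZ^j_t$. The first step is to observe that $w^j_t$ in \Cref{algorithm:oftrl} is the minimizer over $\Delta_K$ of a \emph{linear} objective, with loss vector $\sum_{\tau=1}^{t-1}\2{Z_\tau=z}\,\Phi^{j}(\bw^{-j}_\tau)^{\top}Z_\tau+\widetilde{M}^{j\top}_t z$, regularized by $D_{\cR}(w)/\eta$; and that, $\cR$ being the negative entropy, $D_{\cR}(w)$ agrees with $\cR(w)$ up to an additive constant on the simplex, since $\nabla\cR$ at the uniform point is proportional to $\boldsymbol{1}$. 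Hence this is precisely the type of problem solved in~\eqref{eq:kktomd}. I would then rerun that Lagrangian/KKT computation — write stationarity, eliminate the multiplier using $\sum_\ell w^j_t[\ell]=1$ — to obtain the softmax
\[
w^j_t[\ell]=\frac{\exp\!\big[-\eta\big(\widetilde{M}^j_t[\ell]\,z+\textstyle\sum_{\tau<t}\2{Z_\tau=z}\,\Phi^{j}(\bw^{-j}_\tau)[\ell]\,z\big)\big]}{\sum_{k\in[K]}\exp\!\big[-\eta\big(\widetilde{M}^j_t[k]\,z+\textstyle\sum_{\tau<t}\2{Z_\tau=z}\,\Phi^{j}(\bw^{-j}_\tau)[k]\,z\big)\big]}\eqsp,
\]
where I use that $Z_\tau=z$ on the index set $\{\tau<t:\,Z_\tau=z\}$, so $\Phi^{j}(\bw^{-j}_\tau)^{\top}Z_\tau=\Phi^{j}(\bw^{-j}_\tau)^{\top}z$ there.

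The last step is to match the optimism term. Both \Cref{algorithm:ohedge} and \Cref{algorithm:oftrl} initialize $\Psi_{z_1}=\dots=\Psi_{z_m}=\boldsymbol{0}_{d\times K}$ and, at every round, perform the identical update $\Psi_{Z_t}\leftarrow\Phi^{j}(\bw^{-j}_t)$; hence the two copies of $\Psi$ coincide at every round, so $\widetilde{M}^j_t=\Psi_z=M^j_t$. Substituting this into the display makes it agree term by term with~\eqref{eq:valuealgo1}, which is the \texttt{POMWU} iterate by Lemma~\ref{lemma:ohedegeequivalentomd}, completing the induction-free argument. I do not expect a genuine obstacle; the only point requiring care is the standard ``\texttt{FTRL} $=$ exponential weights'' bookkeeping — namely that the from-scratch cumulative loss of \texttt{FTRL}, restricted by the indicator $\2{Z_\tau=z}$ and truncated at $\tau<t$, is exactly what \texttt{POMWU} accumulates recursively into $\rho_z$ — together with checking that the initialization $\rho_z=(K^{-1},\dots,K^{-1})$ matches $w_0=\argmin_{w\in\Delta_K}\cR(w)$.
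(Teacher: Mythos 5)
Your argument is exactly the paper's proof: the paper also establishes the lemma by writing the first-order (KKT) condition of step 4 of \Cref{algorithm:oftrl} and observing that it recovers the closed-form softmax expression \eqref{eq:valuealgo1}, which the proof of \Cref{lemma:ohedegeequivalentomd} already identified as the \texttt{POMWU} iterate. You merely spell out details the paper leaves implicit (the matching of the $\Psi$ buffers and optimism term, the per-context cumulative loss, and the initialization), so the proposal is correct and takes essentially the same route.
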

\begin{proof}
The proof proceeds as the one of \Cref{lemma:ohedegeequivalentomd}: writing the first order condition of step 4 in \Cref{algorithm:oftrl} leads to the expression \eqref{eq:valuealgo1} of the iterate of \Cref{algorithm:ohedge}.
\end{proof}
\section{Proofs.}\label{appendix:proofs}
 \externaltoswapregret*

\begin{proof}
    Let $j\in\agents$. Assume that there exists $\pi^j \in\Pi ^j$ and $f:\N^4 \rightarrow \R_{+}$ such that the regret $\regret^j _T \in\R$ of $\pi^j$ satisfies:
    \begin{equation}
        \label{prop1:assumption}
        \regret^j _T \leq f(J,T,K,m)\eqsp.
    \end{equation}We consider the policy $\overline{\pi}^j _T \in\Pi^j$ described in \Cref{algorithm:blummansour}. In this proof, we define for any $k\in\iint{1}{K}$:
            $$ r^j _k = \sum_{t\in\timesteps}\ps{w^j _t [k]\Phi^{j}(\bw^{-j}_t)^{\top}Z_t}{p^j _{t,k}}-\min_{\pi_k : \cZ \rightarrow \Delta_K}\sum_{t\in\timesteps}\ps{w^j _t [k]\Phi^{j}(\bw^{-j}_t)^{\top}Z_t}{\pi_k (Z_t)}\eqsp,$$
    the regret incurred by $\overline{\pi} ^j$ when fed with the histories $h^j _{k, 1},\ldots, h^j _{k,T} \in\cH^j $.
The swap-regret of $\overline{\pi}^j $ reads:
    \begin{align*}
        \uregret^j _T &= \sum_{t\in\timesteps}\ps{\Phi^{j}(\bw^{-j}_t)}{w^j_ t - \lambda^j _{\star}(w^j_t , Z_t)} \\
        \intertext{Note by linearity of $c^j$ in $w^j \in \Delta_K$ (\Cref{lemma:rewritingutility}), defining $\Lambda^j _{\star}: z\in\cZ \mapsto (\lambda^j _\star ( a^j _1 , z)\,|\ldots|\,\lambda^j _{\star}(a^j_K , z))\in\{0,1\}^{K\times K}$ allows to rewrite $\lambda^j _{\star} (w^j _t , Z_t)=\Lambda^j _{\star}(Z_t)w^j _t$, hence:}
        &= \sum_{t\in\timesteps}\ps{\Phi^{j}(\bw^{-j}_t)^{\top}Z_t}{P^j _t w^j _t }-\ps{\Phi^{j}(\bw^{-j}_t)^{\top}Z_t}{ \Lambda^j _\star (Z_t)w^j _t}\qquad\text{(because $w^j _t = P^j_ t w^j _t$ )} \\
        &=\sum_{t\in\timesteps}\parentheseDeux{\sum_{k\in[K]}\ps{w^j _t [k]\, \Phi^{j}(\bw^{-j}_t)^{\top}Z_t}{p^j_{k,t}}   - \sum_{k\in[K]}\ps{w^j _t [k] \,\Phi^{j}(\bw^{-j}_t)^{\top}Z_t}{\lambda^j_\star (a^j _k , Z_t)}}\\
        & \leq \sum_{k\in[K]}r^j_{k,T} \leq Kf(J,T,K,m)\eqsp.
    \end{align*}
\end{proof}
\propboundaveragesocialwelfare* 
\begin{proof}Let $\boldsymbol{\rho}:\cZ\rightarrow\cA$ be an optimal pure strategy policy, which satisfies $\sum_{t\in\timesteps}C_t (\boldsymbol{\rho}(Z_t))=C^\star$. For any $j\in\agents$ and $z\in\cZ$, we denote by $\boldsymbol{\delta}^j_\star (Z_t)$ the distribution which puts a mass 1 on the optimal action $\boldsymbol{\rho}(Z_t)[j]$ for agent $j\in\agents$. For any $(\bw_1 , \ldots , \bw_T)\in\mathscr{P}(\cA)^T$, we have
            \begin{align*}
            \sum_{t=1}^{T}C_t (\bw_t) &= \sum_{j=1}^{J}\sum_{t=1}^{T}\ps{Z_t}{\Phi^{j} (\bw^{-j}_{t})w^j _t} \leq \sum_{j=1}^{J}\regret^j _T + \sum_{j=1}^{J}\sum_{t=1}^{T}\ps{Z_t}{\Phi^{j} (\bw^{-j}_t)\boldsymbol{\delta}_\star ^j (Z_t)} \\
            &= \sum_{j=1}^{J}\regret^j _T + \sum_{j=1}^{J}\sum_{t=1}^{T}\ps{Z_t}{\EEs{\bw^{-j} _t}{\phi_j (\rho_\star ^j (Z_t)[j] , \ba^{-j})}} \\
    &\leq \sum_{j=1}^{J}\regret^j _T + \delta T C^\star + \mu \sum_{t=1}^{T}C_t (\bw_t)\eqsp,
            \end{align*}where we used the $(\delta, \mu)$-smoothness assumption in the last line. Re-arranging the terms allows to conclude.
    \end{proof}

\epscoarsecorrelatedeq*
    \begin{proof}
Let $j\in\agents$  and $\pi^j \in\Pi^j$. By definition, for any $z\in\cZ$ such that $n_z >0$,
$$T^{-1}\sum_{t\in\timesteps}\EEs{\hat{\bldnu}_T (Z_t)}{\phi^j (\ba)}=T^{-1}\sum_{z\in\cZ}\sum_{t\in\scrT^z}n_z ^{-1}\sum_{t\in\scrT^z}\EEs{\bw_t}{\phi^j (\ba)}=T^{-1}\sum_{t\in\timesteps}\EEs{\bw_t }{\phi^j (\ba)}\eqsp.$$This observation and \Cref{lemma:rewritingutility} lead to:
        \begin{align*}
        &T^{-1}\sum_{t\in\timesteps}\parenthese{c^j (\hat{\bldnu}(Z_t), Z_t) - c^j(\pi^j (Z_t), \hat{\bldnu}^{-j}(Z_t),Z_t)}\\
        &= T^{-1}\sum_{t\in\timesteps}\EEs{\hat{\bldnu}_T (Z_t)}{\ps{Z_t}{\phi^ j (\ba_t)}} - T^{-1}\sum_{t\in\timesteps}\EEs{\pi^j (Z_t)\otimes\hat{\bldnu_T}^{-j}(Z_t)}{\ps{Z_t}{\phi^ j (a^j_t ,\ba^{-j}_t)}}\\
            &= T^{-1}\sum_{t\in\timesteps}\EEs{\bw_{t}}{\ps{Z_t}{\phi^ j (\ba_t)}}- T^{-1}\sum_{t\in\timesteps}\EEs{\pi^j (Z_t)\otimes\bw_{t}^{-j}}{\ps{Z_t}{\phi^ j (a^j_t ,\ba^{-j}_t)}}\\
            &= T^{-1}\sum_{t\in\timesteps}\ps{Z_t}{\Phi^j (\bw^{-j} _t)w^j _t}-T^{-1}\sum_{t\in\timesteps}\ps{Z_t}{\Phi^{j}(\bw^{-j}_t)\pi^j (Z_t)} \\
            &\leq T^{-1}\sum_{t\in\timesteps}\ps{Z_t}{\Phi^j (\bw^{-j} _t)w^j _t}-T^{-1}\sum_{t\in\timesteps}\ps{Z_t}{\Phi^{j}(\bw^{-j}_t)\pi^j_{\star} (Z_t)} \\
            &= T^{-1}\regret^j _T \eqsp.
        \end{align*}\end{proof}
\epscorrelatedeq*
\begin{proof}
    Let $j\in\agents$ and $\varrho^j : \cA \times \cZ \rightarrow \cA$. We have:
        \begin{align*}
            &T^{-1}\sum_{t\in\timesteps}\EEs{\hat{\nu}(Z_t)}{\ps{Z_t}{\phi^j (\ba)}-\ps{Z_t}{\phi^j (\varrho^j (a^j , Z_t),\ba^{-j})}}\\
&=T^{-1}\sum_{z\in\cZ}\sum_{t\in\scrT^z}\EEs{\bw_t}{\ps{z}{\phi^j (\ba)-\phi^j (\varrho^j (a^j , z),\ba^{-j})}}\\
            \intertext{And denoting $\widetilde{\Phi}_{z}^j (\bw^j _t) = (\E_{\bw^{-j}_t}[\phi^{j}(\varrho^{j}(a^j _{\ell},z),\ba^{-j})[r]])_{r\ell}\in\R^{d\times K}$ for any $t\in\scrT^z$, by \Cref{lemma:rewritingutility}:}
            &=T^{-1}\sum_{z\in\cZ}\sum_{t\in\scrT^z }z^{\top}\parenthese{\Phi^j (\bw^{-j} _t) - \widetilde{\Phi}^j _{z} (\bw^{-j}_t)}w^j _t \\
            \intertext{For any $z\in\cZ$,  define the matrix $B^j _z \in\defEnsLigne{0,1}^{K\times K}$ with coefficients $(B^j _z)_{k,\ell}=\2{\varrho ^j (a^j _k, z)=a^j _{\ell}}$. Observe that $\widetilde{\Phi}^j _z (\bw^{-j}_t)=\Phi^j (\bw^{-j}_t)B^j _z$, so:}
            &= T^{-1}\sum_{z\in\cZ}\sum_{t\in\scrT^z}\parenthese{z^{\top}\Phi^{j}(\bw^{-j}_t)w^j _t - z^{\top}\Phi^{j}(\bw^{-j}_t)B^j _z w^j _t}\\
            \intertext{Denoting $\tilde{w}^j _t =  B^j _z w^j _t \in\Delta_K$:}
            &= T^{-1}\sum_{z\in\cZ}\parenthese{\sum_{t\in\scrT^z}z^{\top}\Phi^{j}(\bw^{-j}_t)w^j _t -\sum_{t\in\scrT^z}z^{\top}\Phi^{j}(\bw^{-j}_t)\tilde{w}^j _t}\\
             &\leq² T^{-1}\sum_{z\in\cZ}\parenthese{\sum_{t\in\scrT^z}z^{\top}\Phi^{j}(\bw^{-j}_t)w^j _t -\sum_{t\in\scrT^z}z^{\top}\Phi^{j}(\bw^{-j}_t)\lambda^j _{\star}(w^j _t, z)}\\
             &=T^{-1}\uregret^j _T \eqsp.
        \end{align*}
\end{proof}
\propcontextualrvu*
\begin{proof}By \Cref{lemma:ohedegeequivalentomd}, it is equivalent to show the result holds when players use \Cref{algorithm:omd}  with $\cR : w \mapsto \sum_{\ell \in [K]}w_{\ell}\ln (w_{\ell}) - w_{\ell}$. We assume that this is the case for the rest of the proof. To lighten notation, we drop the tilde on $\tilde{g}$, $\tilde{M}$ and $\tilde{w}$ as compared to the pseudo-code \Cref{algorithm:omd}.

    Let $j\in\agents$. We denote by $\ell ^j _T (z) = \sum_{t\in\scrT^z}\2{\hZ^j _t \ne z}$ the number of mispredictions of player $j$ on the context $z\in\cZ$. We will prove the following inequality for any $z\in\cZ$: 
    \begin{align}
        \sum_{t\in\scrT^z }\ps{\Phi^{j} (\bw^{-j}_{t})^{\top} z}{w^j _ t - \pi^j _\star (z)} &\leq \frac{(5+\ln (K))\ell^j _T (z) + \ln (K)}{\eta} + \eta\parenthese{\sum_{i=1}^{n_z}\norminf{\parenthese{\Phi^{j}_{z,i} -\Phi^{j}_{z, i-1}}^{\top} z}^2 + 4\ell^j _T (z)} \notag \\
        &- \frac{1}{8\eta}\sum_{i=1}^{n_z}\normone{w^{j}_{z, i} - w^{j}_{z, i-1}}^2 \label{eq:regretpercontext}\eqsp.
    \end{align}
    Let $z\in\cZ$. For any $t\in\scrT^z$, the instantaneous regret decomposes as:
        \begin{equation}
            \label{eq:decompositioninstataneuousregret}
            \ps{\Phi^{j} (\bw^{-j}_t)^{\top} z}{w^j_t - \pi^j _{\star} (z)} = \underbrace{\ps{(\Phi^{j}(\bw^{-j}_t) - M^j _t)^{\top} z}{w^j_t - \tilde{\rho}_t}}_{(a)} + \underbrace{\ps{M^{j\top}_t z}{w^j _t  - \tilde{\rho}_t}}_{(b)} + \underbrace{\ps{\Phi^{j} (\bw^{-j}_t)^{\top} z}{\tilde{\rho}_t - \pi^j _{\star} (z)}}_{(c)}\eqsp,
        \end{equation}where $\tilde{\rho}_t = \argmin_{g\in\Delta_{K}}\ps{\Phi^{j}(\bw^{-j}_{t})^{\top} z}{g} + D_{\cR}(g, g^j _t)$ (see \Cref{algorithm:omd}).  We bound each of these three terms. First, because $\normone{\centraldot}$ and $\norminf{\centraldot}$ are dual,
        
\begin{equation}
    (a) \leq \norminf{(\Phi^{j} (\bw^{-j}_{t}) - M^j _t)^{\top}z} \normone{w^j _t - \tilde{\rho}_t}\eqsp.
    \label{eq:boundproduct}
\end{equation}
For the second and third term, we use the following classic lemma, whose proof relies on the definition of the Bregman divergence and the first order condition.
\begin{lemma}[\citet{rakhlin2013online}]\label{lemma:optimomd}
    let $b\in\R^m $ and $c\in\R^m$, and define $a^\star = \argmin_{a\in\R^m}\ps{a}{c} + D_{\cR}(a,b)$. Then for any $d\in\R^m$, 
    $$\ps{c}{a^\star-d}\leq D_{\cR}(d,b) - D_{\cR}(d,a^\star)-D_{\cR}(a^\star , b)$$
\end{lemma}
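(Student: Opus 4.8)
The plan is to combine the first-order optimality condition characterizing $a^\star$ with the standard three-point identity for Bregman divergences. First I would recall that $D_{\cR}(x,y)=\cR(x)-\cR(y)-\ps{\nabla\cR(y)}{x-y}$, so that the objective $a\mapsto\ps{a}{c}+D_{\cR}(a,b)$ is convex in $a$ with gradient $c+\nabla\cR(a)-\nabla\cR(b)$. Since $a^\star$ is its minimizer, the optimality condition reads $\ps{c+\nabla\cR(a^\star)-\nabla\cR(b)}{d-a^\star}\geq 0$ for every admissible $d$ (this is the variational inequality satisfied by a constrained minimizer; in the fully unconstrained setting it reduces to the stationarity equation $c+\nabla\cR(a^\star)-\nabla\cR(b)=0$). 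Rearranging this inequality yields
$$\ps{c}{a^\star-d}\leq\ps{\nabla\cR(b)-\nabla\cR(a^\star)}{a^\star-d}\eqsp.$$

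Next I would establish, by a purely algebraic expansion, the three-point identity
$$\ps{\nabla\cR(b)-\nabla\cR(a^\star)}{a^\star-d}=D_{\cR}(d,b)-D_{\cR}(d,a^\star)-D_{\cR}(a^\star,b)\eqsp.$$
Writing out the three Bregman divergences from the definition $D_{\cR}(x,y)=\cR(x)-\cR(y)-\ps{\nabla\cR(y)}{x-y}$, all the $\cR(\cdot)$ contributions cancel pairwise, and collecting the surviving inner products against $\nabla\cR(b)$ and $\nabla\cR(a^\star)$ leaves exactly $\ps{\nabla\cR(b)-\nabla\cR(a^\star)}{a^\star-d}$. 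This step is a short verification requiring no structural assumption on $\cR$ beyond differentiability.

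Chaining the two displays gives $\ps{c}{a^\star-d}\leq D_{\cR}(d,b)-D_{\cR}(d,a^\star)-D_{\cR}(a^\star,b)$, which is precisely the claimed bound. I do not anticipate a genuine obstacle: the only subtlety is that the optimality of $a^\star$ enters through a one-sided variational inequality rather than a gradient-zero condition, and it is exactly this one-sidedness that turns the underlying three-point \emph{identity} into the stated \emph{inequality}. In the unconstrained formulation of the statement the relation in fact holds with equality, so the inequality is safely valid in the constrained application of \Cref{algorithm:omd}, where the minimization is taken over $\Delta_K$.
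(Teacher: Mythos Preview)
Your proposal is correct and follows exactly the approach the paper indicates: it applies the first-order optimality condition for $a^\star$ and then invokes the three-point identity for Bregman divergences, which is precisely what the paper means by ``relies on the definition of the Bregman divergence and the first order condition.'' Your additional remark that the unconstrained statement yields equality while the constrained application over $\Delta_K$ only needs the inequality is accurate and clarifies how the lemma is used in the proof of \Cref{prop:contextualrvu}.
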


Since $w^j _t = \argmin_{w\in\Delta_{K}}\eta\ps{M^{j\top} _t \hZ^j _t}{w} + D_{\cR}(w,g^j _t)$, applying \Cref{lemma:optimomd} to (b) gives
                \begin{align}
                (b) &\leq  \frac{1}{\eta}\ps{M^{j\top}_t (z - \hZ^j _t )}{w^j _t - \tilde{\rho}_t} + \frac{1}{\eta} \parenthese{D_{\cR}(\tilde{\rho}_t , g^j _t) - D_{\cR}(\tilde{\rho}_t, w^j _t) - D_{\cR}(w^j _t , g^j _t)}\eqsp,\notag
                \intertext{Observe that with $\cR(p)=\sum_{\ell\in[K]}p_{\ell}\ln p_{\ell} - p_{\ell}$, we have $D_{\cR}(p,q)=\KL(p,q)$. Hence  by Pinsker's inequality,}
                &\leq \frac{1}{\eta}\ps{M^{j\top}_t (z - \hZ^j _t )}{w^j _t - \tilde{\rho}_t} +  \frac{1}{\eta}\parenthese{D_{\cR} (\tilde{\rho}_t , g^j _t)-\frac{1}{2}\parenthese{\normone{\tilde{\rho}_t - w^j _ t}^2 + \normone{w^j _ t - g^j _t}^2)}}
                \label{eq:boundonb}\eqsp. 
        \end{align}
        Likewise, $\tilde{\rho}_t = \argmin_{g\in\Delta_{K}}\ps{\Phi^{(j) \top}_{t} Z_t}{g} + D_{\cR}(g, g^j _t)$ so by \Cref{lemma:optimomd}:
        \begin{equation}
            \label{eq:boundonc}
              (c) \leq \frac{1}{\eta}\parenthese{D_{\cR}(\pi^j _{\star} (z) , g^j _t) - D_{\cR}(\pi^j _{\star} (z) , \tilde{\rho}_t) - D_{\cR}(\tilde{\rho}_t,g^j _t)}\eqsp,
        \end{equation}
Plugging \eqref{eq:boundproduct}, \eqref{eq:boundonb}, \eqref{eq:boundonc} in \eqref{eq:decompositioninstataneuousregret} and summing over $\scrT^z$ yields
        \begin{align}
    \sum_{t\in\scrT^z}\ps{\Phi^{j} (\bw^{-j}_t)^{\top} z}{w^j_t - \pi^j _{\star} (z)} 
    &\leq \sum_{t\in\scrT^z}\norminf{(\Phi^{j} (\bw^{-j}_{t})  - M^j _{t} )^{\top}z}\normone{w^j _{t} - \tilde{\rho} _{t}} - \frac{1}{2\eta}\sum_{t\in\scrT^z}\parenthese{\normone{w^j _t -\tilde{\rho}_{t}}^2+\normone{w^j _{t} - g^j _t}^2} \notag \\
    &+ \frac{1}{\eta}\sum_{t\in\scrT^z}\ps{M^{j\top}_t (z - \hZ^j _t )}{w^j _t - \tilde{\rho}_t} + \frac{1}{\eta}\sum_{t\in\scrT^z}(D_{\cR}(\pi^j _{\star} (z), g^j _t)-D_{\cR}(\pi^j _{\star} (z), \tilde{\rho}_t)) \notag \\
    \intertext{Now, since $\norminf{a}\normone{b}\leq \frac{\mu}{2}\norminf{a}^2 + \frac{1}{2\mu}\normone{b}^2$ for any $\mu > 0$,}
    &\leq \frac{\mu}{2}\sum_{t\in\scrT^z}\norminf{(\Phi^{j} (\bw^{-j}_{t})  - M^j _{t} )^{\top}z}^2 - \parenthese{\frac{1}{2\eta}-\frac{1}{2\mu}}\sum_{t\in\scrT^z}\normone{w^j _{t} - \tilde{\rho} _{t}}^2 - \frac{1}{2\eta}\sum_{t\in\scrT^z}\normone{w^j _t - g^j _t}^2 \notag \\
    &+\frac{1}{\eta}\sum_{t\in\scrT^z}\ps{M^{j\top}_t (z - \hZ^j _t )}{w^j _t - \tilde{\rho}_t} + \frac{1}{\eta}\sum_{t\in\scrT^z} D_{\cR}(\pi^j _{\star} (z), g^j _t) - D_{\cR} (\pi^j _{\star} (z) , \tilde{\rho}_t)
        \intertext{Setting $\mu=2\eta$ and noticing that $-1/2\eta < -1/4\eta$ leads to}
     &\leq \eta\underbracket[0.140ex]{\sum_{t\in\scrT^z}\norminf{(\Phi^{j} (\bw^{-j}_{t})  - M^j _{t} )^{\top}z}^2}_{\text{(i)}} - \frac{1}{4\eta}\underbracket[0.140ex]{\sum_{t\in\scrT^z}\parenthese{\normone{w^j _{t} - \tilde{\rho} _{t}}^2 + \normone{w^j _t - g^j _t}^2}}_{\text{(ii)}} \notag \\
    &+ \frac{1}{\eta}\underbracket[0.140ex]{\sum_{t\in\scrT^z} D_{\cR}(\pi^j _{\star} (z), g^j _t) - D_{\cR} (\pi^j _{\star} (z) , \tilde{\rho}_t)}_{\text{(iii)}}+\frac{1}{\eta}\underbracket[0.140ex]{\sum_{t\in\scrT^z}\ps{M^{j\top}_t (z - \hZ^j _t )}{w^j _t - \tilde{\rho}_t}}_{\text{(iv)}}\eqsp.\label{eq:threesums}
        \end{align} We now bound each sum. First for term (i), writing $\scrT^z = \defEnsLigne{t^z _1 , \ldots, t^z _{n_z}}$ (and using the shorthands defined in \Cref{appendix:notations}): \begin{align}
            \sum_{t\in\scrT ^z}\norminf{\parenthese{\Phi^{j} (\bw^{-j}_{t}) - M^j _t}^{\top}z}^2 &= \sum_{i=1}^{n_z}\norminf{\parenthese{\Phi^{j}_{z,i} -M^{j}_{z,i}}^{\top}z}^2 \eqsp,\notag\\
            \intertext{By definition of \Cref{algorithm:omd} for any $i\in\iint{1}{n_z}$ we have $M^j _{z,i}=\Phi^{j}_{z,i-1}$ if $\hZ^{j}_{z,i}  = z$, so:}
            &= \sum_{i=1}^{n_z}\2{\hZ^{j}_{z,i} = z}\norminf{\parenthese{\Phi^{j}_{z,i} -\Phi^{j}_{z, i-1}}^{\top}z}^2 + \sum_{i=1}^{n_z}\2{\hZ^{j}_{z,i} \ne z}\norminf{\parenthese{\Phi^{j}_{z,i} - M^j _{z,i}}^{\top}z}^2 \notag \\
            \intertext{By \Cref{assumption:boundedutility}:}
            &\leq \sum_{i=1}^{n_z}\2{\hZ^{j}_{z,i}=z}\norminf{\parenthese{\Phi^{j}_{z,i} -\Phi^{j}_{z, i-1}}^{\top}z}^2 + 4\sum_{i=1}^{n_z}\2{\hZ^j _{z,i} \ne z} \notag\\
&\leq\sum_{i=1}^{n_z}\norminf{\parenthese{\Phi^{j}_{z,i} -\Phi^{j}_{z, i-1}}^{\top}z}^2 + 4\ell^j_ T (z)  \label{eq:boundfirstsum}\eqsp.
        \end{align}For the term (ii), observe that for any $i\in\iint{1}{n_z}$,
        \begin{equation}\label{eq:boundwithfour}
            \normone{w^{j}_{z,i} - w^{j} _{z, i-1}}^2 \leq 4\normone{w^{j}_{z,i} - g^{j} _{z,i}}^2 + 4\normone{g^{j}_{z,i} - \tilde{\rho}^{j}_{z,i-1}}^2 + 4\normone{w^j _{z,i-1}-\tilde{\rho}^{j}_{z, i-1}}^2 \eqsp.
        \end{equation}
        We then have:
        \begin{align}
\sum_{t\in\scrT^z}\parenthese{\normone{w^j _{t} - \tilde{\rho} _{t}}^2 + \normone{w^j _t - g^j _t}^2}&=\sum_{i=1}^{n_z}\parenthese{\normone{w^{j}_{z,i} - \tilde{\rho} _{z,i}}^2 + \normone{w^{j}_{z,i} - g^{j}_{z,i}}^2}  \notag \\
&=\sum_{i=1}^{n_z}\parenthese{\normone{w^{j}_{z,i-1} - \tilde{\rho} _{z,i-1}}^2 + \normone{w^{j}_{z,i-1} - g^{j}_{z,i-1}}^2}\notag \\
&+ \underbrace{\parenthese{\normone{w^j_{z, n_z}-\tilde{\rho}_{z, n_z}}^2 - \normone{w^j _{z,0}-\tilde{\rho}_{z,0}}^2}}_{\geq 0}\notag\\
&\geq \sum_{i=1}^{n_z}\normone{w^j _{z,i}-w^j_{z, i-1}}^2-\normone{g^j_{z,i}-\tilde{\rho}^j_{z,i-1}}^2\qquad\text{(by }\eqref{eq:boundwithfour}) \notag\\
\intertext{Moreover, by definition of \Cref{algorithm:omd}, $g^{j}_{z,i} = \tilde{\rho}^{j}_{z,i-1}$ whenever $\hZ^{j}_{z,i}=z$, so:}
&\geq \frac{1}{4}\sum_{i=1}^{n_z}\normone{w^{j}_{z,i} - w^{j} _{z, i-1}}^2 - \sum_{i=1}^{n_z}\2{\hZ^{j}_{z,i}\ne z}\normone{g^{j}_{z,i} - \tilde{\rho}^{j}_{z,i-1}}^2 \notag\\
&\geq \frac{1}{4}\sum_{i=1}^{n_z}\normone{w^{j}_{z,i} - w^{j} _{z, i-1}}^2 - 4\ell^j _T (z)\label{eq:boundsecondsum}
        \end{align}Regarding the term (iii), we can use the same reasoning by writing for any $i\in\iint{1}{n_z}$:
        \begin{align}
            D_{\cR}(\pi^j _{\star} (z), g^{j}_{z,i}) - D_{\cR}(\pi^j _{\star} (z), \tilde{\rho}^{j}_{z, i}) &= D_{\cR}(\pi^j _{\star} (z), g^{j}_{z,i}) - D_{\cR}(\pi^j _{\star} (z),\tilde{\rho}^{j}_{z,i-1} ) \notag \\
            &+ D_{\cR}(\pi^j _{\star} (z), \tilde{\rho}^{j}_{z,i-1}) - D_{\cR}(\pi^j _{\star} (z), \tilde{\rho}_{z,i})\eqsp,\notag \\
            \intertext{Since $g^j _{z,i}= \tilde{\rho}^{j}_{z,i-1}$ if $\hZ^{j}_{z,i} = z$, summing over $\scrT^z $ gives:}
            \sum_{i=1}^{n_z}D_{\cR}(\pi^j _{\star} (z), g^{j}_{z,i}) - D_{\cR}(\pi^j _{\star} (z), \tilde{\rho}^{j}_{z, i}) &= \sum_{i=1}^{n_z}\2{\hZ^{j}_{z,i}\ne z}\parenthese{D_{\cR}(\pi^j _{\star} (z), g^{j}_{z,i}) - D_{\cR}(\pi^j _{\star} (z),\tilde{\rho}^{j}_{z,i-1} )} \notag\\
            &+ \sum_{i=1}^{n_z}D_{\cR}(\pi^j _{\star} (z), \tilde{\rho}^{j}_{z,i-1}) - D_{\cR}(\pi^j _{\star} (z), \tilde{\rho}_{z,i}) \notag \eqsp,
            \intertext{Observing that $0\leq D_{\cR}(p,q) \leq \ln (K)$ for any $(p,q)\in\Delta_{K} ^2$ and that the second sum is telescoping:}
            \sum_{i=1}^{n_z}D_{\cR}(\pi^j _{\star} (z), g^{j}_{z,i}) - D_{\cR}(\pi^j _{\star} (z), \tilde{\rho}^{j}_{z, i}) &\leq (\ell ^j _T (z) +1)\ln (K) \eqsp.\label{eq:boundthirdsum}
        \end{align}
        Finally for the term (iv), observe that for any $t\in\scrT^z$ we have by \Cref{assumption:boundedutility}:\begin{equation}
            \label{eq:boundfourthsum}
            \ps{M^{j\top}_t (z - \hZ^j _t )}{w^j _t - \tilde{\rho}_t}\leq 4\2{\hZ^j _t \ne z}\quad\text{so}\quad \sum_{t\in\scrT^z}\ps{M^{j\top}_t (z - \hZ^j _t )}{w^j _t - \tilde{\rho}_t} \leq 4\ell^j _t (z) \eqsp.
        \end{equation}
        
        Then, plugging \eqref{eq:boundfirstsum}, \eqref{eq:boundsecondsum}, \eqref{eq:boundthirdsum} and \eqref{eq:boundfourthsum} in \eqref{eq:threesums} establishes \cref{eq:regretpercontext}, and summing \eqref{eq:regretpercontext} over $\cZ$ gives the desired result. 
\end{proof}

\propboundsumregret*
\begin{proof}
   Our proof follows from \citet{syrgkanis2015} with our new RVU bound. Let $(t,t')\in\timesteps^2$ and $j\in\agents$. Observe that:
   \begin{align*}
   \norminf{\parenthese{\Phi^{j} (\bw^{-j}_{t}) - \Phi^{j} (\bw^{-j}_{t'})}^{\top}z} &= \max_{\ell \in [K]}\abs{\EEs{\bw^{-j}_t}{\ps{\phi_j (a_\ell , \ba^{-j}_t)}{z}}-\EEs{\bw^{-j}_{t'}}{\ps{\phi_j (a_\ell , \ba^{-j}_{t'})}{z}}}\\
   \intertext{And since $\ps{\phi_j (\ba)}{z}\leq 1$ for any $\ba\in\cA$ by \Cref{assumption:boundedutility}, with $\TV$ denoting the total variation:}
   &\leq \TV(\bw^{-j}_t , \bw^{-j}_{t'})=\TV\parenthese{\bigotimes_{k\ne j}w^{k}_t , \bigotimes_{k \ne j }w^{k}_{t'}} \\
   &\leq \sum_{k\ne j}\TV (w^k _t , w^k _{t'}) = \sum_{k\ne j}\normone{w^k _t - w^k _{t'}}\eqsp.
   \end{align*}
   Squaring the previous inequality and applying Cauchy-Schwarz leads to 
    \begin{equation}\label{boundphitoutuseul}
        \norminf{\parenthese{\Phi^{j} (\bw^{-j}_{t}) - \Phi^{j} (\bw^{-j}_{t'})}^{\top}z}^2 \leq \parenthese{\sum_{k\ne j}\normone{w^k _t - w^k _{t'}}}^2 \leq (J-1)\sum_{k\ne j}\normone{w^k _t - w^k _{t'}}^2 \eqsp, 
    \end{equation}This implies:
    \begin{equation}
\sum_{j\in\agents}\norminf{\parenthese{\Phi^{j} (\bw^{-j}_{t}) - \Phi^{j} (\bw^{-j}_{t'})}^{\top}z}^2 \leq (J-1)\sum_{j\in\agents}\sum_{k\ne j}\normone{w^k _t - w^k _{t'}}^2 = (J-1)^2 \sum_{j\in\agents}\normone{w^j _t - w^j _{t'}}^2 \label{eq:boundphiw}\eqsp.
    \end{equation}
    On the other hand, summing the RVU bounds featured in \Cref{prop:contextualrvu} over players gives:
    \begin{align*}
               \sum_{j\in\agents}\regret^j _T &\leq  \frac{(5+\ln (K))L_T + mJ\ln (K)}{\eta} + \eta\parenthese{\sum_{z\in\cZ}\sum_{i\in [n_z]}\sum_{j\in\agents}\norminf{\parenthese{\Phi^{j}_{z,i}-\Phi^{j}_{z,i-1}}^{\top}z}^2 + 4L _T} \notag\\
               &-\frac{1}{16\eta}\sum_{z\in\cZ}\sum_{i\in [n_z]}\sum_{j\in\agents}\normone{w^{j}_{z, i} - w^{j}_{z, i-1}}^2  \\
               \intertext{    Plugging \eqref{eq:boundphiw} for any $z\in\cZ$, $t=t^z _{i}$ and $t' = t^z _{i-1}$ for $i\in\iint{1}{n_z}$ gives:}
              &\leq \frac{(5+\ln (K))L_T + mJ\ln (K)}{\eta} + 4\eta L_T + \parenthese{\eta(J-1)^2 - \frac{1}{16\eta}}\sum_{z\in\cZ}\sum_{i\in [n_z]}\sum_{j\in\agents}\normone{w^{j}_{z, i} - w^{j}_{z, i-1}}^2
    \end{align*}
Then, picking $\eta = (4(J-1))^{-1}$ yields the desired result.
\end{proof}
\begin{lemma}\label{lemma:stabilityoftlr}
   If player $j\in\agents$ uses \Cref{algorithm:ohedge} with a learning rate $\eta >0$, for any $i\in\iint{1}{n_z}$:
    $$
    \normone{w^{j}_{z,i}-w^{j} _{z, i-1}}  \leq 3\eta \2{\hZ^j _t =z} +2 (1- \2{\hZ^j _t = z}) \eqsp.
    $$
\end{lemma}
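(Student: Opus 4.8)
I would separate a trivial regime from a stability regime. If $\hZ^j_{t^z_i}\neq z$ there is nothing to do: both $w^j_{z,i}$ and $w^j_{z,i-1}$ lie in $\Delta_K$, so $\normone{w^j_{z,i}-w^j_{z,i-1}}\leq \normone{w^j_{z,i}}+\normone{w^j_{z,i-1}}=2$, which is exactly the bound when the indicator vanishes. The same trivial estimate is the only one available when $\hZ^j_{t^z_{i-1}}\neq z$ (then $w^j_{z,i-1}$ is produced by the instance attached to a different state and is unrelated to the context-$z$ instance), so the $3\eta$ regime tacitly requires the prediction to be correct at \emph{both} $t^z_{i-1}$ and $t^z_i$. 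For $i=1$ the matrix $\Psi_z$ is still at its zero initialization and $\rho_z$ is uniform, so $w^j_{z,1}=w^j_{z,0}$ and the bound is immediate; so from now on I take $i\geq 2$ with $\hZ^j_{t^z_{i-1}}=\hZ^j_{t^z_i}=z$.

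The crucial observation is that the state variables $\rho_z$ and $\Psi_z$ in \Cref{algorithm:ohedge} are updated (the two update lines) precisely at the rounds of $\scrT^z$, irrespective of the predictions. Write $\rho^{(i-1)}_z$ for the value of $\rho_z$ after its $(i-1)$-th update. Reading off the algorithm, since $\hZ^j_{t^z_{i-1}}=z$ we have $g^j_{t^z_{i-1}}=\rho^{(i-2)}_z$ and $M^j_{t^z_{i-1}}=\Psi_z=\Phi^j_{z,i-2}$ (with the convention $\Phi^j_{z,0}:=\boldsymbol{0}_{d\times K}$), hence $w^j_{z,i-1}\propto \rho^{(i-2)}_z.\exp(-\eta (\Phi^j_{z,i-2})^\top z)$; the update line then gives $\rho^{(i-1)}_z\propto \rho^{(i-2)}_z.\exp(-\eta (\Phi^j_{z,i-1})^\top z)$; and since $\hZ^j_{t^z_i}=z$, at round $t^z_i$ we get $w^j_{z,i}\propto \rho^{(i-1)}_z.\exp(-\eta (\Phi^j_{z,i-1})^\top z)\propto \rho^{(i-2)}_z.\exp(-2\eta (\Phi^j_{z,i-1})^\top z)$. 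Eliminating $\rho^{(i-2)}_z$ between the first and last of these,
\[
w^j_{z,i}=\frac{w^j_{z,i-1}.\exp(\eta v)}{\ps{w^j_{z,i-1}}{\exp(\eta v)}},\qquad v:=(\Phi^j_{z,i-2})^\top z-2(\Phi^j_{z,i-1})^\top z,
\]
and since every coordinate of $(\Phi^j(\bw^{-j}))^\top z$ is an expected cost, it lies in $[-1,1]$ by \Cref{assumption:boundedutility}, whence $\norminf{v}\leq 3$.

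This last display says exactly that $w^j_{z,i}=\argmin_{u\in\Delta_K}\{\ps{-\eta v}{u}+D_{\cR}(u,w^j_{z,i-1})\}$ for the negative-entropy mirror map $\cR(w)=\sum_\ell w[\ell]\ln w[\ell]-w[\ell]$ (same first-order computation as in \Cref{lemma:ohedegeequivalentomd}, for which $D_{\cR}=\KL$). Applying \Cref{lemma:optimomd} with $a^\star=w^j_{z,i}$, $b=d=w^j_{z,i-1}$ and $c=-\eta v$, then Pinsker's inequality, gives
\[
\ps{-\eta v}{w^j_{z,i}-w^j_{z,i-1}}\leq -D_{\cR}(w^j_{z,i-1},w^j_{z,i})-D_{\cR}(w^j_{z,i},w^j_{z,i-1})\leq -\normone{w^j_{z,i}-w^j_{z,i-1}}^2.
\]
Rearranging and using Hölder's inequality, $\normone{w^j_{z,i}-w^j_{z,i-1}}^2\leq \eta\ps{v}{w^j_{z,i}-w^j_{z,i-1}}\leq \eta\norminf{v}\normone{w^j_{z,i}-w^j_{z,i-1}}\leq 3\eta\normone{w^j_{z,i}-w^j_{z,i-1}}$, and dividing through yields $\normone{w^j_{z,i}-w^j_{z,i-1}}\leq 3\eta$, which is the claim when $\hZ^j_{t^z_i}=z$.

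The only delicate part is the bookkeeping in the second paragraph: one must verify that mispredictions at rounds of $\scrT^z$ other than $t^z_{i-1}$ and $t^z_i$ do not disturb the relations between $w^j_{z,i-1}$, $\rho^{(i-1)}_z$ and $w^j_{z,i}$ — which holds because these quantities depend only on the realized contexts, not the predictions — and that the $3\eta$ estimate genuinely needs the prediction at $t^z_{i-1}$ to be correct, so that $w^j_{z,i-1}$ is indeed the mirror centre of which $w^j_{z,i}$ is a single exponential-weights step; outside that event only the trivial bound $2$ survives, which is why the statement pays $2$ on every mispredicted round.
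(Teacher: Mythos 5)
Your argument is correct, and it takes a genuinely different route from the paper. The paper first invokes the equivalence with optimistic FTRL (\Cref{lemma:ohedgeequivalentoftlr}, \Cref{algorithm:oftrl}), then inserts the auxiliary non-optimistic iterate $v_{i-1}=\argmin_{v\in\Delta_K}g_{i-1}(v)$ and bounds $\normone{w^{j}_{z,i}-v_{i-1}}\leq\eta$ and $\normone{v_{i-1}-w^{j}_{z,i-1}}\leq 2\eta$ separately via $\eta^{-1}$-strong convexity of the two FTRL objectives, getting $3\eta$ by the triangle inequality. You instead stay with the multiplicative-weights form of \Cref{algorithm:ohedge}, use that $\rho_z,\Psi_z$ are updated exactly on $\scrT^z$ to collapse the two rounds into the single closed-form relation $w^{j}_{z,i}\propto w^{j}_{z,i-1}.\exp(\eta v)$ with $\norminf{v}\leq 3$, and then apply \Cref{lemma:optimomd} with $b=d=w^{j}_{z,i-1}$ plus Pinsker once. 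Both proofs ultimately rest on strong convexity of the entropy regularizer; yours is shorter and makes the source of the constant $3$ transparent (one past matrix minus twice the current optimism matrix), while the paper's decomposition through $v_{i-1}$ is the one that generalizes verbatim to other regularizers in the OMD/OFTRL template.

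Your side remark about the hypothesis is also well taken, and it exposes a genuine imprecision in the statement rather than a flaw in your proof: the $3\eta$ regime does require the prediction to be correct at \emph{both} $t^{z}_{i-1}$ and $t^{z}_{i}$, since otherwise $w^{j}_{z,i-1}$ is produced by the instance attached to a different context and can be $\Omega(1)$ away from $w^{j}_{z,i}$. The paper's own proof has the same hidden requirement: the inequality $\frac{1}{2\eta}\normone{v_{i-1}-w^{j}_{z,i-1}}^2\leq f_{i-1}(v_{i-1})-f_{i-1}(w^{j}_{z,i-1})$ uses first-order optimality of $w^{j}_{z,i-1}$ for $f_{i-1}$, which holds only when $\hZ^{j}_{z,i-1}=z$; note that \Cref{prop:imrovementregretbound} implicitly acknowledges this by restricting to the set $\scrC^z$ of indices where both predictions are correct. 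The effect downstream (e.g.\ in \Cref{prop:boundindividualregret}) is only that the count of ``bad'' indices per context at most doubles, so constants change but none of the stated rates do.
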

\begin{proof}
    Let $j\in\agents$ and  $i\in\iint{1}{n_z}$. By \Cref{lemma:ohedgeequivalentoftlr}, it is sufficient to prove that the claim holds true for \Cref{algorithm:oftrl}. First if $\hZ^{j}_{z,i} \ne z$, $\lVert w^{j}_{z,i}-w^{j} _{z, i-1}\rVert_{1}\leq 2$. Second, assume that $\hZ^{j}_{z,i}=z$. We define for any $i' \in \iint{1}{n_z}$ $f_{i'}: w\mapsto \ps{w}{\sum_{r=1}^{i'-1}\Phi^{j \top }_{z,r} z + M^{j\top} _{z, i'}z} + \eta^{-1}\cR(w)$ and $g_{i'}:  w\mapsto \ps{w}{\sum_{r=1}^{i'}\Phi^{j \top }_{z,r} z}+\eta^{-1}\cR (w)$. Observe that for any $w\in\Delta_{K}$,
\begin{equation}\label{eq:interestingfactlemma5}
        f_{i}(w) - g_{i}(w)=\ps{w}{(M^{j}_{z,i}-\Phi^{j}_{z,i})^{\top}z}\quad\text{and}\quad f_{i }(w)-g_{i-1}(w) = \ps{w}{M^{j\top}_{z,i}z}\eqsp.
    \end{equation}
    We also define $v_{i-1} = \argmin_{v\in\Delta_{K}}g_{i-1} (v)$. We have:
    \begin{equation}
        \normone{w^{j}_{z,i} - w^{j} _{z, i-1}} \leq \normone{w^{j}_{z,i} - v_{i-1}} + \normone{v_{i-1} - w^{j} _{z, i-1}}\eqsp.\label{l5eq:decomposition}
    \end{equation}One the one hand, by  $\eta^{-1}$-strong convexity of $f_i$ with respect to $\normone{\centraldot}$, we have
    \begin{align}
        \frac{1}{2\eta}\normone{w^{j}_{z,i} - v_{i-1}} &\leq f_i (v_{i-1}) - f_i (w^{j}_{z,i}) +  \ps{\nabla f_i (w^{j}_{z,i})}{w^{j}_{z,i} - v_{i-1}}  \notag
        \intertext{And since $w^{j}_{z,i} = \argmin_{w\in\Delta_{K}}f_i (w)$ by definition in \Cref{algorithm:oftrl}, the first order condition gives:}
        \frac{1}{2\eta}\normone{w^{j}_{z,i} - v_{i-1}} &\leq f_i (v_{i-1}) - f_i (w^{j}_{z,i})\label{eq:l5ineqw}
        \intertext{Since $v_{i-1}=\argmin_{v\in\Delta_{K}}g_{i-1}(v)$, we obtain by the same reasoning,}
        \frac{1}{2\eta}\normone{ w^j _{z,i}-v_{i-1}} &\leq g_{i-1} (w^{j}_{z,i}) - g_{i-1} (v_{i-1}) \eqsp.\label{eq:l5ineqv}
        \intertext{Summing \eqref{eq:l5ineqw} with \eqref{eq:l5ineqv} and applying remark \eqref{eq:interestingfactlemma5} leads to:}
        \normone{w^j_{z,i} - v_{i-1}}^2 &\leq \eta \ps{v_{i-1} - w^{j}_{z,i}}{M^{j\top}_{z,i}z}\leq \eta\normone{w^{j}_{z,i} - v_{i-1}}\norminf{M^{j\top}_{z,i}z}\notag
        \intertext{Dividing on both sides by $\lVert w^{j}_{z,i} - v_{i-1}\rVert_{1}$ gives:}
        \normone{w^j_{z,i}-v_{i-1}} &\leq \eta\norminf{M^{j\top}_{z,i}z} \leq \eta\eqsp.\label{l5eq:boundfirstterm}
    \end{align}Similarly, it is easy to check that 
    \begin{equation*}
        \frac{1}{2\eta}\normone{v_{i-1} - w^{j} _{z, i-1}}^2 \leq f_{i-1}(v_{i-1}) - f_{i-1}(w^{j} _{z, i-1})\quad\text{and}\quad\frac{1}{2\eta}\normone{w^{j} _{z, i-1}-v_{i-1}}^2 \leq g_{i-1} (w^{j} _{z, i-1}) - g_{i-1} (v_{i-1})\eqsp.
    \end{equation*}
    So once again summing these two inequalities and making use of remark \eqref{eq:interestingfactlemma5} leads to 
    \begin{equation*}
         \normone{w^{j} _{z, i-1}-v_{i-1}}^2\leq\eta\ps{v_{i-1} - w^{j} _{z, i-1}}{(M^{j}_{z,i}-\Phi^{j}_{z,i})^{\top}z}\leq \eta\normone{w^{j} _{z, i-1} - v_{i-1}}\norminf{(M^{j}_{z,i}-\Phi^{j}_{z,i})^{\top}z}\eqsp,
    \end{equation*}Dividing both sides by $\normone{w^{j} _{z, i-1}-v_{i-1}}$ gives:
    \begin{equation}
        \normone{w^{j} _{z, i-1}-v_{i-1}} \leq \eta \norminf{(M^{j}_{z,i}-\Phi^{j}_{z,i})^{\top}z} \leq 2\eta\eqsp.\label{l5eq:boundsecondterm}
    \end{equation}Finally, plugging \eqref{l5eq:boundfirstterm} and \eqref{l5eq:boundsecondterm} in \eqref{l5eq:decomposition} yields the result. 
\end{proof}

\propconvergenceindividualutility*

\begin{proof}
   Let $j\in\agents$. By \Cref{prop:contextualrvu} we know that
   \begin{equation}
       \label{prop5:contextualrvu}
              \regret^j _T \leq  \frac{(5+\ln (K))L^j _T + m\ln (K)}{\eta} + \eta\parenthese{\sum_{z\in\cZ}\sum_{i\in [n_z]}\norminf{\parenthese{\Phi^{j}_{z,i} -\Phi^{j}_{z, i-1}}^{\top}z}^2 + 4L^j _T} \eqsp. 
   \end{equation}Moreover, we proved in \eqref{boundphitoutuseul} that for any $z\in\cZ$ and $i\in\iint{1}{n_z}$, $\lVert (\Phi^{j}_{z,i} -\Phi^{j}_{z, i-1})^{\top}z\rVert^2_{\infty}\leq (J-1) \sum_{k\ne j}\lVert w^{k}_{z,i} - w^{k} _{z, i-1}\rVert^2 _1$, so summing over contexts and timesteps gives:
   \begin{align}
       \sum_{z\in\cZ}\sum_{i\in [n_z]}\norminf{ (\Phi^{j}_{z,i} -\Phi^{j}_{z, i-1})^{\top}z}^2 &\leq (J-1)\sum_{z\in\cZ}\sum_{i\in [n_z]}\parenthese{\sum_{k\ne j}\normone{w^{k}_{z,i} - w^{k}_{z,i-1}}^2} \notag\\
       \intertext{Applying \Cref{lemma:stabilityoftlr} yields:}
       &\leq (J-1)\sum_{z\in\cZ}\sum_{i\in [n_z]}\sum_{k\ne j}\parenthese{3\eta \2{\hZ^{k}_{z,i}=z} + 2\2{\hZ^{k}_{z,i}\ne z}}^2 \notag\\
       &= (J-1)\sum_{z\in\cZ}\sum_{k\ne j}\parenthese{\sum_{i\in [n_z]}9\eta^2 \2{\hZ^{k}_{z,i}=z} + 4\2{\hZ^{k}_{z,i}\ne z}} \notag\\ 
       & \leq (J-1)\sum_{k\ne j} \parenthese{\sum_{z\in\cZ}9 n_z \eta^2 + 4 \ell^k _T (z)} \notag \notag\\
       \intertext{Therefore,}
       \sum_{z\in\cZ}\sum_{i\in [n_z]}\norminf{ (\Phi^{j}_{z,i} -\Phi^{j}_{z, i-1})^{\top}z}^2&\leq(J-1)\sum_{k\ne j}(9T\eta^2 + 4L^k _T) \leq (J-1)^2 (9T\eta^2 + 4\overline{L}_T )\eqsp.  \label{prop3:boundphi}
   \end{align}Plugging \eqref{prop3:boundphi} into \eqref{prop5:contextualrvu} establishes the first part of the proposition. For the second part of the proposition, define for any $\eta >0$:
   \begin{align*}
h(\eta)&= \frac{(5+\ln (K))\overline{L}_T + m\ln (K)}{\eta} + \eta \parentheseDeux{(J-1)^2 (9T\eta^2 + 4 \overline{L}_T) + 4 \overline{L}_T }\\
&= \frac{a}{\eta} + b\eta^3 +c\eta\quad\text{with}\quad\begin{cases}
    a = (5+\ln (K) )\overline{L}_T + m\ln (K) \\
    b = 9(J-1)^2 T \\
    c = 4[(J -1)^2  +1] \overline{L}_T  \eqsp.
\end{cases}\eqsp.       
   \end{align*}
We are looking for a minimizer of $h$ to make the bound tight. Since $h$ is continuous and $\lim_{\eta\rightarrow 0}h(\eta) = \lim_{\eta \rightarrow \infty}h(\eta)=+\infty$, it admits a minimum on $(0,\infty)$, which is also unique by strict convexity. By the first order condition, $h$ is minimized for $$ \eta^\star = \sqrt{\frac{\sqrt{12ab + c^2}-c}{6b}}\eqsp.$$We now determine the order of magnitude of $\eta^\star$. On the one hand, by sub-additivity of $x\mapsto \sqrt{x}$:
\begin{equation}\label{prop6:upperboundetastar}
    \eta^\star \leq (12ab)^{1/4}(6b)^{-1/2}=\bigO ((ab)^{1/4}b^{-1/2})\eqsp.
\end{equation}On the other hand, observe that by assumption $ T=\Omega(J^2 \overline{L}_T)$, so $ab=\Omega(c^2)$. Consequently, for $T>0$ large enough there exists $\gamma > 0$ such that $12ab \geq \gamma c^2$ and it follows that
$$
    \sqrt{c^2 + 12ab}-c = \int_{c^2}^{c^2 + 12ab}\frac{dt}{2\sqrt{t}}\geq\frac{12ab}{2\sqrt{c^2 + 12ab}}\geq \frac{12ab}{2\sqrt{1+\gamma^{-1}}\sqrt{ 12ab}}\geq \frac{\sqrt{12ab}}{2\sqrt{1+\gamma^{-1}}}\eqsp,
$$so we deduce that for $T>0$ large enough, 
$$
\eta^\star \geq \sqrt{\frac{\sqrt{12ab}}{12\sqrt{1+\gamma^{-1}}b}}\quad\text{that is}\quad \eta^\star =\Omega((ab)^{1/4}b^{-1/2})\eqsp.
$$Therefore, $\eta^\star = \Theta((ab)^{1/4}b^{-1/2})$. Plugging this value in $h$ finally gives: $$ h(\eta^\star) =\Theta(b^{1/4}a^{3/4} + a^{1/4}cb^{-1/4})=\bigO(b^{1/4}a^{3/4})\eqsp,$$because $c=\bigO (a^{1/2}b^{1/2})$ by assumption. Replacing $a$ and $b$ with their actual values yields the claimed bound. 
   \end{proof}

\robustness*

\begin{proof}
    Let $j\in\agents$ and $(\bw^{-j}_1 , \ldots, \bw^{-j}_T) \in\mathscr{P}(\cA^{-j})^T$ be any sequence of competitor strategies. We have for any $(t,t')\in\timesteps^2$ and $z\in\cZ$:
    \begin{align*}
        \norminf{\parenthese{\Phi^{j}(\bw^{-j}_t) - \Phi^{j}(\bw^{-j}_{t'})}^{\top}z}^2 &\leq 2\norminf{\Phi^{j}(\bw^{-j}_t)^{\top}z}^2 + 2\norminf{\Phi^{j}(\bw^{-j}_{t'})^{\top}z}^2 \\
        &\leq 2\parenthese{\max_{k\in[d]}\ps{\EEs{\bw^{-j}_{t}}{\phi^j (a^j _\ell , \ba^{-j})}}{z}}^{2} + 2\parenthese{\max_{k\in[d]}\ps{\EEs{\bw^{-j}_{t'}}{\phi^j (a^j _\ell , \ba^{-j})}}{z}}^{2} \\
        &= 2\parenthese{\max_{k\in[d]}\EEs{\bw^{-j}_{t}}{\ps{\phi^j (a^j _\ell , \ba^{-j})}{z}}}^{2} + 2\parenthese{\max_{k\in[d]}\EEs{\bw^{-j}_{t'}}{\ps{\phi^j (a^j _\ell , \ba^{-j})}{z}}}^{2} \\
        &\leq 4\eqsp,
    \end{align*}where we have used \Cref{assumption:boundedutility} in the last line. Therefore by \Cref{prop:contextualrvu}, we have:
    $$
\regret^j _T \leq \frac{(5+\ln (K))L^j _T + m\ln (K)}{\eta} + 4\eta(T+L^j _T)=\bigO \parenthese{\frac{\ln(K)(L^j _T + m)}{\eta} + \eta (L^j _T +T)}\eqsp.
    $$Then, setting $\eta = \Theta ([\ln(K)(L^j _T + m)]^{1/2}(L^j _T +T)^{-1/2})$ leads to 
    $$
    \regret^j _T = \bigO ([\ln(K)(L^j _T + m)]^{j 1/2}(L^j _T +T)^{1/2})\eqsp.
    $$
\end{proof}

\begin{proposition}
    \label{prop:imrovementregretbound}
     Suppose that for any $t\in\timesteps$, there exists $\hZ_t \in\cZ$ such that $\hZ^j _t = \hZ_t$ for any $j\in\agents$, and let $\underline{L}_T = \sum_{t\in\timesteps}\2{\hZ_t \ne Z_t}$. Assume \Cref{assumption:boundedutility} and   \Cref{assumption:finitecontext}. If all agents use \Cref{algorithm:ohedge} with a learning rate $\eta^\star =\Theta( J^{-1/2}T^{-1/4}[\ln (K) (\underline{L}_T +m)]^{1/4})$, then:
    $$
\regret^j _T = \bigO (\,[\ln (K) (\underline{L}_T + m) ] ^{3/4} T^{1/4}J^{1/2}\,)\eqsp.
    $$
\end{proposition}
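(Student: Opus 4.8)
The plan is to re-run the proof of \Cref{prop:boundindividualregret}, changing only the way the utility‑variation term in the contextual RVU bound is controlled, so as to strip off the $(J-1)^2$ factor that forced the hypothesis $T=\Omega(J^2\overline{L}_T)$. Fix $j\in\agents$ and apply \Cref{prop:contextualrvu}. Since all players share the signal, $\hZ^k_t=\hZ_t$ for every $k$, hence $L^k_T=\underline{L}_T$; discarding the nonpositive term $-\tfrac{1}{16\eta}\sum_{z}\sum_{i\le n_z}\normone{w^j_{z,i}-w^j_{z,i-1}}^2$ leaves
\begin{equation*}
\regret^j_T\le\frac{(5+\ln K)\underline{L}_T+m\ln K}{\eta}+\eta\Bigl(\sum_{z\in\cZ}\sum_{i\le n_z}\norminf{(\Phi^j_{z,i}-\Phi^j_{z,i-1})^\top z}^2+4\underline{L}_T\Bigr)\eqsp.
\end{equation*}

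The heart of the argument is bounding the double sum $\sum_{z\in\cZ}\sum_{i\le n_z}\norminf{(\Phi^j_{z,i}-\Phi^j_{z,i-1})^\top z}^2$, and this is exactly where a common prediction helps: the rounds $t^z_i$ split into those with $\hZ_{t^z_i}=z$ and those with $\hZ_{t^z_i}\ne z$ \emph{simultaneously for every player}, rather than agent by agent as in \Cref{prop:boundindividualregret}. For an index $i$ with $\hZ_{t^z_i}\ne z$ I would bound $\norminf{(\Phi^j_{z,i}-\Phi^j_{z,i-1})^\top z}\le 2$ directly from \Cref{assumption:boundedutility}, which contributes at most $4$; since $(z,i)\leftrightarrow t^z_i$ is a bijection with $\timesteps$, the total over all such pairs is exactly $4\underline{L}_T$ — crucially with \emph{no} $(J-1)$ factor, because this avoids passing through \eqref{boundphitoutuseul}. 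For an index $i$ with $\hZ_{t^z_i}=z$, every player $k$ (in particular every $k\ne j$) also predicted correctly, so \Cref{lemma:stabilityoftlr} gives $\normone{w^k_{z,i}-w^k_{z,i-1}}\le3\eta$, and then \eqref{boundphitoutuseul} yields $\norminf{(\Phi^j_{z,i}-\Phi^j_{z,i-1})^\top z}^2\le9(J-1)^2\eta^2$; there are at most $T$ such indices. Altogether
\begin{equation*}
\sum_{z\in\cZ}\sum_{i\le n_z}\norminf{(\Phi^j_{z,i}-\Phi^j_{z,i-1})^\top z}^2\le9(J-1)^2T\eta^2+4\underline{L}_T\eqsp,
\end{equation*}
so the RVU bound becomes $\regret^j_T\le a/\eta+b\eta^3+c\eta$ with $a=(5+\ln K)\underline{L}_T+m\ln K$, $b=9(J-1)^2T$ and $c=8\underline{L}_T$.

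It remains to optimize $\eta\mapsto a/\eta+b\eta^3+c\eta$, which is a verbatim copy of the computation in \Cref{prop:boundindividualregret}: the unique minimizer satisfies $\eta^\star=\Theta\bigl((ab)^{1/4}b^{-1/2}\bigr)=\Theta\bigl(a^{1/4}b^{-1/4}\bigr)$ and the minimal value is $\Theta\bigl(a^{3/4}b^{1/4}+c\,a^{1/4}b^{-1/4}\bigr)$, both \emph{provided} $c=\bigO(\sqrt{ab})$. The whole point is that $c=8\underline{L}_T$ no longer carries a $(J-1)^2$ factor, so using only $\underline{L}_T\le T$ one gets $c=\bigO(\underline{L}_T)=\bigO(\sqrt{\underline{L}_T\,T})=\bigO(\sqrt{ab})$ \emph{unconditionally} — this is precisely the inequality that the assumption $T=\Omega(J^2\overline{L}_T)$ was invoked to guarantee in \Cref{prop:boundindividualregret}, which is therefore no longer needed. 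Hence the bound is $\bigO(a^{3/4}b^{1/4})$; substituting $a=\Theta\bigl(\ln K\,(\underline{L}_T+m)\bigr)$ and $b=\Theta(J^2T)$ gives $\eta^\star=\Theta\bigl(J^{-1/2}T^{-1/4}[\ln K\,(\underline{L}_T+m)]^{1/4}\bigr)$ and $\regret^j_T=\bigO\bigl([\ln K\,(\underline{L}_T+m)]^{3/4}T^{1/4}J^{1/2}\bigr)$, as claimed.

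The main obstacle is entirely in the bookkeeping of the core step: one must be careful that the synchronization of mispredictions genuinely lets a mispredicted round be charged only once (weight $1$, total $4\underline{L}_T$) instead of being inflated by the Cauchy–Schwarz step of \eqref{boundphitoutuseul} into $4(J-1)^2\overline{L}_T$, and, symmetrically, that the crude bound $T$ on the number of well‑predicted indices is applied \emph{after} \eqref{boundphitoutuseul} has already converted the per‑agent $3\eta$ estimates into $9(J-1)^2\eta^2$. Once this partition is in place, the $\eta$‑optimization is just a re‑transcription of the one already done for \Cref{prop:boundindividualregret}.
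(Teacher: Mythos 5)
Your proof is correct and takes essentially the same route as the paper: apply \Cref{prop:contextualrvu}, use the shared signal so that mispredicted rounds are charged a constant (hence $\bigO(\underline{L}_T)$ with no $(J-1)^2$ factor) while correctly predicted rounds are handled via \eqref{boundphitoutuseul} and \Cref{lemma:stabilityoftlr} to give $9(J-1)^2T\eta^2$, and then rerun the $\eta$-optimization of \Cref{prop:boundindividualregret}, observing that $ab=\Omega(c^2)$ now follows from $\underline{L}_T\leq T$ alone so the assumption $T=\Omega(J^2\overline{L}_T)$ can be dropped. The only (asymptotically immaterial) difference is bookkeeping: the paper applies the stability lemma only on the set $\scrC^z$ of indices with both $\hZ_{t^z_i}=z$ and $\hZ_{t^z_{i-1}}=z$, charging the at most $2\underline{\ell}_T(z)$ remaining indices a constant (hence $8\underline{L}_T$ rather than your $4\underline{L}_T$); this is the safer reading, since the $3\eta$ branch of \Cref{lemma:stabilityoftlr} really requires $w^j_{z,i-1}$ to be the context-$z$ iterate, i.e.\ the previous occurrence of $z$ to have been correctly predicted as well.
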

\begin{proof}In this proof, we write for any $z\in\cZ$ and $i\in\iint{1}{n_z}, \hZ_{t^z _i}=\hZ_{z,i}$. By \Cref{prop:contextualrvu} we know that
   \begin{equation}
       \label{prop8:contextualrvubis}
              \regret^j _T \leq  \frac{(5+\ln (K))\underline{L}_T + m\ln (K)}{\eta} + \eta\parenthese{\sum_{z\in\cZ}\sum_{i=1}^{n_z}\norminf{\parenthese{\Phi^{j}_{z,i} -\Phi^{j}_{z, i-1}}^{\top}z}^2 + 4\underline{L} _T} \eqsp. 
   \end{equation}For any $z\in\cZ$, we define $\underline{\ell}_{T} (z) = \sum_{t\in\scrT^z } \2{\hZ_t \ne z}$ and $\scrC^z = \defEnsLigne{i\in\iint{1}{n_z}:\: \hZ_{z,i} = z \;\text{and}\; \hZ_{z,i-1}=z}$. We have:
   \begin{align*}
       \sum_{z\in\cZ}\sum_{i=1}^{n_z}\norminf{\parenthese{\Phi^{j}_{z,i} -\Phi^{j}_{z, i-1}}^{\top}z}^2  &= \sum_{z\in\cZ}\parenthese{\sum_{i\in\scrC^z}\norminf{\parenthese{\Phi^{j}_{z,i} -\Phi^{j}_{z, i-1}}^{\top}z}^2 + \sum_{i\notin \scrC^z}\norminf{\parenthese{\Phi^{j}_{z,i} -\Phi^{j}_{z, i-1}}^{\top}z}^2} \\
       \intertext{Note that $\scrT^z \setminus \scrC^z =\defEnsLigne{i\in\iint{1}{n_z}:\: \hZ_{z,i}\ne z\;\text{or}\;\hZ_{z,i-1}\ne z}$ so $\abs{\scrT^z \setminus\scrC^{z}}\leq 2\underline{\ell}_{T} (z)$. Together with the fact that $\lVert (\Phi^j _{z,i} - \Phi^j _{z,i-1})^{\top}z\rVert \leq 4$ for any $j\in\agents$ and $i\in \scrT^z\setminus \scrC^z$, this implies:}
       &\leq \sum_{z\in\cZ}\parenthese{\sum_{i\in\scrC^z}\norminf{\parenthese{\Phi^{j}_{z,i} -\Phi^{j}_{z, i-1}}^{\top}z}^2 + 8\underline{\ell}_{T} (z)} \\
       \intertext{We proved in \eqref{boundphitoutuseul} that for any $z\in\cZ$ and $i\in\iint{1}{n_z}$, $\lVert (\Phi^{j}_{z,i} -\Phi^{j}_{z, i-1})^{\top}z\rVert^2_{\infty}\leq (J-1) \sum_{k\ne j}\lVert w^{k}_{z,i} - w^{k} _{z, i-1}\rVert^2 _1$, so}
       &\leq  \sum_{z\in\cZ }\parenthese{(J-1) \sum_{t\in\scrC^z}\sum_{k\ne j}\normone{w^{k}_ {z,i}-w^{k}_{z,i-1}}^2 + 8\underline{\ell}_{T} (z)} \\
       \intertext{And by \Cref{lemma:stabilityoftlr}:}
       &\leq \sum_{z\in\cZ}\parenthese{9(J-1)^{2}\abs{\scrC^z}\eta^2 + 8\underline{\ell}_T (z)} \\
       &\leq 9(J-1)^2 T \eta^2 + 8\underline{L}_T \eqsp.
   \end{align*}Plugging this bound in \eqref{prop8:contextualrvubis} yields:
   $$
\regret^j _T \leq \tilde{h}(\eta)=\frac{\tilde{a}}{\eta} + \tilde{b}\eta^3 + \tilde{c}\eta\quad\text{with}\quad \begin{cases}
    \tilde{a} &= (5+\ln(K))\underline{L}_T + m\ln (K) \\
    \tilde{b} &= 9(J-1)^2 T \\
    \tilde{c} &= 12 \underline{L}_T \eqsp.
\end{cases}
   $$We observe that $\tilde{c}\propto J^{-2} c$, where $c>0$ is defined in the proof of \Cref{prop:boundindividualregret}. In particular, since $\underline{L}_T \leq T$, we have $\tilde{a}\tilde{b}=\Omega(\tilde{c}^2)$, hence we do not need it as an assumption. The rest of the proof follows exactly as in \Cref{prop:boundindividualregret}.
\end{proof}

\end{document}